\documentclass[reqno,11pt]{amsart}
\hoffset -0.8cm \voffset -0.8cm \textheight 215mm \textwidth 140mm
\usepackage{cases}
\usepackage{mathrsfs}
\usepackage[T1]{fontenc}
\usepackage{mathrsfs}
\usepackage{amsmath,latexsym,amssymb,amsfonts,amsbsy, amsthm}
\usepackage[usenames]{color}
\usepackage{xspace,colortbl}
\usepackage{epsfig}
\usepackage{amsmath,amsfonts,amsthm,amssymb,amscd}
\input amssym.def
\input amssym.tex

\newcommand{\Z }{\mathbb{Z} }

\newcommand{\LC}{\left(}
\newcommand{\RC}{\right)}

\newtheorem{theorem}{Theorem}[section]
\newtheorem{remark}[theorem]{Remark}
\newtheorem{lemma}[theorem]{Lemma}

\newtheorem{proposition}[theorem]{Proposition}

\newcommand{\ben}{\begin{eqnarray}}
\newcommand{\een}{\end{eqnarray}}
\newcommand{\beno}{\begin{eqnarray*}}
\newcommand{\eeno}{\end{eqnarray*}}



\def\FD#1{{\rm D}_{#1}}

\newdimen\eqjot \eqjot = 1\jot
\def\openupeq{\openup \the\eqjot}
\def\addtab#1={#1\;&=}
\def\addtabe#1=#2={#1=#2\;&=}

\def\qaeq#1#2{{\def\\{&}\vcenter{\openupeq\halign{$\displaystyle
   ##\hfil$&&\hskip#1pt$\displaystyle##\hfil$\cr #2\cr}}}}
 
\def\qxeq{\qaeq{30}} 
\def\qeq{\qaeq{20}}

\def\ezeq#1#2#3{{\def\\{\cr#1}\vcenter{\openupeq \halign{$\displaystyle
   \hfil##$&$\displaystyle##\hfil$&&\hskip#2pt$\displaystyle##\hfil
        $\cr#1#3\cr}}}}
\def\eaeq{\ezeq\addtab}

\def\eeq{\eaeq{20}}  

\def\caeq#1#2{{\def\\{\cr}\vcenter{\openupeq \halign
    {$\hfil\displaystyle ##\hfil$&&$\hfil\hskip#1pt\displaystyle ##\hfil
         $\cr#2\cr}}}}
\def\ceq{\caeq{20}}

\begin{document}

\title [Liouville correspondences]
{Liouville correspondences between \\multi-component integrable hierarchies}

\author{Jing Kang}
\address{Jing Kang\newline
Center for Nonlinear Studies and School of Mathematics, Northwest University, Xi'an 710069, P.R. China}
\email{jingkang@nwu.edu.cn}
\author{Xiaochuan Liu}
\address{Xiaochuan Liu\newline
School of Mathematics and Statistics, Xi'an Jiaotong University, Xi'an, Shaanxi 710049, P.R. China}
\email{liuxiaochuan@mail.xjtu.edu.cn}
\author{Peter J. Olver}
\address{Peter J. Olver\newline
School of Mathematics, University of Minnesota, Minneapolis, MN 55455, USA}
\email{olver@umn.edu}
\author{Changzheng Qu}
\address{Changzheng Qu\newline
Center for Nonlinear Studies and Department of Mathematics, Ningbo University, Ningbo 315211, P.R. China}
\email{quchangzheng@nbu.edu.cn}

\begin{abstract}
In this paper, we establish Liouville correspondences for the integrable two-component Camassa-Holm hierarchy, the two-component Novikov (Geng-Xue)  hierarchy, and the two-component dual dispersive water wave hierarchy by means of the related Liouville transformations. This extends previous results on the scalar Camassa-Holm and KdV hierarchies, and the Novikov and Sawada-Kotera hierarchies to the multi-component case.
\end{abstract}

\maketitle \numberwithin{equation}{section}


\small {\it Key words and phrases:}\ Liouville transformation; bi-Hamiltonian structure; two-component Camassa-Holm system; two-component Novikov system; two-component dual dispersive water wave system.\\

\small 2000 {\it Mathematics Subject Classification}\/
:\; 37K05, 37K10.

\section{Introduction}

This paper is concerned with Liouville correspondences for integrable hierarchies of multi-component systems possessing nonlinear dispersion, extending the previous work on scalar hierarchies \cite{kloq1, kloq2, len, mck}. Specifically, we shall  establish the Liouville correspondences among several multi-component integrable hierarchies, generated respectively by the two-component Camassa-Holm system, the two-component Novikov system, also known as the Geng-Xue system, as well as the two-component dual dispersive water wave system.

A basic idea for investigating the integrability of a new system is to establish its relationship with a known integrable system through some kind of transformation, including B\"{a}cklund transformations, Miura transformations, gauge transformations, Darboux transformations, hodograph transformations, Liouville transformations, etc. Application of an appropriate transformation enables one to derive solutions and analyze the integrability properties for the system under consideration through adaptation of known solutions and integrable structures. Among these vital features of integrability, the spatial isospectral problem in the Lax-pair formulation \cite{lax} plays a dominant role in the construction of solitons using inverse scattering transform, as well as the long-time behavior of solutions by virtue of the Riemann-Hilbert approach. Usually, the transition from one isospectral problem to another through a change of variables can be identified as a form of {\it Liouville transformation}; see also \cite{mil} and \cite{folv} for this terminology. It is then expected that such a {\it Liouville correspondence} can be used to establish an inherent correspondence between the associated integrability properties, including symmetries, conserved quantities, soliton solutions, Hamiltonian structures, etc.



In recent years, the intense interest in integrable systems of Camassa-Holm type has resulted from their novel properties, including nonlinear dispersion, (usually) supporting non-smooth soliton structures, such as peakons, cuspons, compactons, etc., and their ability to model wave-breaking phenomena. Previous investigations have established a variety of Liouville correspondences between integrable hierarchies of Camassa-Holm type and certain classical integrable hierarchies.  The most well-studied example is the Camassa-Holm (CH) equation
\ben\label{ch}
m_t+2u_xm+um_x=0, \qquad m=u-u_{xx},
\een
that has a quadratic nonlinearity \cite{ch, chh, cl, ff}.  In \cite{len} and \cite{mck}, the Liouville correspondence between the entire CH hierarchy induced by \eqref{ch} and the usual Korteweg--de Vries (KdV) hierarchy was established. Moreover, this provides a correspondence between the Hamiltonian functionals of the two hierarchies \cite{len}. Furthermore,  \cite{len1}, the Liouville transformation also relates their smooth traveling wave solutions.

 The modified Camassa-Holm (mCH) equation \cite{or}
\ben\label{mch}
m_t+\LC (u^2-u_x^2)\,m\RC_x=0, \qquad m=u-u_{xx},
\een
is a prototypical integrable model with cubic nonlinearity, which presents several novel properties, as described, for instance, in \cite{clqz, gloq, kloq1, llq2, loqz, mat}. The Liouville correspondence between the integrable mCH and modified Korteweg-de Vries (mKdV)  hierarchies, including the explicit relationships between their equations and Hamiltonian functionals, was established in \cite{kloq1}. In contrast to the CH-KdV situation, the analysis in \cite{kloq1} is based on the interrelationship between the respective recursion operators and the conservative structure of the mCH  hierarchy. Furthermore, with the respective Liouville correspondences between the CH-KdV hierarchies and the mCH-mKdV hierarchies in hand, a novel transformation mapping the mCH equation \eqref{mch} to the CH equation \eqref{ch} was constructed \cite{kloq1}.

It is worth noting that all of the equations in the CH, mCH, KdV, and mKdV hierarchies are  of bi-Hamiltonian form. Moreover, the two Hamiltonian operators for the CH and mCH integrable hierarchies can be constructed from those of the KdV and mKdV hierarchies, respectively, using the method of {\it tri-Hamiltonian duality}, established in \cite{for, fuc, or}. This approach is based on the observation that most standard integrable equations which possess a bi-Hamiltonian structure, actually admit a compatible trio of Hamiltonian structures through an adapted scaling argument. Recombinations of the members of the compatible Hamiltonian triple will generate different types of bi-Hamiltonian integrable systems, which admit a {\it dual} relationship. The tri-Hamiltonian duality relationships aid us to establish corresponding Liouville correspondences in the CH-KdV and mCH-mKdV cases as shown in \cite{kloq1} and \cite{len}.

For the  integrable hierarchies possessing generalized bi-Hamiltonian structures, i.e., compatible pairs of Dirac structures \cite{dorf}, or which do not admit tri-Hamiltonian duality, establishing the relevant Liouville correspondences is more challenging. Nevertheless, such Liouville correspondences are to be expected whenever Liouville transformations between the associated isospectral problems are provided.
In this case, the Novikov and Degasperis-Procesi (DP) \cite{dhh, dp} integrable hierarchies are two representative examples.
The Novikov integrable equation with cubic nonlinearity \cite{hw2, nov}
\ben\label{nov}
m_t=u^2\,m_x+3uu_xm, \qquad  m=u-u_{xx},
\een
is associated with a third-order isospectral problem. In \cite{hw2}, Hone and Wang employed a Liouville transformation to convert its isospectral problem into that of the Sawada-Kotera (SK)  equation \cite{cdg, sk}
\ben\label{sk}
Q_\tau+Q_{yyyyy}+5Q Q_{yyy}+5Q_yQ_{yy}+5Q^2Q_y=0.
\een
Even though the Novikov hierarchy is bi-Hamiltonian \cite{hw2}, its Hamiltonian operators do not support the tri-Hamiltonian duality construction, especially one that relates to the Hamiltonian operators of the SK equation. Moreover, the SK equation \eqref{sk} exhibits the generalized bi-Hamiltonian formulation with the corresponding  hierarchy generated by the recursion operator, which is the composition of symplectic and implectic operators that fail to satisfy the conditions of non-degeneracy or invertibility \cite{fo}. Therefore, producing a Liouville correspondence for the Novikov integrable hierarchy requires a more delicate analysis. In \cite{kloq2}, using the Liouville transformation and several operator decomposition identities, we were able to establish a Liouville correspondence between the Novikov and SK integrable hierarchies. In a similar manner, using a certain Liouville transformation proposed in \cite{dhh} and \cite{hw1}, a similar correspondence between the DP and Kaup-Kupershmidt \cite{kup}  hierarchies was also established in \cite{kloq2}.

Let us now turn our attention to  multi-component integrable systems of Camassa-Holm type.  One important example is the well-studied two-component CH (2CH) system \cite{clz}
\ben\label{2CH}
\begin{cases}
&\!m_t+2u_xm+u m_x+\rho\rho_x=0, \qquad m=u-u_{xx},\\
&\!\rho_t+(\rho u)_x=0,
\end{cases}
\een
which arises as an integrable shallow water model \cite{ci}. The system \eqref{2CH} is of significant interest, since it exhibits nonlinear interactions between the free surface and the horizontal velocity components, and can model the phenomenon of wave breaking; see, for example, \cite{eghkt, eg, ekl, go, gl, hi1, hnw, mat1, qsy, ss, xqz}. The 2CH system \eqref{2CH} is completely integrable and arises from the compatibility condition of the Lax-pair formulation \cite{ci}
\ben\label{lax-2ch}
\begin{aligned}
&\; \Psi_{xx} +\left(-\frac{1}{4}-\lambda m(t,x)+\lambda^2\rho^2(t,x)\right)\Psi=0, \\
&\; \quad \Psi_t=\left(\frac{1}{2\lambda}-u(t,x)\right)\Psi_x+\frac 12 u_x(t,x)\Psi.
\end{aligned}
\een
A particular Liouville transformation proposed in \cite{clz} will convert \eqref{lax-2ch} into
\ben\label{lax-2ach-1}
\begin{aligned}
\Phi_{yy}+\LC Q(\tau,y)+\lambda P(\tau,y)+\lambda^2\RC\Phi&=0,\\
\Phi_{\tau}-\frac{1}{2 \lambda}\rho(t,x)\,  \Phi_y+\frac{1}{4 \lambda}\rho_y(t,x)\Phi&=0,
\end{aligned}
\een
which is the Lax-pair formulation of the following integrable system
\ben
\label{2ach-1}
\ceq{P_{\tau}(\tau,y)=\rho_y, \qquad Q_{\tau}(\tau,y)=\frac 12 \rho\, P_y(\tau,y)+\rho_yP(\tau,y), \\
\rho_{yyy}+2\rho_yQ(\tau,y)+2(\rho\, Q(\tau,y))_y=0.
}
\end{eqnarray}
In \cite{clz},  taking into account the structure of spectrum in its Lax-pair formulation \eqref{lax-2ach-1}, the system \eqref{2ach-1} is recognized to be the first negative flow of the AKNS hierarchy \cite{akns}.

Moreover, the 2CH system \eqref{2CH} is bi-Hamiltonian --- and so the compatible Hamiltonian operators recursively generate the entire 2CH integrable hierarchy, with \eqref{2CH} forming the second flow in the positive direction. Moreover, the bi-Hamiltonian structure can be derived from that of the Ito system \cite{ito} using the method of tri-Hamiltonian duality \cite{or}. On the other hand, even though the 2CH and Ito integrable systems are in tri-Hamiltonian dual relationship, unlike the CH-KdV and mCH-mKdV cases, the Liouville correspondence between these two hierarchies is unexpected, because the transformation between the corresponding isospectral problems is not evident. Nevertheless, one expects to be able to establish a Liouville correspondence between the 2CH hierarchy and a second integrable hierarchy involving integrable system \eqref{2ach-1} as a particular member in the negative direction.

To achieve this goal, we need to overcome several new difficulties. First of all, the integrable structures, including the recursion operator and Hamiltonian operators, for the   hierarchy that includes system \eqref{2ach-1} as a negative flow is unclear. Chen {\rm et al} \cite{clz} also did not clarify the required integrability information. On the other hand, as in the scalar case, the verification of the Liouville correspondence relies on an analysis of the underlying operators, which in the multi-component case, are of matrix form and hence a more careful calculation of the nonlinear interplay among the various components is required. In the present paper, we elucidate the entire integrable hierarchy, which we call  associated with the system \eqref{2ach-1}, which forms the first negative flow in what we will refer to as the {\it associated two-component Camassa-Holm} (A2CH)  hierarchy.  We further demonstrate its bi-Hamiltonian structure and establish a Liouville correspondence between the 2CH and A2CH hierarchies. Furthermore, we find that the second positive flow of the A2CH integrable hierarchy is the following integrable system
\beno
Q_{\tau}=-\frac{1}{2}P_{yyy}-2\,QP_y-Q_yP,\qquad 
P_{\tau}=2\,Q_y-3PP_y,
\eeno
which belongs to the integrable family studied in \cite{il} and can be recognized as an integrable system of Kaup-Boussinesq type describing the motion of shallow water waves \cite{kau}.

The Novikov equation \eqref{nov} has the following two-component integrable generalization
\ben
\label{GX}
\eeq{m_t+3v u_xm+uv m_x=0,&m=u-u_{xx},\\
n_t+3uv_x n+uv n_x=0,&n=v-v_{xx},}
\een
which was introduced by Geng and Xue \cite{gx}, and so is referred to be the {\it Geng-Xue} (GX) system; see \cite{ls} and references therein. As a prototypical multi-component integrable system with cubic nonlinearity, the GX system \eqref{GX} admits special peakon solutions and has recently attracted much attention \cite{lc, ll-13, ln, ls, ls1}. In \cite{lc}, it was shown that there exists a certain Liouville transformation
converting the Lax-pair of the GX system \eqref{GX} into the Lax-pair of the following integrable system
\ben
\begin{cases}\label{aGX-1}
&Q_\tau=\frac{3}{2}(q_y+p_y)-(q-p) P,\qquad P_\tau=\frac{3}{2}(q-p),\\
&p_{yy}+2p_yP+pP_y+pP^2-pQ+1=0,\\
&q_{yy}-2q_yP-qP_y+qP^2-qQ+1=0,
\end{cases}
\een
where $q=v\,m^{2/3}n^{-1/3}$ and $p=u\,m^{-1/3}n^{2/3}$. The system \eqref{aGX-1} is bi-Hamiltonian, whose structure is derived in \cite{lc}. We shall elucidate the entire {\it associated Geng-Xue} (AGX) integrable hierarchy in which \eqref{aGX-1} is the first negative flow.  We also establish a Liouville correspondence between the integrable GX hierarchy generated by \eqref{GX} and the AGX hierarchy.

Finally, we consider the dual dispersive water wave (dDWW) integrable system
\ben\label{ddww}
\eeq{\rho_t=\LC (\rho+v)\,u\RC_x,&\rho=v-v_x,\\
\gamma_t=\LC \gamma \,u+2v\RC_x,&\gamma=u+u_x,}
\een
which was recently derived, \cite{kloq3}, from the dispersive water wave integrable system introduced by Kuperschimdt \cite{kup85}, using tri-Hamiltonian duality. The dDWW system \eqref{ddww} possesses a bi-Hamiltonian formulation and admits a variety of non-smooth soliton solutions \cite{kloq3}. We find a transformation to establish a Liouville correspondence between the dDWW integrable hierarchy generated by \eqref{ddww} with an associated dDWW integrable hierarchy. We provide the bi-Hamiltonian characterization for such associated dDWW integrable hierarchy and explore the explicit relationship between their flows and Hamiltonian functionals.

This section is concluded by outlining the rest of the paper. In Section 2, we first present a Liouville transformation relating the isospectral problems of the 2CH and A2CH integrable hierarchies. Next, we exploit the Liouville transformation to establish the one-to-one correspondence between the flows of the 2CH and A2CH hierarchies. Furthermore, we establish the relationship between the Hamiltonian functionals appearing in the 2CH and A2CH hierarchies. In Sections 3 and 4, we similarly investigate the Liouville correspondences for the GX and dDWW integrable hierarchies, respectively.

\section{The Liouville correspondence for \break the two-component Camassa-Holm hierarchy}

\subsection{The Liouville transformation for the isospectral problem of the 2CH system}

In this subsection, we shall present the explicit expression of the Liouville transformation for the isospectral problem of the 2CH system
\begin{equation}\label{2ch-s}
\begin{cases}
&m_t+u_x m+(u m)_x+\rho \rho_x=0, \quad m=u-u_{xx},\\
&\rho_t+\left(\rho u\right)_x=0,
\end{cases}
\end{equation}
which follows as the compatibility condition of the Lax-pair formulation
\begin{equation}\label{iso-2ch}
\Psi_{xx} +\left(-\frac{1}{4}-\lambda m+\lambda^2\rho^2\right) \Psi=0, \qquad \Psi_t=\left(\frac{1}{2\lambda}-u\right)\Psi_x+\frac{u_x}{2}\Psi,
\end{equation}
with the spectral parameter $\lambda$. It was proved in \cite{clz} that  the reciprocal transformation
\begin{equation}\label{tran1-2ch}
\mathrm{d} y=\rho\, \mathrm{d}x-\rho u\,\mathrm{d}t, \quad \mathrm{d} \tau=\mathrm{d} t,
\end{equation}
converts the isospectral problem (\ref{iso-2ch}) into
\begin{equation}\label{iso-akns}
\Phi_{yy}+(Q+\lambda P+\lambda^2)\Phi=0, \quad \Phi_{\tau}-\frac{1}{2 \lambda}\rho \,\Phi_y+\frac{1}{4 \lambda}\rho_y\Phi=0,
\end{equation}
with
\begin{equation}\label{Qm1}
\Phi=\sqrt{\rho}\,\Psi, \quad Q=-\frac{1}{4}\rho^{-2}+\frac{3}{4}\rho^{-4}\rho_x^2-\frac{1}{2}\rho^{-3}\rho_{xx},\quad P=-\frac{m}{\rho^2}.
\end{equation}
The linear equations \eqref{iso-akns} have the form of a Lax pair, and the resulting compatibility condition $\Phi_{yy\tau}=\Phi_{\tau yy}$ gives rise to the following integrable system
\begin{equation}\label{akns--1}
P_\tau=\rho_y, \qquad Q_\tau=\frac{1}{2}\rho P_y+\rho_yP, \quad \rho_{yyy}+2\rho_yQ+2(\rho\,Q)_y=0.
\end{equation}
Therefore, the anticipated Liouville correspondence between the 2CH system \eqref{2ch-s} and integrable system \eqref{akns--1} is provided by the reciprocal transformation \eqref{tran1-2ch} and the change of dependent variables \eqref{Qm1}. On the other hand, due to the spectral structure in \eqref{iso-akns}, the system \eqref{akns--1} can be viewed as the first negative flow of some specific integrable hierarchy that obeys the isospectral problem \eqref{iso-akns}, that will be referred to as the {\it associated 2CH} (A2CH)  hierarchy in this paper.

\subsection{The Liouville correspondence between the 2CH and A2CH hierarchies}

Motivated by these results, we are led to generalize the Liouville correspondence between the systems \eqref{2ch-s} and \eqref{akns--1}, to their respective entire integrable hierarchies. More precisely, we propose the following Liouville transformation
\begin{eqnarray}
\begin{aligned}\label{liou-2ch}
\tau&=t, \qquad y=\int^x\!\rho(t, \xi)\,\mathrm{d}\xi, \qquad P(\tau, y)=-m(t, x)\,\rho(t, x)^{-2},\\
Q(\tau, y)&=-\frac{1}{4}\rho(t, x)^{-2}+\frac{3}{4}\rho(t, x)^{-4}\rho_x^2(t, x)-\frac{1}{2}\rho(t, x)^{-3}\rho_{xx}(t, x).
\end{aligned}
\end{eqnarray}

First, the 2CH system \eqref{2ch-s} can be expressed in the bi-Hamiltonian form \cite{or}
\begin{eqnarray}
\begin{aligned}\label{bi-2ch}
\begin{pmatrix}
              m\\
             \rho
              \end{pmatrix}_t=\mathcal{K}\delta \mathcal{H}_1(m,\,\rho)=\mathcal{J}\delta \mathcal{H}_2(m,\,\rho), \quad \delta \mathcal{H}_n(m, \rho)=\left(\frac{\delta \mathcal{H}_n}{\delta m},\;\frac{\delta \mathcal{H}_n}{\delta \rho}\right)^T,\quad n=1,\,2,
\end{aligned}
\end{eqnarray}
with compatible Hamiltonian operators
\begin{eqnarray}
\begin{aligned}\label{haop-2ch}
\mathcal{K}=\begin{pmatrix}
              m \partial_x+\partial_x m& \rho\partial_x\\
              \partial_x \rho& 0\\
              \end{pmatrix}, \qquad
\mathcal{J}=\begin{pmatrix}
         \partial_x-\partial_x^3 & 0\\
             0& \partial_x\\
              \end{pmatrix}.
\end{aligned}
\end{eqnarray}
The associated Hamiltonian functionals are
\begin{equation*}
\qxeq{\mathcal{H}_1(m,\,\rho)=-\frac{1}{2} \int  (u^2+u_x^2+\rho^2)\;\mathrm{d}x , \quad  \mathcal{H}_2(m,\,\rho)=-\frac{1}{2}\int u (u^2+u_x^2+\rho^2)\;\mathrm{d}x.}
\end{equation*}
According to Magri's theorem \cite{mag}, the Hamiltonian pair induces the hierarchy
\begin{equation}\label{hie-2ch}
\begin{pmatrix}
              m\\
             \rho
              \end{pmatrix}_t=\mathbf{K}_n=\mathcal{K}\delta \mathcal{H}_{n-1}(m,\,\rho)=\mathcal{J}\delta \mathcal{H}_n(m,\,\rho), \quad \delta \mathcal{H}_n(m, \rho)=\left(\frac{\delta \mathcal{H}_n}{\delta m},\;\frac{\delta \mathcal{H}_n}{\delta \rho}\right)^T,\quad n\in \mathbb{Z} ,
\end{equation}
of commutative bi-Hamiltonian systems, based on the corresponding Hamiltonian functionals $\mathcal{H}_{n}=\mathcal{H}_{n}(m, \rho)$. The members in the hierarchy \eqref{hie-2ch} are obtained by successively applying  the recursion operator $\mathcal{R}=\mathcal{K}\,\mathcal{J}^{-1}$ to a seed symmetry \cite{olv1, olv2}. The positive flows of \eqref{hie-2ch} begin with the seed system
\begin{eqnarray*}
\begin{aligned}
\begin{pmatrix}
             m\\
           \rho
              \end{pmatrix}_t=\mathbf{K}_1=-\begin{pmatrix}
           m\\
           \rho
              \end{pmatrix}_x,
\end{aligned}
\end{eqnarray*}
and the 2CH system \eqref{bi-2ch} is the second member. Observe that the Hamiltonian operator $\mathcal{K}$ admits a Casimir functional
\begin{equation}\label{2ch-cas}
\mathcal{H}_C(m, \rho)=\int \frac{m}{\rho}\;\mathrm{d}x \qquad \mathrm{with\,\, variational\,\, derivative}\qquad \delta \mathcal{H}_C (m, \rho)= \begin{pmatrix}
             \rho^{-1}\\
           -m\rho^{-2}\end{pmatrix},
\end{equation}
which leads to an associated \emph{Casimir system}
\begin{eqnarray*}
\begin{aligned}
\begin{pmatrix}
              m\\
              \rho
              \end{pmatrix}_t=\mathbf{K}_{-1}=\mathcal{J}\delta \mathcal{H}_{C},
              \end{aligned}
\end{eqnarray*}
which serves as the first negative flow for the hierarchy \eqref{hie-2ch} and has the explicit form
\begin{equation}\label{2ch--1}
m_t=(\partial_x-\partial_x^3)\left(\displaystyle\frac{1}{\rho}\right),\qquad \rho_t= -\left(\displaystyle\frac{m}{\rho^2}\right)_x,\qquad m=u-u_{xx}.
\end{equation}
Applying the inverse recursion operator $\mathcal{R}^{-1}=\mathcal{J}\,\mathcal{K}^{-1}$ successively to \eqref{2ch--1} produces the members in the negative direction of \eqref{hie-2ch}, having the form
\begin{eqnarray}
\begin{aligned}\label{2ch--n}
\begin{pmatrix}
              m\\
              \rho
              \end{pmatrix}_t=\mathbf{K}_{-n}=(\mathcal{J}\,\mathcal{K}^{-1})^{n-1}\,\mathcal{J}\begin{pmatrix}
              \rho^{-1}\\
             -m\rho^{-2}
              \end{pmatrix},\qquad n=1,\,2,\ldots.
              \end{aligned}
\end{eqnarray}


Furthermore, we will demonstrate, in Lemma \ref{l2.2} below, that the A2CH integrable hierarchy involving system \eqref{akns--1} is actually generated by the following recursion operator
\begin{eqnarray}
\begin{aligned}\label{reop-akns}
\overline{\mathcal{R}}=\frac12\,\begin{pmatrix}
              0 & \partial_y^2+4Q+2Q_y\partial_y^{-1}\\
              -4 & 4P+2P_y\partial_y^{-1}\\
              \end{pmatrix}.
\end{aligned}
\end{eqnarray}
Successively applying $\overline{\mathcal{R}}$ to the usual seed symmetry $\overline{\mathbf{K}}_1=\left(-Q_y,\,-P_y\right)^T$ produces the positive flows of the A2CH integrable hierarchy:
\begin{eqnarray}
\begin{aligned}\label{akns-n+1}
\begin{pmatrix}
              Q\\
              P
              \end{pmatrix}_\tau=\overline{\mathbf{K}}_{n}=\overline{\mathcal{R}}^{n-1}\,\overline{\mathbf{K}}_1,\quad n=1,\,2,\ldots.
              \end{aligned}
\end{eqnarray}
On the other hand, in the negative direction, since the trivial symmetry $\overline{\mathbf{K}}_0=\left(0,\,0\right)^T$ satisfies $\overline{\mathcal{R}}\,\overline{\mathbf{K}}_{0}=\overline{\mathbf{K}}_1$, the negative flow at stage $n\in \mathbb{Z}^+$ takes the form
\begin{eqnarray}
\begin{aligned}\label{akns--n}
\overline{\mathcal{R}}^n\begin{pmatrix}
              Q\\
              P
              \end{pmatrix}_\tau=\overline{\mathbf{K}}_0,\quad n=1,\,2,\ldots.
              \end{aligned}
\end{eqnarray}
In particular, the first negative flow,  $n=1$, in \eqref{akns--n} takes the  explicit form
\begin{equation}\label{akns--1'}
\left(\frac{1}{2}\partial_y^2+2Q+Q_y\partial_y^{-1}\right)P_\tau=0,\quad Q_\tau=\left(P+\frac{1}{2}P_y \partial_y^{-1}\right)P_\tau.
\end{equation}
More precisely, the system \eqref{akns--1} arising from the compatibility condition of the Lax-pair formulation \eqref{iso-akns} is a reduction of the first negative flow \eqref{akns--1'}. Furthermore, the second positive flow, for $n=2$, of the A2CH hierarchy in \eqref{akns-n+1} is
\begin{eqnarray}
\begin{aligned}\label{akns-2}
\begin{pmatrix}
              Q\\
              P
              \end{pmatrix}_\tau=\overline{\mathbf{K}}_2=\overline{\mathcal{R}}\,\overline{\mathbf{K}}_{1}=\overline{\mathcal{R}}\begin{pmatrix}
              -Q_y\\
              -P_y
              \end{pmatrix}=\begin{pmatrix}
              -\frac{1}{2}P_{yyy}-2QP_y-Q_yP\\
              2Q_y-3PP_y
              \end{pmatrix},
              \end{aligned}
\end{eqnarray}
which can be obtained by the $y$ component of the Lax-pair formulation \eqref{iso-akns} together with
\beno
\Phi_\tau+(2\lambda+P)\Phi_y-\frac{1}{2}P_y\Phi=0.
\eeno
In \cite{il}, Ivanov and Lyons introduced the following general Lax-pair formulation
\begin{eqnarray*}
\begin{aligned}\label{iso-il-y}
\Phi_{yy}&=\left(-\lambda^2+\lambda \,u(\tau, y)+\frac{\kappa}{2}u^2(\tau, y)+v(\tau, y)\right)\Phi,\\
\Phi_\tau&=-\left(\lambda+\frac{1}{2}u(\tau, y)\right)\Phi_y+\frac{1}{4}u_y(\tau, y)\Phi,
  \end{aligned}
\end{eqnarray*}
where $\kappa$ is an arbitary constant, leading to the integrable system
\begin{equation*}
 \begin{cases}
  &u_\tau+v_y+(\frac{3}{2}+\kappa)u u_y=0,\\
  &v_\tau-\frac{1}{4}u_{yyy}+(u v)_y-(\frac{1}{2}+\kappa)u v_y-\kappa(\frac{1}{2}+\kappa)u^2 u_y=0.
 \end{cases}
\end{equation*}
If $\kappa=0$, then the resulting system is exactly \eqref{akns-2} up to a change of variables $u=P$, $v=-Q$ and $\tau=2 t$. Moreover, a change of dependent variable
\beno
Q(\tau, y)=N(\tau, y)+\frac{1}{4}P^2(\tau, y),
\eeno
converts  \eqref{akns-2} into
\begin{equation*}
N_\tau=-\frac{1}{2}P_{yyy}-2(PN)_y,\quad
 P_\tau=2N_y-2PP_y,
\end{equation*}
which has the form of a Kaup-Boussinesq system considered in \cite{kau}.

Hereafter, we denote, for a positive integer $n$,  the $n$-th equation in the positive and negative directions of the 2CH hierarchy \eqref{hie-2ch} by (2CH)$_n$ and (2CH)$_{-n}$, respectively, while the $n$-th positive and negative flows of the A2CH hierarchy are denoted by (A2CH)$_{n}$ and (A2CH)$_{-n}$, respectively. With these notations, we are in a position to state the main result on the Liouville correspondence between the 2CH and A2CH hierarchies as follows.

\begin{theorem}\label{t1}
Under the Liouville transformation \eqref{liou-2ch}, for each integer $n$, the {\rm(}2CH{\rm)}$_{n+1}$ equation is mapped into the {\rm(}A2CH{\rm)}$_{-n}$ equation.
\end{theorem}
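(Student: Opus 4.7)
The plan is to reduce the theorem to an intertwining identity between the recursion operators $\mathcal{R}=\mathcal{K}\mathcal{J}^{-1}$ of the 2CH hierarchy and $\overline{\mathcal{R}}$ of the A2CH hierarchy induced by the Liouville transformation \eqref{liou-2ch}, and then to conclude by induction on $|n|$. As a preparation, I first compute the push-forward of derivatives under the reciprocal change of independent variables $\tau=t$, $y=\int^x\rho(t,\xi)\,\mathrm{d}\xi$: one finds $\partial_x=\rho\,\partial_y$, while the time derivative acquires a nonlocal correction $\partial_t=\partial_\tau-(\partial_y^{-1}\rho_\tau)\,\partial_y$ whose coefficient is flow-dependent but admits a universal expression in terms of $\rho_\tau$. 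Combining this with the Fr\'echet derivative of the change of dependent variables in \eqref{liou-2ch} together with its differential consequences, I then construct the operator $\mathcal{S}$ converting any 2CH flow $(m,\rho)_t=F$ into the induced A2CH flow $(Q,P)_\tau=\mathcal{S}\,F$.

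The central step is the intertwining identity $\mathcal{S}\,\mathcal{R}=\overline{\mathcal{R}}^{-1}\,\mathcal{S}$, which I intend to establish by factoring through the bi-Hamiltonian pairs:
\begin{eqnarray*}
\mathcal{S}\,\mathcal{J}=\overline{\mathcal{K}}\,\mathcal{S}^{\dagger},\qquad \mathcal{S}\,\mathcal{K}=\overline{\mathcal{J}}\,\mathcal{S}^{\dagger},
\end{eqnarray*}
where $(\overline{\mathcal{K}},\overline{\mathcal{J}})$ denotes the A2CH bi-Hamiltonian pair with $\overline{\mathcal{R}}=\overline{\mathcal{K}}\,\overline{\mathcal{J}}^{-1}$, and $\mathcal{S}^{\dagger}$ is a companion operator encoding the pullback of variational derivatives. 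Eliminating $\mathcal{S}^{\dagger}$ yields $\mathcal{S}=\overline{\mathcal{R}}\,\mathcal{S}\,\mathcal{R}$ and hence $\mathcal{S}\,\mathcal{R}^n=\overline{\mathcal{R}}^{-n}\,\mathcal{S}$ for every integer $n$.

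Granted the intertwining, the theorem follows by induction. The base case $n=1$, i.e.\ $(2\text{CH})_2\leftrightarrow(\text{A2CH})_{-1}$, is essentially the content of \cite{clz}: both \eqref{2ch-s} and \eqref{akns--1'} arise as compatibility conditions of the Lax pairs \eqref{iso-2ch} and \eqref{iso-akns}, which are interconverted by the Liouville transformation. For $n\geq 2$, writing $\mathbf{K}_{n+1}=\mathcal{R}\,\mathbf{K}_n$ and applying $\mathcal{S}$ gives $\mathcal{S}\mathbf{K}_{n+1}=\overline{\mathcal{R}}^{-1}\mathcal{S}\mathbf{K}_n$; combined with the defining equation $\overline{\mathcal{R}}^n(Q,P)_\tau=0$ of the $(\text{A2CH})_{-n}$ flow \eqref{akns--n}, this promotes the stage-$n$ correspondence to stage $n+1$. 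For the negative 2CH flows one starts at the Casimir-generated flow \eqref{2ch--1} and uses $\mathcal{S}\,\mathcal{R}^{-1}=\overline{\mathcal{R}}\,\mathcal{S}$ to ascend through the positive A2CH flows $\overline{\mathbf{K}}_n$ in turn. The seed case $n=0$, $(2\text{CH})_1\leftrightarrow(\text{A2CH})_0$, is immediate because the $x$-translation flow $(m,\rho)_t=-(m,\rho)_x$ leaves $(Q,P)$ a function of $y$ alone in the new coordinates.

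The principal obstacle will be the verification of the matrix intertwining identities in the central step. Unlike the scalar CH--KdV case of \cite{len,mck}, tri-Hamiltonian duality is unavailable to supply a shortcut, and $\mathcal{K},\mathcal{J},\overline{\mathcal{K}},\overline{\mathcal{J}}$ are $2\times 2$ matrix pseudodifferential operators with entries mixing differential and nonlocal terms, so each identity must be checked entry by entry, with repeated use of the algebraic relations \eqref{Qm1} and the substitution $\partial_x=\rho\,\partial_y$ in order to cancel the nonlocal contributions from $\partial_y^{-1}$.
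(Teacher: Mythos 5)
Your strategy is sound and, at the structural level, coincides with the paper's: both arguments consist of (i) a transport formula expressing the induced $\tau$-flow of $(Q,P)$ in terms of the $t$-flow of $(m,\rho)$, and (ii) an intertwining of $\mathcal{R}=\mathcal{K}\,\mathcal{J}^{-1}$ with $\overline{\mathcal{R}}$, after which the stages of the two hierarchies are matched by conjugation. Where you genuinely diverge is in the proof of (ii). The paper factors your transport operator as $\mathcal{S}=-\overline{\mathcal{R}}\,\mathbf{A}^{-1}$ with the purely multiplicative matrix $\mathbf{A}=\operatorname{diag}(-2\rho^2,\rho)$ (see \eqref{t1QP}), and proves the conjugation identity $\mathbf{A}^{-1}(\mathcal{J}\,\mathcal{K}^{-1})^n\mathbf{A}=\overline{\mathcal{R}}^n$ of Lemma \ref{l2.2} by reducing the $n=1$ case to the three scalar operator identities of Lemma \ref{l2.1}; no Hamiltonian structure on the A2CH side enters the proof of Theorem \ref{t1} at all. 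You instead factor through the transported Hamiltonian pair, $\mathcal{S}\mathcal{J}=\overline{\mathcal{K}}\,\mathcal{S}^{\dagger}$ and $\mathcal{S}\mathcal{K}=\overline{\mathcal{J}}\,\mathcal{S}^{\dagger}$: in the paper's notation $\mathcal{S}=\Lambda^{-1}\mathbf{T}$ and $\mathcal{S}^{\dagger}=(\mathbf{T}^{\ast})^{-1}$ (by Lemma \ref{l2.3} your companion operator is the \emph{pushforward}, not the pullback, of variational derivatives), and your two identities are exactly the transformation rule \eqref{rela-haop-n} of Theorem \ref{t2} as applied in \eqref{haop-A2ch} --- including the crucial crossing $\mathcal{J}\mapsto\overline{\mathcal{K}}$, $\mathcal{K}\mapsto\overline{\mathcal{J}}$, which is what produces $\overline{\mathcal{R}}^{-1}$ in the intertwining and hence the reversal $(\mathrm{2CH})_{n+1}\leftrightarrow(\mathrm{A2CH})_{-n}$. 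The resulting decomposition $\overline{\mathcal{R}}=\Lambda^{-1}\mathbf{T}\,\mathcal{J}\,\mathcal{K}^{-1}\,\mathbf{T}^{-1}\Lambda$ is precisely the alternative factorization recorded in the remark following \eqref{a2ch-fun-12}, which the paper deliberately does not use for Theorem \ref{t1}. The trade-off: your route must first construct and verify $\overline{\mathcal{K}},\overline{\mathcal{J}}$ and check matrix identities whose entries carry the nonlocal operators $\mathbf{T},\mathbf{T}^{\ast}$ (your ``principal obstacle'' is real), but it delivers the Hamiltonian-functional correspondence of Theorem \ref{t3} almost for free; the paper's $\mathbf{A}$-conjugation is computationally much lighter, handles all $n$ uniformly, and needs no induction over the hierarchy nor any appeal to \cite{clz}.

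Three points need repair in execution. First, the transport of the time derivative should read $\partial_t=\partial_\tau+\bigl(\partial_y^{-1}(\rho^{-1}\rho_t)\bigr)\partial_y$, cf.\ \eqref{t-deri-1}; your version has the wrong sign and drops the factor $\rho^{-1}$, and since the whole verification hinges on exact cancellation of nonlocal terms, this is not cosmetic. Second, $\overline{\mathcal{R}}$ is not invertible --- the $\partial_y^{-1}$ terms carry kernel and integration-constant ambiguities --- so ``$\mathcal{S}\mathcal{R}^n=\overline{\mathcal{R}}^{-n}\mathcal{S}$ for every integer $n$'' is only formal; it must be used, as you in fact do in your induction, in the polarized form $\overline{\mathcal{R}}\,\mathcal{S}\,\mathcal{R}=\mathcal{S}$ with only nonnegative powers of $\overline{\mathcal{R}}$ applied. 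The integration constants are exactly where the seed identifications live: in the paper, $\mathcal{K}^{-1}(m_x,\rho_x)^T=(1,0)^T$ with $\mathcal{J}(1,0)^T=\mathbf{0}$ in the positive direction, and $\mathbf{K}_{-1}=\mathbf{A}\,(Q_y,P_y)^T$ in the negative direction --- the latter anchor for your Casimir-based ascent is the second identity of Lemma \ref{l2.1} in disguise and must be proved, not assumed. Third, your base case $n=1$ via \cite{clz} gives only the reduced system \eqref{akns--1}, whereas $(\mathrm{A2CH})_{-1}$ is the implicitly defined flow \eqref{akns--1'}; one still has to check, with the natural integration constants, that solutions of \eqref{akns--1} satisfy \eqref{akns--1'} (the reduction statement after \eqref{akns--1'}) --- or, cleaner, derive the base case from your own intertwining and drop the reliance on \cite{clz} altogether.
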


The proof of this theorem relies on the following two preliminary lemmas.


\begin{lemma}\label{l2.1}
Let $\left(m(t, x),\,\rho(t, x)\right)$ and $\left(Q(\tau, y),\,P(\tau, y)\right)$ be related by the transformation \eqref{liou-2ch}. Then the following operator identities hold:
\begin{eqnarray}
\label{op1}
\eeq{\rho^{-\frac{3}{2}}\,\left(\frac{1}{4}-\partial_x^2\right)\,\rho^{-\frac{1}{2}}=-(Q+\partial_y^2),\\
 \rho^{-2}\,\left(\partial_x-\partial_x^3\right)\,\rho^{-1}=-\LC \partial_y^3+2Q\partial_y+2\partial_y Q\RC,\\
 \rho^{-2}\,(\partial_x m+m \partial_x)\,\rho^{-1}=-(P\partial_y+\partial_y P).}
\end{eqnarray}
\end{lemma}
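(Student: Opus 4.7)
The plan is to pull every $x$-derivative appearing on the left-hand sides back to $y$-derivatives using the reciprocal transformation in \eqref{liou-2ch}, and then verify each identity by applying both sides to an arbitrary test function $F=F(y)$ and matching coefficients of $F,F_y,F_{yy},F_{yyy}$. The chain rule gives $\partial_x = \rho\,\partial_y$ on functions of $y$, hence in particular $\rho_x=\rho\rho_y$ and $\rho_{xx}=\rho\rho_y^2+\rho^2\rho_{yy}$, and substitution into \eqref{Qm1} converts the definition of $Q$ into its $(\tau,y)$-coordinate form $Q = -\tfrac14\rho^{-2} + \tfrac14\rho^{-2}\rho_y^2 - \tfrac12\rho^{-1}\rho_{yy}$. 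This bridge allows the terms produced by the chain rule to be recombined into $Q$ and $Q_y$.

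For the first identity I would expand $\partial_x^2(\rho^{-1/2}F)$ by Leibniz; the two cross-terms involving $F_y$ cancel, leaving $\tfrac14\rho^{-1/2}\rho_y^2 F - \tfrac12\rho^{1/2}\rho_{yy}F + \rho^{3/2}F_{yy}$. Multiplying by $\rho^{-3/2}$ and combining with the $\tfrac14\rho^{-2}F$ piece reproduces precisely $-(Q+\partial_y^2)F$. For the third identity I would substitute $m=-P\rho^2$; a few lines of direct expansion give $(\partial_x m+m\partial_x)(\rho^{-1}F) = -\rho^2 P_y F -2P\rho^2 F_y$, and premultiplication by $\rho^{-2}$ matches $-(P\partial_y+\partial_y P)F$.

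The second identity is the longest: I would expand $\partial_x(\rho^{-1}F)$ and $\partial_x^3(\rho^{-1}F)$ by iterated Leibniz to write them in the basis $F,F_y,F_{yy},F_{yyy}$. After multiplying by $\rho^{-2}$, the $F_{yyy}$-coefficient is $-1$ (matching $-\partial_y^3$), the $F_{yy}$-coefficient cancels out, the $F_y$-coefficient simplifies to $\rho^{-2}-\rho^{-2}\rho_y^2+2\rho^{-1}\rho_{yy}=-4Q$, and the $F$-coefficient simplifies to $\rho^{-1}\rho_{yyy}-2\rho^{-2}\rho_y\rho_{yy}+\rho^{-3}\rho_y^3-\rho^{-3}\rho_y$, which by direct differentiation of the above formula for $Q$ equals $-2Q_y$. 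The right-hand side then takes the stated form $-(\partial_y^3+4Q\partial_y+2Q_y)F = -(\partial_y^3+2Q\partial_y+2\partial_y Q)F$.

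The main obstacle is the bookkeeping in identity (2): the expansion of $\partial_x^3(\rho^{-1}F)$ produces roughly eight distinct terms involving $\rho,\rho_y,\rho_{yy},\rho_{yyy}$, and several cancellations must be tracked carefully to arrive at the clean operator form. The vanishing of the $F_{yy}$-coefficient and the identification of the $F$-coefficient with $-2Q_y$ serve as decisive consistency checks. Once the three identities are in hand, they will feed directly into matching the Hamiltonian operators and the recursion operator in the subsequent lemma.
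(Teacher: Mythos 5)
Your proposal is correct and follows essentially the same route as the paper's proof: both rest on the chain-rule identity $\partial_x=\rho\,\partial_y$ together with iterated Leibniz expansion of each operator, and all your intermediate checkpoints (the cancellation of the $F_{yy}$-coefficient, the $F_y$-coefficient $\rho^{-2}-\rho^{-2}\rho_y^2+2\rho^{-1}\rho_{yy}=-4Q$, and the $F$-coefficient equal to $-2Q_y$) agree with the paper's computation. The only cosmetic difference is that you rewrite $Q$ in its $y$-derivative form $Q=-\tfrac14\rho^{-2}+\tfrac14\rho^{-2}\rho_y^2-\tfrac12\rho^{-1}\rho_{yy}$ throughout, whereas the paper verifies the first identity directly against the $x$-derivative definition of $Q$ in \eqref{liou-2ch} and only passes to $y$-derivatives for the second and third identities.
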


\begin{proof}
First of all, in view of the transformation \eqref{liou-2ch}, one has
\begin{equation}\label{eq-xy}
\partial_x=\rho\,\partial_y.\end{equation}
 It follows that
\begin{equation*}
\partial_x^2\,\rho^{-\frac{1}{2}}=\frac{3}{4}\rho^{-\frac{5}{2}}\,\rho_x^2-\frac{1}{2}\rho^{-\frac{3}{2}}\rho_{xx}+\rho^{\frac{3}{2}}\,\partial_y^2.
\end{equation*}
We thus arrive at
\begin{eqnarray*}
\rho^{-\frac{3}{2}}\,\left(\frac{1}{4}-\partial_x^2\right)\,\rho^{-\frac{1}{2}}=\rho^{-\frac{3}{2}}\left(\frac{1}{4}\rho^{-\frac{1}{2}}-\frac{3}{4}\rho^{-\frac{5}{2}}\,\rho_x^2+\frac{1}{2}\rho^{-\frac{3}{2}}\rho_{xx}-\rho^{\frac{3}{2}}\,\partial_y^2\right)=-\LC Q+\partial_y^2\RC,
\end{eqnarray*}
proving the first identity in \eqref{op1}.
Next, from \eqref{eq-xy},  one has
\beno
\eeq{\partial_x \rho^{-1}=-\rho^{-1}\rho_y+\partial_y,\\
\partial_x^3\rho^{-1}=-\rho^{-1}\rho_y^3+2\rho_y\rho_{yy}-\rho\rho_{yyy}+(\rho_y^2-2\rho\rho_{yy})\partial_y+\rho^2\partial_y^3.}
\eeno
Hence,
\begin{eqnarray*}
\begin{aligned}
\rho^{-2}\,\left(\partial_x-\partial_x^3\right)&\,\rho^{-1}\\
=&-\rho^{-3}\rho_y+\rho^{-3}\rho_y^3-2\rho^{-2}\rho_y\rho_{yy}+\rho^{-1}\rho_{yyy}+(\rho^{-2}-\rho^{-2}\rho_y^2+2\rho^{-1}\rho_{yy})\partial_y-\partial_y^3,
\end{aligned}
\end{eqnarray*}
and then the second identity in \eqref{op1} follows.
Finally, a direct computation shows that
 \begin{eqnarray*}
\begin{aligned}
\rho^{-2}\partial_x\,m\rho^{-1}+m\rho^{-2}\partial_x\,\rho^{-1}&=-\rho^{-1}\partial_yP\rho- P\rho\partial_y\rho^{-1}\\
 &=-\rho^{-1}\big(\,(P\rho)_y+P\rho\partial_y\,\big)+P\big(\,\rho^{-1}\rho_y-\partial_y\,\big)=-(P\partial_y+\partial_y P),
\end{aligned}
\end{eqnarray*}
verifying the third identity in \eqref{op1}.
\end{proof}

\begin{lemma}\label{l2.2}
Let $\mathcal{K}$ and $\mathcal{J}$ be the two compatible Hamiltonian operators given in \eqref{haop-2ch} for the 2CH integrable hierarchy. Then, for each positive integer $n$, \begin{equation}\label{reop-eq-2ch}
\mathbf{A}^{-1}\left( \mathcal{J}\,\mathcal{K}^{-1}\right)^n \mathbf{A}=\overline{\mathcal{R}}^n,
\end{equation}
through the transformations \eqref{liou-2ch}, where the operator $\overline{\mathcal{R}}$ is defined by \eqref{reop-akns} and
\begin{eqnarray}
\begin{aligned}\label{reop-eq-A}
\mathbf{A}=\begin{pmatrix}
              -2\rho^2& 0\\
              0&  \rho\\
              \end{pmatrix}.
\end{aligned}
\end{eqnarray}
Moreover, $\overline{\mathcal{R}}$ is not only a hereditary operator itself, but also the recursion operator for the  A2CH hierarchy with positive flows given by \eqref{akns-n+1} and negative flows by \eqref{akns--n}.
\end{lemma}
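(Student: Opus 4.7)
The identity \eqref{reop-eq-2ch} reduces immediately to the case $n=1$, since $\bigl[\mathbf{A}^{-1}(\mathcal{J}\,\mathcal{K}^{-1})\mathbf{A}\bigr]^n = \mathbf{A}^{-1}(\mathcal{J}\,\mathcal{K}^{-1})^n\mathbf{A}$, so the plan is to establish $\mathbf{A}^{-1}(\mathcal{J}\,\mathcal{K}^{-1})\mathbf{A} = \overline{\mathcal{R}}$ and then iterate. To do this, I would first factor $\overline{\mathcal{R}}$ as a product $\overline{\mathcal{R}} = \overline{\mathcal{K}}\,\overline{\mathcal{J}}^{-1}$ of local differential operators, for instance
$$\overline{\mathcal{K}} = \begin{pmatrix} 0 & \partial_y^3 + 2Q\partial_y + 2\partial_y Q \\ -2\partial_y & 2(P\partial_y + \partial_y P) \end{pmatrix}, \qquad \overline{\mathcal{J}} = \begin{pmatrix} \partial_y & 0 \\ 0 & 2\partial_y \end{pmatrix},$$
and then introduce an auxiliary matrix operator $\mathbf{B}$ satisfying simultaneously
$$\mathcal{K}\,\mathbf{B} = \mathbf{A}\,\overline{\mathcal{J}}, \qquad \mathcal{J}\,\mathbf{B} = \mathbf{A}\,\overline{\mathcal{K}}.$$
Eliminating $\mathbf{B}$ between these then yields $\mathcal{J}\,\mathcal{K}^{-1}\mathbf{A} = \mathbf{A}\,\overline{\mathcal{K}}\,\overline{\mathcal{J}}^{-1} = \mathbf{A}\,\overline{\mathcal{R}}$, which is exactly the desired $n=1$ identity.

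Solving $\mathcal{K}\,\mathbf{B}=\mathbf{A}\,\overline{\mathcal{J}}$ column by column, using $\partial_x = \rho\,\partial_y$ and the substitution $m = -P\rho^2$, produces the candidate
$$\mathbf{B} = \begin{pmatrix} 0 & 2\rho^{-1} \\ -2 & 4P - 2\partial_y^{-1}P_y \end{pmatrix}.$$
To verify that this $\mathbf{B}$ also satisfies $\mathcal{J}\,\mathbf{B}=\mathbf{A}\,\overline{\mathcal{K}}$ is where Lemma \ref{l2.1} is invoked: after rewriting all $x$-operators in the $y$-variable, the first-component equation reduces to the second identity in \eqref{op1}, while the off-diagonal block of $\mathcal{K}\,\mathbf{B}=\mathbf{A}\,\overline{\mathcal{J}}$ is controlled by the third identity in \eqref{op1}. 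The remaining scalar entries fall out of direct computation; the key (and slightly delicate) point is that the nonlocal operator $\partial_y^{-1}P_y$ appearing in $\mathbf{B}$ disappears upon composition with $\mathcal{J}$, since applying $\partial_y$ kills its nonlocal part and returns precisely the $P_y$ term needed to match the $2(P\partial_y+\partial_y P)$ block of $\mathbf{A}\,\overline{\mathcal{K}}$.

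The remaining claims follow with little additional effort. The operator $\overline{\mathcal{R}}$ is hereditary because it is operator-conjugate, via $\mathbf{A}$, to $\mathcal{J}\,\mathcal{K}^{-1} = \mathcal{R}^{-1}$, where $\mathcal{R} = \mathcal{K}\,\mathcal{J}^{-1}$ is the recursion operator of the bi-Hamiltonian 2CH hierarchy; hereditariness is preserved under both inversion and the coordinate-change conjugation induced by \eqref{liou-2ch}. That $\overline{\mathcal{R}}$ generates the A2CH hierarchy in the sense of \eqref{akns-n+1}--\eqref{akns--n} then follows in the standard bi-Hamiltonian way: applying $\overline{\mathcal{R}}$ to the seed $\overline{\mathbf{K}}_1 = (-Q_y,-P_y)^T$ reproduces the second positive flow \eqref{akns-2}, which agrees with the compatibility condition of \eqref{iso-akns} augmented with the appropriate $\tau$-part, while the negative flows are characterized implicitly by $\overline{\mathcal{R}}^n(Q,P)^T_\tau = \overline{\mathbf{K}}_0$.

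The main obstacle I would anticipate is guessing the correct factorization $\overline{\mathcal{R}} = \overline{\mathcal{K}}\,\overline{\mathcal{J}}^{-1}$ together with the precise intermediate operator $\mathbf{B}$ so that the nonlocal terms conspire to cancel upon composition with $\mathcal{J}$ and the three identities \eqref{op1} line up exactly with the matrix entries that must match. Once the right $\overline{\mathcal{K}},\overline{\mathcal{J}},\mathbf{B}$ are in hand, the remainder of the argument is an organized but mechanical verification.
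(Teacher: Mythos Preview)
Your route to the identity \eqref{reop-eq-2ch} is correct and genuinely different from the paper's. The paper simply writes out $\mathbf{A}^{-1}\mathcal{J}\mathcal{K}^{-1}\mathbf{A}$ entry by entry (using the explicit block form of $\mathcal{K}^{-1}$) and then matches each entry to $\overline{\mathcal{R}}$ via the second and third identities of Lemma~\ref{l2.1}. Your method instead factors $\overline{\mathcal{R}}=\overline{\mathcal{K}}\,\overline{\mathcal{J}}^{-1}$ and finds an interpolating $\mathbf{B}$ with $\mathcal{K}\mathbf{B}=\mathbf{A}\overline{\mathcal{J}}$ and $\mathcal{J}\mathbf{B}=\mathbf{A}\overline{\mathcal{K}}$; this avoids ever inverting $\mathcal{K}$ and packages the verification as four local identities, each of which indeed reduces to one of the formulas in \eqref{op1}. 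Both approaches rest on Lemma~\ref{l2.1}; yours is a bit more structural, the paper's a bit more direct.

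Where your argument has a real gap is the hereditary/recursion part. The conjugation $\overline{\mathcal{R}}=\mathbf{A}^{-1}(\mathcal{J}\mathcal{K}^{-1})\mathbf{A}$ is \emph{not} the tangent-map conjugation induced by the Liouville transformation \eqref{liou-2ch}; the paper explicitly remarks (after deriving the Hamiltonian pair \eqref{haop-A2ch}) that the genuine change-of-variables decomposition is $\overline{\mathcal{R}}=\Lambda^{-1}\mathbf{T}\,\mathcal{J}\mathcal{K}^{-1}\,\mathbf{T}^{-1}\Lambda$, which differs from the $\mathbf{A}$-decomposition. Hereditariness (vanishing Nijenhuis torsion) is invariant only under the \emph{correct} tangent-map conjugation, not under an arbitrary operator similarity, so your transfer argument does not go through as stated. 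The paper closes this gap by brute force: it verifies the hereditary identity \eqref{here-op} for $\overline{\mathcal{R}}$ directly, and separately checks the recursion-operator condition \eqref{re-op} for the seed system $(Q,P)_\tau=(-Q_y,-P_y)$ by computing $\overline{\mathcal{R}}_\tau$ and the commutator $[\FD{\overline{\mathbf{K}}_1},\overline{\mathcal{R}}]$ and matching them. You would need either to carry out these two direct verifications, or to first establish the $\Lambda^{-1}\mathbf{T}$ decomposition and then invoke invariance of the Nijenhuis torsion under that.
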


Before proving Lemma \ref{l2.2}, we give a brief remark on the notion of hereditary operators in the two-component setting. Let $\mathcal{A}$ denotes the space of differential functions, depending only on the indicated dependent variables and their spatial derivatives, while $\mathcal{A}^n$  denotes the corresponding space of $n$-component differential functions. An operator, say $\overline{\mathcal{R}}$ as in \eqref{reop-akns} for instance, is called a {\it hereditary operator} if and only if it satisfies the condition
\begin{equation}\label{here-op}
\FD{\overline{\mathcal{R}}}[\overline{\mathcal{R}}\,\mathbf{f}]\,\mathbf{g}-\FD{\overline{\mathcal{R}}}[\overline{\mathcal{R}}\,\mathbf{g}]\,\mathbf{f}=\overline{\mathcal{R}}\left(\FD{\overline{\mathcal{R}}}[\mathbf{f}]\mathbf{g}-\FD{\overline{\mathcal{R}}}[\mathbf{g}]\mathbf{f}\right), \quad \textrm{for all} \quad\mathbf{f},\,\mathbf{g}\in \mathcal{A}^2.
\end{equation}
The following proposition, proved in \cite{ff, olv2}, describes how a hereditary operator serves as the recursion operator of an integrable hierarchy.

\begin{proposition}\label{p2.1}
Assume that the hereditary operator $\overline{\mathcal{R}}$ and the system
\begin{equation}\label{sys}
\mathbf{Q}_\tau=\mathbf{G}_1,\quad \mathbf{Q}=(Q(\tau, y),\,P(\tau, y))^T,\quad \mathbf{G}_1=\left( G_1^1(\mathbf{Q}),\, G_1^2(\mathbf{Q})\right)^T\in \mathcal{A}^2,
\end{equation}
satisfy the condition
\begin{equation}\label{re-op}
\overline{\mathcal{R}}_\tau\Big|_{\mathbf{Q}_\tau=\mathbf{G}_1}=\Big[ \FD{\mathbf{G}_1},\;\overline{\mathcal{R}}\Big],
\end{equation}
where $\overline{\mathcal{R}}_\tau$ is the time derivative of $\overline{\mathcal{R}}$ and $\FD{\mathbf{G_1}}$ is the Fr\'echet derivative of $\mathbf{G_1}$, so that $\overline{\mathcal{R}}$ is the recursion operator for \eqref{sys}. Then, $\overline{\mathcal{R}}$ is also the recursion operator for each flow in the assocaited hierarchy
\begin{equation*}
\mathbf{Q}_\tau=\mathbf{G}_{n}=\overline{\mathcal{R}}^{n-1}\mathbf{G}_1,\qquad n\in \mathbb{Z}.
\end{equation*}
\end{proposition}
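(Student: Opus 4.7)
The plan is to prove by induction on $n\ge 1$ that the flow $\mathbf{Q}_\tau=\mathbf{G}_n$ admits $\overline{\mathcal{R}}$ as recursion operator, i.e.
\begin{equation*}
L_n\;:=\;\overline{\mathcal{R}}_\tau\Big|_{\mathbf{Q}_\tau=\mathbf{G}_n} \;-\;\bigl[\FD{\mathbf{G}_n},\,\overline{\mathcal{R}}\bigr]\;=\;0.
\end{equation*}
The base case $n=1$ is exactly the hypothesis \eqref{re-op}. The extension to $n\le 0$ is obtained by running the same induction with $\overline{\mathcal{R}}$ replaced by $\overline{\mathcal{R}}^{-1}$, using the elementary fact that the inverse of a hereditary operator is again hereditary on its domain of definition.

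The engine of the induction is the operator identity
\begin{equation*}
L_{n+1}\;=\;\overline{\mathcal{R}}\,L_n,
\end{equation*}
which relies only on the hereditary property \eqref{here-op} and the recursive definition $\mathbf{G}_{n+1}=\overline{\mathcal{R}}\,\mathbf{G}_n$. Granting this identity, the inductive step is immediate: if $L_n=0$ then $L_{n+1}=\overline{\mathcal{R}}\cdot 0=0$.

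To establish the key identity $L_{n+1}=\overline{\mathcal{R}}\,L_n$, I would evaluate both sides on an arbitrary test element $\mathbf{h}\in\mathcal{A}^2$. Using the identification of the flow-derivative with a Fr\'echet derivative,
\begin{equation*}
\overline{\mathcal{R}}_\tau\Big|_{\mathbf{Q}_\tau=\mathbf{F}}\mathbf{h}\;=\;\FD{\overline{\mathcal{R}}}[\mathbf{F}]\,\mathbf{h},
\end{equation*}
together with the product rule
\begin{equation*}
\FD{\overline{\mathcal{R}}\mathbf{G}_n}[\mathbf{h}]\;=\;\FD{\overline{\mathcal{R}}}[\mathbf{h}]\,\mathbf{G}_n\;+\;\overline{\mathcal{R}}\,\FD{\mathbf{G}_n}[\mathbf{h}],
\end{equation*}
one expands $L_{n+1}\mathbf{h}$ and $\overline{\mathcal{R}}\,L_n\mathbf{h}$ as explicit sums. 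After cancelling the pieces that appear on both sides, the remaining difference reduces precisely to
\begin{equation*}
\FD{\overline{\mathcal{R}}}[\overline{\mathcal{R}}\mathbf{G}_n]\,\mathbf{h}\;-\;\FD{\overline{\mathcal{R}}}[\overline{\mathcal{R}}\mathbf{h}]\,\mathbf{G}_n\;-\;\overline{\mathcal{R}}\bigl(\FD{\overline{\mathcal{R}}}[\mathbf{G}_n]\,\mathbf{h}\;-\;\FD{\overline{\mathcal{R}}}[\mathbf{h}]\,\mathbf{G}_n\bigr),
\end{equation*}
which vanishes by the hereditary condition \eqref{here-op} applied with $\mathbf{f}=\mathbf{G}_n$, $\mathbf{g}=\mathbf{h}$.

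The main obstacle is the bookkeeping in this last step: in the two-component setting $\overline{\mathcal{R}}$ is a matrix of integro-differential operators, so $\FD{\overline{\mathcal{R}}}[\cdot]$ must be computed component-by-component, and the order of operator composition tracked carefully when manipulating nonlocal factors such as $\partial_y^{-1}$. Conceptually, however, the argument is structurally identical to the scalar Fuchssteiner--Fokas--Olver proof in \cite{ff,olv2}, and we would therefore present the multi-component verification concisely and refer to those sources for the complete scalar template.
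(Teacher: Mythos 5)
Your proposal is correct and is essentially the paper's own approach: the paper gives no in-text proof of Proposition \ref{p2.1}, deferring to \cite{ff,olv2}, and your induction via the identity $L_{n+1}=\overline{\mathcal{R}}\,L_n$ --- whose verification, after expanding with the product rule $\FD{\overline{\mathcal{R}}\mathbf{G}_n}[\mathbf{h}]=\FD{\overline{\mathcal{R}}}[\mathbf{h}]\,\mathbf{G}_n+\overline{\mathcal{R}}\,\FD{\mathbf{G}_n}[\mathbf{h}]$, reduces exactly as you say to the hereditary condition \eqref{here-op} with $\mathbf{f}=\mathbf{G}_n$, $\mathbf{g}=\mathbf{h}$ --- is precisely the standard Fuchssteiner--Fokas--Olver template those references supply, transplanted to the two-component setting. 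Your handling of the negative flows via the hereditariness of $\overline{\mathcal{R}}^{-1}$ is likewise the standard treatment, with the formal invertibility of $\overline{\mathcal{R}}$ already implicit in the definition of the negative hierarchy itself.
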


\begin{proof} [\bf{Proof of Lemma \ref{l2.2}}]
We prove \eqref{reop-eq-2ch} by induction. For the case $n=1$, in view of the forms of $\mathcal{K}$ and $\mathcal{J}$, we obtain
\begin{eqnarray*}
\begin{aligned}
\mathbf{A}^{-1}\mathcal{J}\mathcal{K}^{-1}\mathbf{A}=\frac{1}{2}\,\begin{pmatrix}
            0 & -\rho^{-2}(\partial_x-\partial_x^3)\rho^{-1}\partial_x^{-1}\rho\\
              -4 & -2\rho^{-2}(m\partial_x+\partial_x m)\rho^{-1}\partial_x^{-1}\rho\\
              \end{pmatrix},
\end{aligned}
\end{eqnarray*}
which together with the second and third equations in \eqref{op1} leads to \eqref{reop-eq-2ch}  for $n=1$. Next, assume that  \eqref{reop-eq-2ch} holds for $n=k$ with some $k\in \mathbb{Z}^+$. Then for $n=k+1$, one has
\begin{equation*}
\overline{\mathcal{R}}^{k+1}=\overline{\mathcal{R}}^k \overline{\mathcal{R}}=\mathbf{A}^{-1}(\mathcal{J}\mathcal{K}^{-1})^k\mathbf{A} \mathbf{A}^{-1}\mathcal{J}\mathcal{K}^{-1}\mathbf{A}=\mathbf{A}^{-1}(\mathcal{J}\mathcal{K}^{-1})^{k+1}\mathbf{A},
\end{equation*}
which establishes the induction step and thus verifies  \eqref{reop-eq-2ch} holds for any integer $n\geq 1$.

According to Proposition \ref{p2.1}, to prove that the operator $\overline{\mathcal{R}}$ serves as a recursion operator for the entire A2CH hierarchy, we first verify that $\overline{\mathcal{R}}$ is a recursion operator for the seed system
\begin{eqnarray}
\begin{aligned}\label{akns-1}
\begin{pmatrix}
              Q\\
              P
              \end{pmatrix}_\tau=\overline{\mathbf{K}}_1=-\begin{pmatrix}
              Q\\
              P
              \end{pmatrix}_y.
              \end{aligned}
\end{eqnarray}
In order to prove \eqref{re-op}, on the one hand, the relevant Fr\'echet derivative is
\begin{eqnarray*}
\begin{aligned}
\FD{\overline{\mathbf{K}}_1}=-\begin{pmatrix}
      \partial_y&0\\
              0& \partial_y\\
              \end{pmatrix}.
\end{aligned}
\end{eqnarray*}
On the other hand, with respect to \eqref{akns-1}, we have
\begin{eqnarray*}
\begin{aligned}\label{rtau}
\overline{\mathcal{R}}_\tau=-\begin{pmatrix}
       0&Q_{yy}\partial_y^{-1}+2Q_y\\
              0& P_{yy}\partial_y^{-1}+2P_y\\
              \end{pmatrix}.
\end{aligned}
\end{eqnarray*}
Furthermore, the commutator in \eqref{re-op} satisfies
\begin{eqnarray*}
\begin{aligned}
\Big[ & \FD{\overline{\mathbf{K}}_1},\;\overline{\mathcal{R}}\Big]=\FD{\overline{\mathbf{K}}_1} \cdot \overline{\mathcal{R}}-\overline{\mathcal{R}} \cdot \FD{\overline{\mathbf{K}}_1}\\
&=\frac{1}{2}\begin{pmatrix}
       0 & -\partial_y\LC \partial_y^2+2Q_y\partial_y^{-1}+4Q\RC\\
              4\partial_y & -2\partial_y(2P+P_y\partial_y^{-1})\\
              \end{pmatrix}-\frac{1}{2}\begin{pmatrix}
       0 & -\LC \partial_y^2+2Q_y\partial_y^{-1}+4Q\RC\partial_y\\
              4\partial_y & -2(2P+P_y\partial_y^{-1})\partial_y\\
              \end{pmatrix}=\overline{\mathcal{R}}_{\tau},
\end{aligned}
\end{eqnarray*}
which verifies \eqref{re-op} and demonstrates that $\overline{\mathcal{R}}$ is a recursion operator for \eqref{akns-1}.

Finally, a direct and tedious calculation shows that $\overline{\mathcal{R}}$ satisfies the hereditary property \eqref{here-op}. This, together with the fact that $\overline{\mathcal{R}}$ is a recursion operator for the seed system \eqref{akns-1} suffices to prove that it is a recursion operator for each flow in the A2CH integrable hierarchy, which completes the proof of this lemma.
\end{proof}

\begin{proof} [\bf{Proof of Theorem \ref{t1}}]
First, we compute the $t$-derivatives of the functions $Q(\tau, y)$ and $P(\tau, y)$ appearing in the Liouville transformation \eqref{liou-2ch}. More precisely, \begin{equation}\label{t-deri-1}
Q_t=Q_\tau+Q_y\int^x\!\rho_t(t, \xi)\,\mathrm{d}\xi=Q_\tau+Q_y\partial_y^{-1}\rho^{-1}\rho_t.
\end{equation}
On the other hand, note that $Q$ given in the transformation \eqref{liou-2ch} can be rewritten in the following compact form
\begin{equation*}
Q=-\rho^{-\frac{3}{2}}\,\LC \frac{1}{4}-\partial_x^2\RC\,\rho^{-\frac{1}{2}},
\end{equation*}
which implies
\begin{equation}\label{t-deri-2}
Q_t=-\LC \rho^{-\frac{3}{2}}\,\LC \frac{1}{4}-\partial_x^2\RC\,\rho^{-\frac{1}{2}}\RC_t=-\LC 2Q+\frac{1}{2}\partial_y^2\RC\,\rho^{-1} \rho_t,
\end{equation}
by using the operator identity \eqref{op1}. Thus, combining with \eqref{t-deri-1} and \eqref{t-deri-2}, we obtain
\begin{equation*}
Q_\tau=-\LC \frac{1}{2}\partial_y^3+2Q\partial_y+Q_y\RC\partial_y^{-1}\rho^{-1}\rho_t.
\end{equation*}
Similarly,
\begin{equation*}
P_\tau=-\rho^{-2}m_t-(P\partial_y+\partial_y P)\partial_y^{-1}\rho^{-1}\rho_t,
\end{equation*}
and hence\begin{eqnarray}
\begin{aligned}\label{t1QP}
\begin{pmatrix}
              Q\\
              P
              \end{pmatrix}_\tau=-\overline{\mathcal{R}}\,\mathbf{A}^{-1}\begin{pmatrix}
              m\\
              \rho
              \end{pmatrix}_t,
              \end{aligned}
\end{eqnarray}
where $\mathbf{A}$ is defined in \eqref{reop-eq-A}.

Next, consider the (2CH)$_{-n}$ system \eqref{2ch--n} for $n\geq 1$. The second identity in \eqref{op1} and the transformation formulae \eqref{liou-2ch}  imply that the first negative flow \eqref{2ch--1} of the 2CH hierarchy satisfies
\begin{eqnarray*}
\begin{aligned}
\begin{pmatrix}
              m\\
              \rho
              \end{pmatrix}_t=\mathbf{K}_{-1}=\mathbf{A}\begin{pmatrix}
              Q\\
              P
              \end{pmatrix}_y,
              \end{aligned}
\end{eqnarray*}
and hence the $n$-th negative flow \eqref{2ch--n} can be written as
\begin{eqnarray}
\begin{aligned}\label{2ch--nQP}
\begin{pmatrix}
              m\\
              \rho
              \end{pmatrix}_t=\mathbf{K}_{-n}=\LC\mathcal{J} \mathcal{K}^{-1}\RC^{n-1}\,\mathbf{K}_{-1}
=\;\LC\mathcal{J} \mathcal{K}^{-1}\RC^{n-1}\,\mathbf{A}\begin{pmatrix}
              Q\\
              P
              \end{pmatrix}_y,\quad n=1,\,2,\ldots.
\end{aligned}
\end{eqnarray}
Plugging \eqref{2ch--nQP} into \eqref{t1QP} and using the formula \eqref{reop-eq-2ch}, we arrive at
\begin{eqnarray*}
\begin{aligned}
\begin{pmatrix}
              Q\\
              P
              \end{pmatrix}_\tau=-\overline{\mathcal{R}}\mathbf{A}^{-1}\LC\mathcal{J} \mathcal{K}^{-1}\RC^{n-1}\mathbf{A}\begin{pmatrix}
              Q\\
              P
              \end{pmatrix}_y
=-\;\mathbf{A}^{-1}\LC\mathcal{J} \mathcal{K}^{-1}\RC^{n}\mathbf{A}\begin{pmatrix}
              Q\\
              P
              \end{pmatrix}_y=\overline{\mathcal{R}}^{n}\overline{\mathbf{K}}_1=\overline{\mathbf{K}}_{n+1}.
              \end{aligned}
\end{eqnarray*}
Therefore, for each $n\geq 1$, if $\big(m(t, x),\,\rho(t, x)\big)$ is the solution of the (2CH)$_{-n}$ system \eqref{2ch--n}, then the corresponding $\big(Q(\tau, y),\,P(\tau, y)\big)$ solves the (A2CH)$_{n+1}$ system \eqref{akns-n+1}.

Moreover, for  $0 \leq n\ \in \Z$, substituting the (2CH)$_{n+1}$ system in the positive direction
\begin{eqnarray}
\begin{aligned}\label{2ch-n+1}
\begin{pmatrix}
              m\\
              \rho
              \end{pmatrix}_t=\mathbf{K}_{n+1}=\;-\LC\mathcal{K}\mathcal{J}^{-1}\RC^{n}\begin{pmatrix}
              m\\
              \rho
              \end{pmatrix}_x
              \quad n=0,\,1,\ldots,
\end{aligned}
\end{eqnarray}
into \eqref{t1QP} yields
\begin{eqnarray*}
\begin{aligned}
\begin{pmatrix}
              Q\\
              P
              \end{pmatrix}_\tau=\overline{\mathcal{R}}\mathbf{A}^{-1}\LC \mathcal{K}\mathcal{J}^{-1}\RC^n\,\begin{pmatrix}
              m\\
             \rho
              \end{pmatrix}_x.
              \end{aligned}
\end{eqnarray*}
Then, we let the operator $\overline{\mathcal{R}}^n$ to act on the both sides of the above system, and use the formula \eqref{reop-eq-2ch} again, to deduce that
\begin{eqnarray*}
\begin{aligned}
\overline{\mathcal{R}}^n\begin{pmatrix}
              Q\\
              P
              \end{pmatrix}_\tau=\overline{\mathcal{R}}^{n+1}\mathbf{A}^{-1}\LC \mathcal{K}\mathcal{J}^{-1}\RC^n\,\begin{pmatrix}
              m_x\\
             \rho_x
              \end{pmatrix}=\mathbf{A}^{-1}\mathcal{J}\mathcal{K}^{-1}\begin{pmatrix}
              m_x\\
             \rho_x
              \end{pmatrix}=\begin{pmatrix}
              0\\
             0
              \end{pmatrix},
              \end{aligned}
\end{eqnarray*}
where the operation $\mathcal{K}^{-1} \big(m_x, \rho_x\big)^T=\big(1, 0\big)^T$ is used. We conclude that, for each $n\geq 0$,  if $\big(m(t, x),\,\rho(t, x)\big)$ is the solution of the (2CH)$_{n+1}$ system \eqref{2ch-n+1}, then the corresponding $\big(Q(\tau, y),\,P(\tau, y)\big)$ solves the (A2CH)$_{-n}$ system \eqref{akns--n}.

Finally, if $\big(m(t, x),\,\rho(t, x)\big)$ solves the (2CH)$_0$ system, then the corresponding $\big(Q(\tau, y),\,P(\tau, y)\big)$ satisfies
\begin{eqnarray*}
\begin{aligned}
\begin{pmatrix}
              Q\\
              P
              \end{pmatrix}_\tau=-\overline{\mathcal{R}}\mathbf{A}^{-1}\begin{pmatrix}
              0\\
             0
              \end{pmatrix}=\begin{pmatrix}
              -Q_y\\
             -P_y
              \end{pmatrix},
              \end{aligned}
\end{eqnarray*}
which implies that $\big(Q(\tau, y),\,P(\tau, y)\big)$ is the solution of the (A2CH)$_{1}$ system.
\end{proof}

\subsection{The correspondence between the Hamiltonian functionals of the 2CH and A2CH hierarchies}

In this subsection, we investigate the correspondence between the Hamiltonian functionals appearing in the 2CH and A2CH hierarchies. We first consider the effect of the Liouville transformation on the Hamiltonian operators.
To do this, we need some relevant results found in \cite{olv88,olv2}. In \cite{olv88}, a general transformation formula for scalar Hamiltonian operators under a change of variables was provided, which can be directly generalized to the multi-component case.


Consider an n-component system of equations
\begin{eqnarray}\label{sys-m}
\mathbf{m}_t=\mathbf{K}(\mathbf{m}),\qquad \mathbf{m}=\big(m_1(t, x)\,,\ldots,m_n(t, x)\big)^T,
\end{eqnarray}
where $\mathbf{K}(\mathbf{m})=\big(K_1(\mathbf{m})\,,\ldots,\,K_n(\mathbf{m})\big)^T\,\in \mathcal{A}^n$ is an $n$-component  differential function depending on $\mathbf{m}$ and their $x$-derivatives up to a given order. Assume that system \eqref{sys-m} and another n-component system involving the dependent variable $\mathbf{Q}$:
\begin{eqnarray}\label{sys-Q}
\mathbf{Q}_\tau=\overline{\mathbf{K}}(\mathbf{Q}),\qquad \mathbf{Q}=\big(Q_1(\tau, y)\,,\ldots,Q_n(\tau, y)\big)^T,
\end{eqnarray}
where $\overline{\mathbf{K}}(\mathbf{Q})=\big(\overline{K}_1(\mathbf{Q})\,,\ldots,\,\overline{K}_n(\mathbf{Q})\big)^T\,\in \mathcal{A}^n$, are related by the following coordinate transformation
\begin{eqnarray}
\begin{aligned}\label{tran}
y=\int^x\!\Lambda(\mathbf{m})\,\mathrm{d}\xi\,\equiv I(x,\,\mathbf{m}), \qquad \tau=t, \qquad
&Q_i=F_i(\mathbf{m}), \qquad i=1, \ldots, n,
\end{aligned}
\end{eqnarray}
where $\Lambda(\mathbf{m})$ and $F_i(\mathbf{m})\in \mathcal{A}$.

\begin{theorem}\label{t2}
Let  $\mathcal{D}(\mathbf{m})$ be a Hamiltonian operator of system \eqref{sys-m}. Suppose $\mathbf{m}(t, x)$ and $\mathbf{Q}(\tau, y)$ satisfy \eqref{sys-m} and \eqref{sys-Q} respetively, and are related by the coordinate transformation \eqref{tran}. Then the corresponding Hamiltonian operator $\overline{\mathcal{D}}(\mathbf{Q})$ of system \eqref{sys-Q} takes the form
\begin{equation}\label{rela-haop-n}
\overline{\mathcal{D}}(\mathbf{Q})=\Lambda(\mathbf{m})^{-1}\mathbf{T}\,\mathcal{D}(\mathbf{m})\mathbf{T}^\ast,
\end{equation}
where $\mathbf{T}$ is the $n\times n$ matrix operator with entries
\begin{equation}\label{T}
\big(\mathbf{T}\big)_{i, j}=\Lambda(\mathbf{m})\FD{F_i,m_j}-D_xQ_i\,\FD{I,m_j}, \qquad  i,\,j=1,\ldots, n,
\end{equation}
and
$\mathbf{T}^\ast$ is its (formal) $L^2$ adjoint.  Here, $D_x$ denotes the total derivative with respect to $x$, while $\FD{I,m_j}$ and $\FD{F_i,m_j}$ are the  Fr\'echet derivatives of $I$ and $F_i$ with respect to $m_j$.
\end{theorem}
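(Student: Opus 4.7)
The plan is to generalize the scalar transformation formula of \cite{olv88} to the $n$-component setting by exploiting the invariance of the Poisson bracket under the change of variables \eqref{tran}. The two main ingredients I would establish are: (i) the chain-rule relation between the evolution vector fields $\mathbf{m}_t$ and $\mathbf{Q}_\tau$, giving the ``forward'' action of $\mathbf{T}$, and (ii) the dual relation between the variational derivatives $\delta H/\delta\mathbf{m}$ and $\delta\overline{H}/\delta\mathbf{Q}$ of a single functional $H=\overline{H}$, giving the ``adjoint'' action $\mathbf{T}^*$. Combining these with $\mathbf{m}_t=\mathcal{D}\,\delta H/\delta\mathbf{m}$ and the Jacobian factor $dy=\Lambda\,dx$ will directly produce \eqref{rela-haop-n}.

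For the chain rule, I would start from the observation that, for any smooth $\phi(\tau,y)$ realized as $\tilde\phi(t,x)=\phi(t,I(x,\mathbf{m}(t,x)))$, one has $D_x\tilde\phi=\Lambda\,\phi_y$ and $D_t\tilde\phi=\phi_\tau+(D_tI)\,\phi_y$, so that $\phi_\tau=D_t\tilde\phi-\Lambda^{-1}(D_tI)\,D_x\tilde\phi$. Applying this to $Q_i=F_i(\mathbf{m})$ and writing $D_tF_i=\sum_j \FD{F_i,m_j}\,(m_j)_t$ and similarly for $D_tI$, I obtain
\begin{equation*}
\Lambda\,(Q_i)_\tau = \sum_j\Bigl(\Lambda\,\FD{F_i,m_j}-D_xQ_i\,\FD{I,m_j}\Bigr)(m_j)_t=\sum_j\mathbf{T}_{i,j}\,(m_j)_t,
\end{equation*}
that is, $\Lambda\,\mathbf{Q}_\tau=\mathbf{T}\,\mathbf{m}_t$.

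For the variational derivatives, I would regard a single functional in two ways: $H=\int h(\mathbf{m},\mathbf{m}_x,\dots)\,dx=\int \overline{h}(\mathbf{Q},\mathbf{Q}_y,\dots)\,dy=\overline{H}$, where $h=\Lambda\,\overline{h}$. Varying $\mathbf{m}\mapsto\mathbf{m}+\varepsilon\delta\mathbf{m}$ at fixed $(t,x)$ shifts the new $y$-coordinate by $\delta y=\FD{I}[\delta\mathbf{m}]$, so that the variation of $Q_i$ at fixed $y$ equals $\delta Q_i=\FD{F_i}[\delta\mathbf{m}]-(Q_i)_y\,\delta y=\Lambda^{-1}\sum_j\mathbf{T}_{i,j}[\delta m_j]$. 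Substituting into $\delta\overline{H}=\int(\delta\overline{H}/\delta\mathbf{Q})\cdot\delta\mathbf{Q}\,dy=\int(\delta\overline{H}/\delta\mathbf{Q})\cdot\Lambda^{-1}\mathbf{T}[\delta\mathbf{m}]\,\Lambda\,dx$ and integrating by parts gives $\delta H/\delta\mathbf{m}=\mathbf{T}^*\,\delta\overline{H}/\delta\mathbf{Q}$, where $\mathbf{T}^*$ is the matrix whose $(i,j)$ entry is the formal adjoint of $\mathbf{T}_{j,i}$.

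Assembling the two pieces: if $\mathbf{m}_t=\mathcal{D}(\mathbf{m})\,\delta H/\delta\mathbf{m}$, then
\begin{equation*}
\mathbf{Q}_\tau=\Lambda^{-1}\mathbf{T}\,\mathbf{m}_t=\Lambda^{-1}\mathbf{T}\,\mathcal{D}(\mathbf{m})\,\mathbf{T}^*\,\frac{\delta\overline{H}}{\delta\mathbf{Q}},
\end{equation*}
which, because $H=\overline{H}$ is arbitrary, forces $\overline{\mathcal{D}}(\mathbf{Q})=\Lambda^{-1}\,\mathbf{T}\,\mathcal{D}(\mathbf{m})\,\mathbf{T}^*$, i.e. \eqref{rela-haop-n}. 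The Hamiltonian properties of $\overline{\mathcal{D}}$ (skew-adjointness and Jacobi identity) follow for free from the fact that the construction expresses the same Poisson bracket in the new coordinates. The main technical obstacle I anticipate is step (ii): the $\mathbf{m}$-dependence of both $\Lambda$ and the transformation $y=I(x,\mathbf{m})$ forces one to keep careful account of the ``reparametrization'' contribution $-D_xQ_i\,\FD{I,m_j}$ in $\mathbf{T}$, and to track how the Jacobian $\Lambda$ combines with the $L^2(dx)$ versus $L^2(dy)$ pairings when passing to the adjoint. Once that bookkeeping is done consistently, the remainder of the argument is algebraic.
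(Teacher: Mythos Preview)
Your proposal is correct and follows essentially the same route as the paper: derive the chain-rule identity $\mathbf{Q}_\tau=\Lambda^{-1}\mathbf{T}\,\mathbf{m}_t$, invoke the variational-derivative identity $\delta\mathcal{H}(\mathbf{m})=\mathbf{T}^*\,\delta\bar{\mathcal{H}}(\mathbf{Q})$, and combine them. The only cosmetic differences are that the paper obtains the chain-rule step by differentiating the implicit relation $B_i(\mathbf{Q},\mathbf{m})=Q_i-F_i(\mathbf{m})=0$ rather than by your direct computation, and it quotes the variational-derivative identity as a separate lemma (Lemma~\ref{l2.3}, from \cite{olv2}) instead of deriving it inline as you do in step~(ii).
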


We adapt the proof of Theorem \ref{t2} in the scalar case  given in \cite{olv88}, using the following lemma \cite{olv2}.

\begin{lemma}\label{l2.3}
Let $\mathcal{H}(\mathbf{m})$ and $\bar{\mathcal{H}}(\mathbf{Q})$ be two functionals related by the change of variables \eqref{tran}. Then their corresponding variational derivatives $\delta\mathcal{H}(\mathbf{m})=\big(\delta \mathcal{H}/\delta m_1\,,\ldots,\,\delta \mathcal{H}/\delta m_n\big)^T$ and  $\delta\bar{\mathcal{H}}(\mathbf{Q})=\big(\delta \bar{\mathcal{H}}/\delta Q_1\,,\ldots,\,\delta \bar{\mathcal{H}}/\delta Q_n\big)^T$ satisfy
\begin{equation}\label{vdn}
\delta\mathcal{H}(\mathbf{m})=\mathbf{T}^\ast\,\delta \bar{\mathcal{H}}(\mathbf{Q}),
\end{equation}
where $\mathbf{T}^\ast$ is the formal adjoint  of the operator $\mathbf{T}$ given in \eqref{T}.
\end{lemma}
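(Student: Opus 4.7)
The plan is to interpret the hypothesis that ``$\mathcal{H}(\mathbf{m})$ and $\bar{\mathcal{H}}(\mathbf{Q})$ are related by the change of variables'' as the assertion that these two functionals coincide whenever $\mathbf{Q}$ is obtained from $\mathbf{m}$ via \eqref{tran}, and then to compute the first variation of both sides with respect to an arbitrary smooth, compactly supported test variation $\delta\mathbf{m}$. The identity \eqref{vdn} will follow by reading off the coefficient of each $\delta m_j$ and invoking the fundamental lemma of the calculus of variations.

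The crucial observation is that perturbing $\mathbf{m}\mapsto \mathbf{m}+\delta\mathbf{m}$ at fixed $x$ induces \emph{two} separate effects on $\mathbf{Q}(\tau,y)$ at fixed $y$. The first is the direct effect $\FD{F_i}[\delta\mathbf{m}]$ coming from $Q_i=F_i(\mathbf{m})$. The second is an indirect effect arising because the coordinate $y=I(x,\mathbf{m})$ itself depends on $\mathbf{m}$: the $x$-value that pulls back to a prescribed $y$ must shift by $\delta x=-\FD{I}[\delta\mathbf{m}]/\Lambda$, using $\partial_x I=\Lambda$. Taylor-expanding $F_i(\mathbf{m})$ around the shifted $x$ and using the chain rule $D_x=\Lambda D_y$ yields, at fixed $y$,
$$
\delta Q_i \;=\; \FD{F_i}[\delta\mathbf{m}]\;-\;\Lambda^{-1}(D_x Q_i)\,\FD{I}[\delta\mathbf{m}].
$$

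Substituting this into $\delta\bar{\mathcal{H}}=\int\sum_i(\delta\bar{\mathcal{H}}/\delta Q_i)\,\delta Q_i\,\mathrm{d}y$ and changing variables via $\mathrm{d}y=\Lambda\,\mathrm{d}x$, the Jacobian $\Lambda$ absorbs the $\Lambda^{-1}$ and produces
$$
\delta\bar{\mathcal{H}} \;=\; \int \sum_{i,j} \frac{\delta\bar{\mathcal{H}}}{\delta Q_i}\,(\mathbf{T})_{i,j}[\delta m_j]\,\mathrm{d}x,
$$
with $(\mathbf{T})_{i,j}$ exactly as in \eqref{T}. I would then transfer each differential operator $(\mathbf{T})_{i,j}$ off $\delta m_j$ by iterated integration by parts, the boundary terms vanishing by the compact support of $\delta\mathbf{m}$. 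By the very definition of the matrix formal adjoint (namely $(\mathbf{T}^\ast)_{j,i}=[(\mathbf{T})_{i,j}]^\ast$), this recasts the integral as $\int\sum_j[\mathbf{T}^\ast\,\delta\bar{\mathcal{H}}(\mathbf{Q})]_j\,\delta m_j\,\mathrm{d}x$. Equating with $\delta\mathcal{H}=\int\sum_j(\delta\mathcal{H}/\delta m_j)\,\delta m_j\,\mathrm{d}x$ and invoking the arbitrariness of $\delta\mathbf{m}$ delivers \eqref{vdn}.

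The main obstacle I expect is the careful bookkeeping of the indirect coordinate-shift contribution, which is precisely what produces the second term $-D_x Q_i\,\FD{I,m_j}$ in $\mathbf{T}$: one has to distinguish variations taken at fixed $x$ from those taken at fixed $y$ and keep track of the Jacobian $\Lambda$ consistently with the direction in which the integration variable is converted. A secondary technical point is that the Fr\'echet derivative $\FD{I,m_j}$ arising from the nonlocal quantity $I(x,\mathbf{m})=\int^x\Lambda\,\mathrm{d}\xi$ contains a $D_x^{-1}$ factor; this is handled in the standard way by using the formal adjoint identity $(D_x^{-1})^\ast=-D_x^{-1}$ on the compactly supported test variations, and it is the only point at which the nonlocality of \eqref{tran} enters the proof.
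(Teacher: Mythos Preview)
Your argument is correct and is exactly the standard derivation one would expect: compute the induced variation $\delta Q_i$ at fixed $y$ as the sum of the direct Fr\'echet contribution $\FD{F_i}[\delta\mathbf m]$ and the indirect coordinate-shift term $-\Lambda^{-1}(D_xQ_i)\,\FD{I}[\delta\mathbf m]$, change the measure via $\mathrm{d}y=\Lambda\,\mathrm{d}x$, and integrate by parts to pass to the adjoint. The bookkeeping you flag (fixed-$x$ versus fixed-$y$ variations, and the formal adjoint of the nonlocal $D_x^{-1}$ piece inside $\FD{I,m_j}$) is handled correctly.

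For comparison with the paper: the authors do \emph{not} give their own proof of this lemma. They simply cite it from \cite{olv2} (Olver, \emph{Applications of Lie Groups to Differential Equations}), treating it as a known multi-component generalization of the scalar change-of-variables formula for variational derivatives established in \cite{olv88}. Your write-up is essentially the proof one finds there, so there is no methodological difference to discuss --- you have supplied what the paper outsources.
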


\begin{proof} [\bf{Proof of Theorem \ref{t2}}]
If \begin{equation}\label{sys-m-1}\mathbf{m}_t=\mathbf{K}(\mathbf{m})=\mathcal{D}\,\delta
\mathcal{H}(\mathbf{m})
\end{equation}
is a Hamiltonian system in the $\mathbf{m}(t, x)$ variable, then the corresponding evolution equation in the $\mathbf{Q}(\tau, y)$ variable will also be Hamiltonian
\begin{equation}\label{sys-Q-1}
\mathbf{Q}_\tau=\overline{\mathbf{K}}(\mathbf{Q})=\overline{\mathcal{D}}\,\delta
\bar{\mathcal{H}}(\mathbf{Q}),
\end{equation}
with the Hamiltonian operator $\overline{\mathcal{D}}$ given in \eqref{rela-haop-n}. To prove this,
we first write the transformation \eqref{tran} in implicit form
\begin{equation}\label{batr-vec}
\mathbf{B}(\mathbf{Q},  \mathbf{m})=\big(B_1(\mathbf{Q},  \mathbf{m})\,,\ldots,\,B_n(\mathbf{Q},  \mathbf{m})\big)^T=\mathbf{0},
\end{equation}
where $B_i(\mathbf{Q},  \mathbf{m})=Q_i-F_i( \mathbf{m})$ for $ i=1,\ldots,n.$

Next, taking the $t$-derivative of each equation in the system \eqref{batr-vec} yields
\begin{eqnarray*}
\begin{aligned}
\sum_{k=1}^{n}\,\left( \FD{B_i,m_k}\,m_{k, t}+\FD{B_i,Q_k}\,Q_{k, \tau}\right)=0,\quad i=1\,,\ldots,\,n,
\end{aligned}
\end{eqnarray*}
where $\FD{B_i,m_k}$ and $\FD{B_i,Q_k}$ are the Fr\'echet derivatives of $B_i$ with respect to $m_k$ and $Q_k$, respectively.
For convenience, the above expression can be written in vectorial form:
\begin{eqnarray*}
\begin{aligned}\label{batr-vec-Qm}
\mathbf{B}_{\mathbf{m}}\,\mathbf{m}_t+\mathbf{B}_{\mathbf{Q}}\,\mathbf{Q}_\tau=\mathbf{0}.
\end{aligned}
\end{eqnarray*}
From the transformation \eqref{batr-vec}, $\mathbf{B}_{\mathbf{m}}=-\Lambda(\mathbf{m})^{-1}\mathbf{T}$. This immediately gives rise to
\begin{eqnarray*}
\begin{aligned}\label{batr-eq-kg'}
\mathbf{Q}_\tau=\Lambda(\mathbf{m})^{-1}\mathbf{T}\,\mathbf{m}_t.
\end{aligned}
\end{eqnarray*}
Finally, assembing \eqref{sys-m-1} and \eqref{sys-Q-1}, together with formula \eqref{vdn} verifies \eqref{rela-haop-n}, completing the proof of the theorem.
\end{proof}

A direct application of the above theorem is to derive a compatible pair of Hamiltonian operators for the A2CH integrable hierarchy, which is generated by the recursion operator $\overline{\mathcal{R}}$ in \eqref{reop-akns}, from the given Hamiltonian operators $\mathcal{K}$ and $\mathcal{J}$ \eqref{haop-2ch} of the 2CH integrable hierarchy. Indeed, due to transformation \eqref{liou-2ch}, the function $\Lambda(\cdot)$ given in \eqref{tran} takes the form $\Lambda(m, \rho)=\rho$. Additionally, the operator defined in \eqref{T} is given by
\begin{eqnarray*}
\begin{aligned}
\mathbf{T}=-\frac{1}{2}\begin{pmatrix}
              0 & \rho(\partial_y^2+4Q+2Q_y\partial_y^{-1})\rho^{-1}\\
            2\rho^{-1} & \rho(4P+2P_y\partial_y^{-1})\rho^{-1}\\
              \end{pmatrix}
\end{aligned}
\end{eqnarray*}
with adjoint
\begin{eqnarray}\label{T-adjoint}
\begin{aligned}
\mathbf{T}^\ast=-\frac{1}{2}\begin{pmatrix}
       0 & 2\rho^{-1}\\
            \partial_y^2+4Q-2\partial_y^{-1}Q_y & 4P-2\partial_y^{-1}P_y\\
              \end{pmatrix}.
\end{aligned}
\end{eqnarray}
We conclude that the resulting pair of Hamiltonian operators for the A2CH hierarchy are
\begin{eqnarray}
\begin{aligned}\label{haop-A2ch}
\overline{\mathcal{K}}&=\Lambda^{-1}\mathbf{T}\mathcal{J}\mathbf{T}^\ast\\
&=\frac{1}{4}\begin{pmatrix}
             \mathcal{L}\,\partial_y^{-1}\mathcal{L} & 2\mathcal{L}\,\partial_y^{-1}(P\partial_y+\partial_y P)\\
              2(P\partial_y+\partial_y P)\partial_y^{-1}\mathcal{L} & 4(P\partial_y+\partial_y P)\partial_y^{-1}(P\partial_y+\partial_y P)+2\mathcal{L}\\
              \end{pmatrix},\\
\overline{\mathcal{J}}&=\Lambda^{-1}\mathbf{T}\mathcal{K}\mathbf{T}^\ast=\frac{1}{2}\begin{pmatrix}
       0 & \mathcal{L}\\
            \mathcal{L} & 2(P\partial_y+\partial_y P)\\
              \end{pmatrix},
\end{aligned}
\end{eqnarray}
where $\mathcal{L}=\partial_y^3+2Q\partial_y+2\partial_y Q$, and hence the A2CH hierarchy can be written in bi-Hamiltonian form. In the positive direction,
\begin{equation*}
\begin{pmatrix}
             Q\\
             P
              \end{pmatrix}_\tau=\overline{\mathbf{K}}_n=\overline{\mathcal{K}}\delta \bar{\mathcal{H}}_{n-1}=\overline{\mathcal{J}}\delta \bar{\mathcal{H}}_n, \qquad \delta \bar{\mathcal{H}}_n=\left(\frac{\delta \bar{\mathcal{H}}_n}{\delta Q},\;\frac{\delta \bar{\mathcal{H}}_n}{\delta P}\right)^T,\qquad n\geq 1,
\end{equation*}
with, for example,
\begin{equation}\label{a2ch-fun-12}
\qxeq{\bar{\mathcal{H}}_{1}=- \int  P\;\mathrm{d}y, \\\bar{\mathcal{H}}_{2}=-\int  \left(\frac{1}{2}P^2+2Q\right)\;\mathrm{d}y}.
\end{equation}

\begin{remark}
It follows from \eqref{reop-akns} and \eqref{haop-A2ch} that
\begin{eqnarray*}
\begin{aligned}
\overline{\mathcal{R}}=\overline{\mathcal{K}}\,\overline{\mathcal{J}}^{-1}=\frac{1}{2}\begin{pmatrix}
              0 & \partial_y^2+2Q+Q_y\partial_y^{-1}\\
              -4 & 4P+2P_y\partial_y^{-1}\\
              \end{pmatrix}
              =\Lambda^{-1}\mathbf{T} \mathcal{J}\,\mathcal{K}^{-1}\mathbf{T}^{-1}\Lambda.
\end{aligned}
\end{eqnarray*}
This formula provides an alternative decomposition of the recursion operator $\overline{\mathcal{R}}$, differing from the one given in Lemma \ref{l2.2}.
\end{remark}

Now, with the two pairs of Hamiltonian operators $\{\mathcal{K}, \mathcal{L}\}$ and $\{\overline{\mathcal{K}}, \overline{\mathcal{L}}\}$ in hand, Magri's theorem enables us to  recursively construct  an infinite hierarchy of Hamiltonian functionals. In the 2CH setting, these are determined by the recursive formula
\begin{equation}\label{2chclhie}
\mathcal{K}\,\delta \mathcal{H}_{n-1}=\mathcal{J}\,\delta \mathcal{H}_n,\qquad n\in  \mathbb{Z},
\end{equation}
whereas for the A2CH hierarchy they satisfy
\begin{equation}\label{aknsclhie}
\overline{\mathcal{K}}\delta \bar{\mathcal{H}}_{n-1}=\overline{\mathcal{J}} \delta \bar{\mathcal{H}}_{n},\qquad n\in \mathbb{Z}.
\end{equation}
In order to establish the correspondence between the respective hierarchies of Hamiltonian functionals $\mathcal{H}_{n}$ and  $\bar{\mathcal{H}}_{n}$, we require a formula which clarifies the correspondence between their variational derivatives.

\begin{lemma}\label{l2.4}
Let $\{\mathcal{H}_n\}$ and $\{\bar{\mathcal{H}}_n\}$  be the hierarchies of Hamiltonian functionals determined by the recurrence formulas \eqref{2chclhie} and \eqref{aknsclhie}, respectively. Then their corresponding variational derivatives satisfy
\begin{equation}\label{eq-l2.4}
\delta \mathcal{H}_{-(n+1)}(m, \rho)=-\mathcal{K}^{-1}\mathbf{A}\overline{\mathcal{J}}\,\delta  \bar{\mathcal{H}}_n(Q, P), \qquad 0\neq n\in \mathbb{Z}.
\end{equation}
\end{lemma}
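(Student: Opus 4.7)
The plan is to leverage the flow-level Liouville correspondence established in Theorem~\ref{t1} together with the operator conjugation recorded in Lemma~\ref{l2.2}. From the proof of Theorem~\ref{t1} we have the kinematic identity
\begin{equation*}
\begin{pmatrix} Q \\ P \end{pmatrix}_\tau = -\,\overline{\mathcal{R}}\,\mathbf{A}^{-1}\begin{pmatrix} m \\ \rho \end{pmatrix}_t
\end{equation*}
whenever $(m,\rho)$ and $(Q,P)$ are linked by the Liouville transformation \eqref{liou-2ch}. Specializing to the pair of flows that correspond under Theorem~\ref{t1} (taking $n'=-n$ in the theorem's statement), I would set $(m,\rho)_t=\mathbf{K}_{1-n}$, which forces $(Q,P)_\tau=\overline{\mathbf{K}}_{n}$, and hence produces the operator-level identity $\overline{\mathbf{K}}_n=-\overline{\mathcal{R}}\,\mathbf{A}^{-1}\mathbf{K}_{1-n}$.

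Next, I would rewrite each side in bi-Hamiltonian form, using $\mathbf{K}_{1-n}=\mathcal{K}\,\delta\mathcal{H}_{-n}$ and $\overline{\mathbf{K}}_n=\overline{\mathcal{J}}\,\delta\bar{\mathcal{H}}_n$, and then left-multiply the identity by $\mathbf{A}$. The key algebraic input is the conjugation formula $\mathbf{A}\,\overline{\mathcal{R}}\,\mathbf{A}^{-1}=\mathcal{J}\mathcal{K}^{-1}$, which is exactly the case $n=1$ of \eqref{reop-eq-2ch}. After this substitution, the $\mathcal{K}^{-1}$ produced by the conjugation cancels against the outer $\mathcal{K}$ in $\mathcal{K}\,\delta\mathcal{H}_{-n}$, collapsing the right-hand side to $-\mathcal{J}\,\delta\mathcal{H}_{-n}$. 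A final application of the 2CH recurrence \eqref{2chclhie} to rewrite $\mathcal{J}\,\delta\mathcal{H}_{-n}=\mathcal{K}\,\delta\mathcal{H}_{-(n+1)}$, followed by left-multiplication with $\mathcal{K}^{-1}$, yields precisely \eqref{eq-l2.4}.

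The remaining technical items are minor. One must ensure that each variational derivative appearing in the chain is well-defined as an element of $\mathcal{A}^{2}$, which is guaranteed by Magri's theorem once the bi-Hamiltonianity of both pairs $(\mathcal{K},\mathcal{J})$ and $(\overline{\mathcal{K}},\overline{\mathcal{J}})$ is in hand; and one has to check that the same computation covers both $n\geq 1$ (negative 2CH flows paired with positive A2CH flows) and $n\leq -1$ (positive 2CH flows paired with negative A2CH flows). Theorem~\ref{t1} supplies the flow pairing in either regime, and the algebraic manipulation is identical. The excluded value $n=0$ corresponds to the trivial A2CH symmetry $\overline{\mathbf{K}}_0=\mathbf{0}$, where the chain of identities degenerates; avoiding this edge case is exactly what the hypothesis $n\neq 0$ is designed for. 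Beyond this bookkeeping, the proof requires no calculation more intricate than the operator substitutions outlined above, so I do not anticipate a genuine obstacle — the main point is simply to track the index shifts between the 2CH and A2CH hierarchies carefully.
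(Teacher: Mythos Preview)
Your approach for $n\geq 1$ is correct and amounts to a clean repackaging of the paper's induction via the flow correspondence of Theorem~\ref{t1}: once the kinematic identity and the conjugation $\mathbf{A}\,\overline{\mathcal{R}}\,\mathbf{A}^{-1}=\mathcal{J}\mathcal{K}^{-1}$ are in hand, your algebraic chain does reproduce \eqref{eq-l2.4}. The paper instead anchors this direction with an explicit computation of $\delta\mathcal{H}_{-2}$ from the Casimir $\delta\mathcal{H}_C$ and then inducts; your route avoids that base-case calculation by absorbing it into Theorem~\ref{t1}.

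There is, however, a genuine gap for $n\leq -1$. In that regime the corresponding 2CH flow is \emph{positive}, and the proof of Theorem~\ref{t1} only establishes that the transformed $(Q,P)_\tau$ satisfies the \emph{implicit} negative A2CH equation $\overline{\mathcal{R}}^{|n|}(Q,P)_\tau=\mathbf{0}$; it does not identify $(Q,P)_\tau$ with the specific bi-Hamiltonian vector $\overline{\mathcal{J}}\,\delta\bar{\mathcal{H}}_n$. Your step ``$(m,\rho)_t=\mathbf{K}_{1-n}$ forces $(Q,P)_\tau=\overline{\mathbf{K}}_n$'' followed by ``$\overline{\mathbf{K}}_n=\overline{\mathcal{J}}\,\delta\bar{\mathcal{H}}_n$'' therefore assumes what is to be proved: all that Theorem~\ref{t1} supplies is that $\overline{\mathcal{J}}^{-1}(Q,P)_\tau$ lies in the kernel of $\overline{\mathcal{K}}$, and pinning it down to the particular Casimir $\delta\bar{\mathcal{H}}_{-1}$ (with the correct normalisation) is a separate matter.

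The paper closes this gap by treating $n=-1$ independently. It rewrites \eqref{eq-l2.4} as $\delta\bar{\mathcal{H}}_{-1}=-\overline{\mathcal{J}}^{-1}\mathbf{A}^{-1}\mathcal{K}\,\delta\mathcal{H}_0$ and verifies directly that the right-hand side $h$ satisfies $\overline{\mathcal{K}}h=\mathbf{0}$, using the conjugation identity together with the specific fact that $\delta\mathcal{H}_0$ is a constant vector and hence annihilated by $\mathcal{J}$. Only after this anchor is in place does the induction for $n\leq -1$ run. Your proposal needs precisely this extra check; without it, the claim that ``the algebraic manipulation is identical'' for $n\leq -1$ does not stand.
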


\begin{proof}
We first prove \eqref{eq-l2.4} for $n\geq 1$ by induction. Since in the 2CH setting, $\mathcal{H}_{-1}=\mathcal{H}_C(m, \rho)$, the Casimir functional given in \eqref{2ch-cas}, then $\delta \mathcal{H}_{-1}=\delta \mathcal{H}_C=\big(\rho^{-1},\,-m\rho^{-2}\big)^T$. Using \eqref{op1}  and \eqref{2chclhie}, we have
\begin{eqnarray*}
\delta \mathcal{H}_{-2}=\mathcal{K}^{-1}\mathcal{J}\delta \mathcal{H}_{-1}=\mathcal{K}^{-1}\mathbf{A}\LC Q_y, \,P_y\RC^{T}=-\mathcal{K}^{-1}\mathbf{A}\overline{\mathcal{J}}\delta \bar{\mathcal{H}}_1,
\end{eqnarray*}
verifying that \eqref{eq-l2.4} holds for $n=1$. Assume now  \eqref{eq-l2.4} holds for $n=k$ with some integer $k\geq 1$; in other words,
\begin{equation*}
\delta \mathcal{H}_{-(k+1)}=-\mathcal{K}^{-1}\mathbf{A}\overline{\mathcal{J}}\,\delta  \bar{\mathcal{H}}_k.
\end{equation*}
Then, for $n=k+1$,
\begin{eqnarray*}
\begin{aligned}
\delta \mathcal{H}_{-(k+2)}&=\mathcal{K}^{-1}\mathcal{J}\delta \mathcal{H}_{-(k+1)}=-\mathcal{K}^{-1}\mathcal{J}\mathcal{K}^{-1}\mathbf{A}\overline{\mathcal{J}}\,\delta  \bar{\mathcal{H}}_k\\
&=-\mathcal{K}^{-1}\mathcal{J}\mathcal{K}^{-1}\mathbf{A}\overline{\mathcal{J}}\,\overline{\mathcal{K}}^{-1}\overline{\mathcal{J}}\,\delta  \bar{\mathcal{H}}_{k+1}=-\mathcal{K}^{-1}\mathbf{A}\overline{\mathcal{J}}\,\delta \bar{\mathcal{H}}_{k+1},
\end{aligned}
\end{eqnarray*}
where we have made use of the relation \eqref{reop-eq-2ch} with $n=1$. This proves \eqref{eq-l2.4} for each integer $n\geq 1$.

Next, for the case of  $n=-1$, note first that \eqref{eq-l2.4} is equivalent to
\begin{equation}\label{eq2-l2.4-1}
\delta \bar{\mathcal{H}}_{-1}=-\overline{\mathcal{J}}^{-1}\mathbf{A}^{-1}\mathcal{K}\,\delta \mathcal{H}_0.
\end{equation}
Since $\bar{\mathcal{H}}_{-1}$ is a Casimir functional of the Hamiltonian operator $\overline{\mathcal{K}}$, in order to prove \eqref{eq2-l2.4-1}, it suffices to verify that the function $h=-\overline{\mathcal{J}}^{-1}\mathbf{A}^{-1}\mathcal{K}\,\delta \mathcal{H}_0$ satisfies $\overline{\mathcal{K}}\, h=\mathbf{0}$. Indeed, using \eqref{reop-eq-2ch} with $n=1$ once again, one has
\begin{equation*}\label{eq2-l2.4}
\overline{\mathcal{K}}\, h=\mathbf{A}^{-1}\mathcal{J}\mathcal{K}^{-1}\mathbf{A}\mathbf{A}^{-1}\mathcal{K}\,\delta \mathcal{H}_0=\mathbf{A}^{-1}\mathcal{J}\,\delta \mathcal{H}_0=\mathbf{0},
\end{equation*}
where we take advantage of the property that $\delta \mathcal{H}_0$ is a constant vector.
Next, assuming that \eqref{eq-l2.4} holds for $n=-k \leq -1$, we infer that
\begin{eqnarray*}
\begin{aligned}
\delta \mathcal{H}_{-(-k-1+1)}&=\delta \mathcal{H}_k=\mathcal{J}^{-1}\mathcal{K}\delta \mathcal{H}_{k-1}=-\mathcal{J}^{-1}\mathbf{A}\overline{\mathcal{J}}\delta \bar{\mathcal{H}}_{-k}\\
&=-\mathcal{J}^{-1}\mathbf{A}\overline{\mathcal{K}}\,\delta \bar{\mathcal{H}}_{-k-1}=-\mathcal{K}^{-1}\mathbf{A}\overline{\mathcal{J}}\,\delta \bar{\mathcal{H}}_{-k-1},
\end{aligned}
\end{eqnarray*}
which, by induction, establishes \eqref{eq-l2.4} for each $n\leq -1$ and thus proves the lemma in general.
\end {proof}

Finally, given that $\big(m(t, x), \rho(t, x)\big)$ and $\big(Q(\tau, y), P(\tau, y)\big)$ are related by the transformation \eqref{liou-2ch}, we define the functional
\begin{equation*}
\mathcal{G}_n(Q, P)\equiv \mathcal{H}_{n}(m, \rho).
\end{equation*}
A direct application of Lemma \ref{l2.3} leads to
\begin{equation*}\label{t3-1}
\delta \mathcal{H}_{n}(m, \rho)=\mathbf{T}^\ast \delta \mathcal{G}_n(Q,  P),
\end{equation*}
with the operator $\mathbf{T}^\ast$ defined in \eqref{T-adjoint}. On the other hand, by Lemma \ref{l2.4}, we have
\begin{equation*}
\delta \mathcal{H}_{n}(m, \rho)=-\mathcal{K}^{-1}\mathbf{A}\overline{\mathcal{J}}\,\delta \bar{\mathcal{H}}_{-(n+1)}(Q, P),
\end{equation*}
which, together with \eqref{2chclhie} and relation \eqref{reop-eq-2ch}, yields
\begin{equation*}\label{t3-2}
\delta \mathcal{H}_{n}(m, \rho)= -\mathcal{J}^{-1}\mathbf{A}\overline{\mathcal{J}}\,\delta \bar{\mathcal{H}}_{-n}(Q, P).
\end{equation*}
Furthermore, a direct calculation shows that the operator $\mathbf{T}^\ast$ admits the following decomposition
\begin{equation*}
\mathcal{J}^{-1}\mathbf{A}\overline{\mathcal{J}}=-\mathbf{T}^\ast,
\end{equation*}
which implies
\begin{equation*}
\mathcal{H}_n(m, \rho)=\mathcal{G}_n(Q, P)=\bar{\mathcal{H}}_{-n}(Q, P).
\end{equation*}
The following theorem is thus proved.

\begin{theorem}\label{t3}
Under the Liouville transformation \eqref{liou-2ch}, for each nonzero integer $n$, the Hamiltonian functionals $\mathcal{H}_n(m, \rho)$ of the 2CH hierarchy given by \eqref{2chclhie} are related to the Hamiltonian functionals $\bar{\mathcal{H}}_{n}(Q, P)$ of the A2CH hierarchy given by \eqref{aknsclhie}, according to
\begin{equation*}\label{cl-rela-dp1}
\mathcal{H}_n(m, \rho)=\bar{\mathcal{H}}_{-n}(Q, P), \qquad 0\neq n\in \mathbb{Z}.
\end{equation*}
\end{theorem}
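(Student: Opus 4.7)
The plan is to compute $\delta\mathcal{H}_n(m,\rho)$ along two independent paths and then read off the identification of Hamiltonian functionals from the match.

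First, pulling the 2CH Hamiltonian back along the Liouville transformation \eqref{liou-2ch} by setting $\mathcal{G}_n(Q,P) := \mathcal{H}_n(m,\rho)$, Lemma \ref{l2.3} yields the transfer formula $\delta \mathcal{H}_n(m,\rho) = \mathbf{T}^{\ast}\,\delta \mathcal{G}_n(Q,P)$ with $\mathbf{T}^{\ast}$ as in \eqref{T-adjoint}. Second, Lemma \ref{l2.4} applied with its index shifted by one gives
\begin{equation*}
\delta \mathcal{H}_n(m,\rho) = -\mathcal{K}^{-1}\mathbf{A}\overline{\mathcal{J}}\,\delta \bar{\mathcal{H}}_{-(n+1)}(Q,P).
\end{equation*}
The A2CH Magri recursion \eqref{aknsclhie} allows the substitution $\delta \bar{\mathcal{H}}_{-(n+1)} = \overline{\mathcal{K}}^{-1}\overline{\mathcal{J}}\,\delta\bar{\mathcal{H}}_{-n}$, after which the $n=1$ case of the recursion-operator identity \eqref{reop-eq-2ch}, rewritten in the braiding form $\mathcal{K}^{-1}\mathbf{A}\overline{\mathcal{R}}^{-1} = \mathcal{J}^{-1}\mathbf{A}$, collapses the expression to
\begin{equation*}
\delta \mathcal{H}_n(m,\rho) = -\mathcal{J}^{-1}\mathbf{A}\overline{\mathcal{J}}\,\delta \bar{\mathcal{H}}_{-n}(Q,P).
\end{equation*}

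The two routes agree provided we establish the single operator identity
\begin{equation*}
\mathcal{J}^{-1}\mathbf{A}\overline{\mathcal{J}} = -\mathbf{T}^{\ast}.
\end{equation*}
This is a direct computation from the explicit forms of $\mathcal{J}$ in \eqref{haop-2ch}, $\mathbf{A}$ in \eqref{reop-eq-A}, $\overline{\mathcal{J}}$ in \eqref{haop-A2ch}, and $\mathbf{T}^{\ast}$ in \eqref{T-adjoint}: the bottom-left entry collapses immediately using $\partial_x = \rho\,\partial_y$, while the entries involving the nonlocal operator $\mathcal{L} = \partial_y^3 + 2Q\partial_y + 2\partial_y Q$ are handled by invoking the second and third identities of Lemma \ref{l2.1} to convert compositions of the form $\rho^{-a}(\partial_x - \partial_x^3)\rho^{-b}$ into their $y$-counterparts. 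Once the identity is established, combining the two routes yields $\mathbf{T}^{\ast}\,\delta\mathcal{G}_n = \mathbf{T}^{\ast}\,\delta\bar{\mathcal{H}}_{-n}$, whence $\delta\mathcal{G}_n = \delta\bar{\mathcal{H}}_{-n}$ modulo constants in the density (which are harmless for the underlying functional), giving $\mathcal{H}_n(m,\rho) = \mathcal{G}_n(Q,P) = \bar{\mathcal{H}}_{-n}(Q,P)$ as required.

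The main obstacle is the operator identity $\mathcal{J}^{-1}\mathbf{A}\overline{\mathcal{J}} = -\mathbf{T}^{\ast}$. The difficulty is that both $\mathcal{J}^{-1}$, which contains $(\partial_x - \partial_x^3)^{-1}$, and $\overline{\mathcal{J}}$, which contains $\mathcal{L}$, are nonlocal, so the verification cannot be carried out block-by-block in a purely local manner. The correct strategy is to conjugate systematically by $\rho$ using $\partial_x = \rho\,\partial_y$, and then apply Lemma \ref{l2.1} to trade all $x$-derivatives and $m$-multiplications for $y$-derivatives with $Q$- and $P$-coefficients; the factors of $\rho^{\pm 1/2}$ and $\rho^{\pm 1}$ arising from $\mathbf{A}$ and from the scalings built into \eqref{liou-2ch} must be tracked with care for the signs and the nonlocal tails to cancel correctly.
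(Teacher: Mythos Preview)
Your proposal is correct and follows essentially the same route as the paper's own argument: define $\mathcal{G}_n(Q,P)=\mathcal{H}_n(m,\rho)$, invoke Lemma~\ref{l2.3} for one expression of $\delta\mathcal{H}_n$, invoke Lemma~\ref{l2.4} together with the Magri recursion and the $n=1$ case of \eqref{reop-eq-2ch} for the other, and close with the operator identity $\mathcal{J}^{-1}\mathbf{A}\overline{\mathcal{J}}=-\mathbf{T}^\ast$. The only cosmetic difference is that the paper phrases the index shift via \eqref{2chclhie} rather than \eqref{aknsclhie}, but the two are equivalent here; also, the paper treats the identity $\mathcal{J}^{-1}\mathbf{A}\overline{\mathcal{J}}=-\mathbf{T}^\ast$ as a straightforward direct calculation rather than the obstacle you anticipate.
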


For instance, in the case of $n=-1$, using \eqref{liou-2ch}, one can check that
\begin{equation*}
\mathcal{H}_{-1}(m, \rho)= \int  \frac{m}{\rho}\;\mathrm{d}x=-\int P\;\mathrm{d}y=\bar{\mathcal{H}}_{1}(Q, P).
\end{equation*}
In the case of $n=-2$, \eqref{a2ch-fun-12} and \eqref{eq-l2.4} imply
\begin{equation*}
\delta \mathcal{H}_{-2}= \left(-\frac{m}{\rho^3},\,-2Q+\frac{3}{2}P^2\right)^T,
\end{equation*}
and hence
\begin{equation*}\label{cl-rela-dp2}
\mathcal{H}_{-2}(m, \rho)= -\int  \rho \left(\frac{1}{2}P^2+2Q\right)\;\mathrm{d}x=-\int  \left(\frac{1}{2}P^2+2Q\right)\;\mathrm{d}y=\bar{\mathcal{H}}_{2}(Q, P).
\end{equation*}
Both of the preceding two results are in accordance with Theorem \ref{t3}.

\section{The Liouville correspondence for the Geng-Xue hierarchy}

\subsection{The Liouville transformation for the isospectral problem of the Geng-Xue system}
The Lax-pair formulation for the GX system
\begin{equation}\label{gx}
 \begin{cases}
  &m_t+3vu_x m+uv m_x=0,\\
  &n_t+3uv_x n+uv n_x=0,\qquad m=u-u_{xx}, \quad n=v-v_{xx},
 \end{cases}
\end{equation}
takes the form \cite{gx}
\begin{equation}\label{iso-gx}
\eeq{
\mathbf{\Psi}_x
=\begin{pmatrix} 0\; &\lambda\, m\;& 1\\
             0\; &0\; &\lambda\, n\\
             1\; &0\; & 0
               \end{pmatrix}\mathbf{\Psi},\qquad \mathbf{\Psi}
=\begin{pmatrix} \psi_1\\
           \psi_2\\
             \psi_3
               \end{pmatrix}
,\\\mathbf{\Psi}_t
=\begin{pmatrix} -u_xv\; &\lambda^{-1} u_x-\lambda uv m\;& u_xv_x\\
             \lambda^{-1} v\; &-\lambda^{-2}+u_x v-uv_x\; &-\lambda uv n-\lambda^{-1} v_x\\
             -uv\; &\lambda^{-1} u\; &u v_x
               \end{pmatrix}\mathbf{\Psi}.}
\end{equation}

It was shown in \cite{lc} that the reciprocal transformation defined by
\begin{equation}\label{liou-gx-xy1}
\mathrm{d} y=\Delta^{\frac{1}{3}} \mathrm{d}x-\Delta^{\frac{1}{3}} uv\mathrm{d}t, \qquad \mathrm{d} \tau=\mathrm{d} t,\qquad \Delta=mn,
\end{equation}
converts the isospectral problem \eqref{iso-gx} into
\begin{equation}\label{iso-agx}
\eeq{\mathbf{\Phi}_y
=\begin{pmatrix} 0\; &\lambda\;& Q\\
             0\; &P\; &\lambda\\
             1\; &0\; & 0
               \end{pmatrix}\mathbf{\Phi},\qquad \mathbf{\Phi}
=\begin{pmatrix} \phi_1\\
           \phi_2\\
             \phi_3
               \end{pmatrix},
\\\mathbf{\Phi}_\tau
=\frac{1}{2}\begin{pmatrix}
        A\; & 2\lambda^{-1}(p_y+p\,P) & p+q\\
           2\lambda^{-1} q\; & A-2\lambda^{-2} & 2\lambda^{-1}(Pq-q_y)\\
            0\; & 2\lambda^{-1} p\; & A
               \end{pmatrix} \mathbf{\Phi},}
\end{equation}
respectively, where
\begin{equation}\label{liou-gx-PQ1}
Q=\frac{1}{6}\Delta^{-\frac{5}{3}}\Delta_{xx}-\frac{7}{36}\Delta^{-\frac{8}{3}}\Delta_x^2+\Delta^{-\frac{2}{3}},\quad P=\frac{1}{2}m^{-\frac{4}{3}}n^{\frac{2}{3}}\left(\frac{m}{n}\right)_x
\end{equation}
and
\begin{equation*}
A=q_yp-qp_y-2pqP,\quad q=vm^{\frac{2}{3}}n^{-\frac{1}{3}},\quad p=um^{-\frac{1}{3}}n^{\frac{2}{3}}.
\end{equation*}
The compatibility condition $\mathbf{\Phi}_{y\tau}=\mathbf{\Phi}_{\tau y}$ gives rise to the following integrable system
\begin{equation}\label{agx--1}
 \begin{cases}
  &Q_\tau=\frac{3}{2}(q_y+p_y)-(q-p) P,\quad P_\tau=\frac{3}{2}(q-p),\\
  &p_{yy}+2p_yP+pP_y+pP^2-pQ+1=0,\\
  &q_{yy}-2q_yP-qP_y+qP^2-qQ+1=0,
 \end{cases}
\end{equation}
which admits \eqref{iso-agx} as the corresponding Lax-pair formulation.

It turns out that there exists a Liouville correspondence between the GX system \eqref{gx} and system \eqref{agx--1}, in the sense that their respective isospectral problems \eqref{iso-gx} and \eqref{iso-agx} are related through the transformations \eqref{liou-gx-xy1}, \eqref{liou-gx-PQ1}. In addition, in view of the spectral structure of the time evolution component of \eqref{iso-agx}, the reduced system  \eqref{agx--1} can be viewed as a negative flow of an integrable hierarchy, namely the associated Geng-Xue (AGX) integrable hierarchy.

\subsection{The Liouville correspondence between the GX and AGX integrable hierarchies}

Consider the following transformation
\begin{eqnarray}
\begin{aligned}\label{liou-gx}
y&=\int^x\!\Delta^{\frac{1}{3}}(t,\,\xi)\,\mathrm{d}\xi, \quad \tau=t,\\
Q&=\Delta^{-\frac{2}{3}}+\frac{1}{6}\Delta^{-\frac{5}{3}}\Delta_{xx}-\frac{7}{36}\Delta^{-\frac{8}{3}}\Delta_x^2,\quad P=\frac{1}{2}m^{-\frac{4}{3}}n^{\frac{2}{3}}\left(\frac{m}{n}\right)_x,
\end{aligned}
\end{eqnarray}
where and throughout this section $\Delta=mn$.

The GX system \eqref{gx}  can be written in a bi-Hamiltonian form \cite{ll-13}
\begin{eqnarray*}
\begin{aligned}
\begin{pmatrix}
              m\\
            n
              \end{pmatrix}_t=\mathcal{K}\delta \mathcal{H}_1(m,\,n)=\mathcal{J}\delta \mathcal{H}_2(m,\,n),
\end{aligned}
\end{eqnarray*}
where the compatible Hamiltonian operators are
\begin{eqnarray}
\begin{aligned}\label{haop-gx}
&\mathcal{K}=\frac{3}{2}\begin{pmatrix}
             3 m^{\frac{1}{3}} \partial_x m^{\frac{2}{3}} \Omega^{-1} m^{\frac{2}{3}} \partial_x m^{\frac{1}{3}}+m\partial_x^{-1}m & 3 m^{\frac{1}{3}} \partial_x m^{\frac{2}{3}} \Omega^{-1} n^{\frac{2}{3}} \partial_x n^{\frac{1}{3}}-m\partial_x^{-1}n\\
             3 n^{\frac{1}{3}} \partial_x n^{\frac{2}{3}} \Omega^{-1} m^{\frac{2}{3}} \partial_x m^{\frac{1}{3}}-n\partial_x^{-1}m& 3 n^{\frac{1}{3}} \partial_x n^{\frac{2}{3}} \Omega^{-1} n^{\frac{2}{3}} \partial_x n^{\frac{1}{3}}+3 n\partial_x^{-1}n\\
              \end{pmatrix},\\
&
  \mathcal{J}=\begin{pmatrix}
                           0 & \partial_x^2-1\\
                            1-\partial_x^2 & 0\\
                       \end{pmatrix}, \qquad {\rm where} \quad \Omega=\partial_x^3-4\,\partial_x,
\end{aligned}
\end{eqnarray}
while
\begin{equation*}
\qxeq{\mathcal{H}_1(m,\,n)= \int  un\;\mathrm{d}x,  \quad  \mathcal{H}_2(m,\,n)=\int  (u_x v-uv_x)un\;\mathrm{d}x}
\end{equation*}
are the initial Hamiltonian functionals.
Based on the bi-Hamiltonian structure of the GX system, one may construct the full GX integrable hierarchy by applying the resulting hereditary recursion operator $\mathcal{R}=\mathcal{K}\,\mathcal{J}^{-1}$ to the particular seed system
\begin{eqnarray*}
\begin{aligned}
\begin{pmatrix}
             m\\
         n
              \end{pmatrix}_t=\mathbf{G}_1(m, n)=\begin{pmatrix}
           -m\\
           n
              \end{pmatrix}.
\end{aligned}
\end{eqnarray*}
Hence, the $l$-th member in the positive direction takes the form
\begin{equation}\label{gx-l}
\begin{pmatrix}
              m\\
             n
              \end{pmatrix}_t=\mathbf{G}_{l}(m, n)=\mathcal{R}^{l-1}\,\mathbf{G}_1(m, n),\quad l=1, 2, \ldots,
\end{equation}
and the GX system \eqref{gx} is exactly the second positive flow. In analogy with the 2CH hierarchy, the fact that the trivial symmetry $\mathbf{G}_{0}=\big( 0,\,0\big)^T$ satisfies $\mathcal{R}\mathbf{G}_{0}=\mathbf{G}_{1}$, implies that, in the opposite direction, the negative flow is generated by the Casimir system. Note that the Hamiltonian operator $\mathcal{K}$ admits the following Casimir functional
\begin{equation}\label{hacl--1-gx}
\qxeq{\mathcal{H}_C(m,\,n)=3 \int  \Delta^{\frac{1}{3}}\;\mathrm{d}x \quad \hbox{with variational derivative}\quad \delta \mathcal{H}_C=\big(m^{-\frac{2}{3}}n^{\frac{1}{3}},\, m^{\frac{1}{3}}n^{-\frac{2}{3}}\big)^T.}
\end{equation}
Therefore, the $l$-th negative flow of the GX hierarchy is
\begin{eqnarray}
\begin{aligned}\label{gx--l'}
\begin{pmatrix}
              m\\
              n
              \end{pmatrix}_t=\mathbf{G}_{-l}(m, n)=(\mathcal{J}\,\mathcal{K}^{-1})^{l-1}\,\mathcal{J}\delta \mathcal{H}_C,\qquad l=1,\,2,\ldots.
              \end{aligned}
\end{eqnarray}

Turning to the AGX integrable hierarchy, based on Theorem \ref{t2}, one can readily construct two Hamiltonian operators for the transformed system \eqref{agx--1} by applying the Liouville transformation \eqref{liou-gx} from the given Hamitonian pair $\mathcal{K}$ and $\mathcal{J}$ introduced in \eqref{haop-gx} for the GX system. Indeed, resulting Hamiltonian operators  admitted by \eqref{agx--1} were given in \cite{lc} as follows:
\begin{eqnarray}
\begin{aligned}\label{haop-kj-agx}
\overline{\mathcal{K}}=\mathbf{\Gamma}\begin{pmatrix}
            0\, & \Theta\\
            -\Theta^\ast& 0\\
              \end{pmatrix}\mathbf{\Gamma}^\ast \quad \hbox{and}\quad
              \overline{\mathcal{J}}= \frac{1}{2}\begin{pmatrix}
                                        \mathcal{E} & 0\,\\
                                          0\, & -3\partial_y\\
                                            \end{pmatrix},
\end{aligned}
\end{eqnarray}
where the matrix operator $\mathbf{\Gamma}$, and operators $\Theta$, $\mathcal{E}$ are defined by
\begin{eqnarray}
\begin{aligned}\label{haop-ga-agx}
 \mathbf{\Gamma}&=-\frac{1}{6}\begin{pmatrix}
        \mathcal{E}\partial_y^{-1} & \mathcal{E}\partial_y^{-1}\\
            (3\partial_y^2-2\partial_y P)\partial_y^{-1}&  -(3\partial_y^2+2\partial_y P)\partial_y^{-1}\\
              \end{pmatrix},
\\\Theta&=\partial_y^2+P\partial_y+\partial_y P+P^2-Q, \\  \mathcal{E}&=\partial_y^3-2Q\partial_y-2\partial_y Q.
\end{aligned}
\end{eqnarray}
Therefore, the AGX integrable hierarchy can be obtained by applying the resulting hereditary recursion operator $\overline{\mathcal{R}}=\overline{\mathcal{K}}\,\overline{\mathcal{J}}^{-1}$ to the seed system
\begin{equation*}
\begin{pmatrix}
             Q\\
             P
              \end{pmatrix}_\tau=\overline{\mathbf{G}}_1=-\begin{pmatrix}
             Q\\
             P
              \end{pmatrix}_y.
\end{equation*}
More precisely, the $l$-th member in the positive direction takes the form
\begin{equation}\label{agx-l+1}
\begin{pmatrix}
             Q\\
             P
              \end{pmatrix}_\tau=\overline{\mathbf{G}}_{l}=\overline{\mathcal{R}}^{l-1}\overline{\mathbf{G}}_1,\quad l=1, 2, \ldots.
\end{equation}
While, the $l$-th negative flow can be written as
\begin{equation}\label{agx--l}
\overline{\mathcal{R}}^l\begin{pmatrix}
             Q\\
             P
              \end{pmatrix}_\tau=\overline{\mathbf{G}}_0=\begin{pmatrix}
             0\\
             0
              \end{pmatrix}, \quad l=1, 2, \ldots.
\end{equation}

The scheme of the Liouville correspondence between the GX and AGX integrable hierarchies is described in the following Theorem 3.1. Adopting a similar notation to that in Theorem \ref{t1}, the $l$-th positive and negative flows of the GX and AGX hierarchies are denoted by  (GX)$_l$ and (GX)$_{-l}$, and by (AGX)$_{l}$ and (AGX)$_{-l}$, respectively.

\begin{theorem}\label{t3.1}
Under the Liouville transformation \eqref{liou-gx}, for each integer $l \geq 1$,

{\bf(i).} If $\big(m(t, x),\,n(t, x)\big)$ is a solution of the  (GX)$_l$ system \eqref{gx-l}, then the corresponding\break $\big(Q(\tau, y),\,P(\tau, y)\big)$ satisfies the (AGX)$_{-l}$ system \eqref{agx--l};

{\bf(ii).} If $\big(m(t, x),\,n(t, x)\big)$ is a solution of the (GX)$_{-l}$ system \eqref{gx--l'}, then the corresponding $\big(Q(\tau, y),\,P(\tau, y)\big)$ satisfies the (AGX)$_{l+1}$ system \eqref{agx-l+1}.
\end{theorem}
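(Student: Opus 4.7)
The plan is to follow the architecture of the proof of Theorem \ref{t1}. Four ingredients are needed: a set of intertwining operator identities between the GX-side and AGX-side operators under \eqref{liou-gx}, an expression for $(Q_\tau,P_\tau)^T$ in terms of $(m_t,n_t)^T$, a recursion operator identity of the form $\mathbf{A}^{-1}(\mathcal{J}\mathcal{K}^{-1})^l\mathbf{A}=\overline{\mathcal{R}}^l$ for an appropriate multiplication matrix $\mathbf{A}$, and an algebraic check identifying the seed and Casimir flows. With these in hand, parts (i) and (ii) follow by substitution and telescoping, exactly as at the end of the proof of Theorem \ref{t1}.

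For the first ingredient, $\mathrm{d}y=\Delta^{1/3}\mathrm{d}x$ yields $\partial_x=\Delta^{1/3}\partial_y$, and differentiating the formulas for $Q$ and $P$ in \eqref{liou-gx} produces intertwining identities relating $\Delta^{-2/3}(\partial_x^2-1)\Delta^{-1/3}$, the skew cross-pieces $m\partial_x^{-1}n$, and the non-local third-order factor $\partial_x m^{2/3}\Omega^{-1}m^{2/3}\partial_x$ of $\mathcal{K}$ to combinations of $\partial_y$, $Q$, $P$, $p$, and $q$. These identities play the role of Lemma \ref{l2.1}; in particular, they force the AGX-side operator $\mathcal{E}=\partial_y^3-2Q\partial_y-2\partial_y Q$ and the intertwining matrix $\mathbf{\Gamma}$ of \eqref{haop-ga-agx} to coincide with the push-forwards of $\partial_x^2-1$ and of the non-local building blocks of $\mathcal{K}$ respectively. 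Combining these identities with the chain-rule calculation that produced \eqref{t1QP} in the 2CH case yields
\begin{equation*}
\begin{pmatrix} Q \\ P \end{pmatrix}_\tau=-\overline{\mathcal{R}}\,\mathbf{A}^{-1}\begin{pmatrix} m \\ n \end{pmatrix}_t,
\end{equation*}
for an explicit multiplication matrix $\mathbf{A}$ whose entries are rational monomials in $m$, $n$, and $\Delta$.

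The recursion identity $\mathbf{A}^{-1}(\mathcal{J}\mathcal{K}^{-1})^l\mathbf{A}=\overline{\mathcal{R}}^l$ is then proved by induction exactly as in Lemma \ref{l2.2}. The base case $l=1$ is a direct consequence of the intertwining identities together with the general transformation rule \eqref{rela-haop-n}; that $\overline{\mathcal{R}}$ is the recursion operator of the entire AGX hierarchy follows by verifying \eqref{re-op} on the seed flow $\overline{\mathbf{G}}_1=-(Q_y,P_y)^T$ and invoking Proposition \ref{p2.1}. For part (i), substituting $\mathbf{G}_l=(\mathcal{K}\mathcal{J}^{-1})^{l-1}\mathbf{G}_1$ into the $\tau$-derivative formula and applying $\overline{\mathcal{R}}^l$ reduces, after telescoping the $\mathcal{J}\mathcal{K}^{-1}\cdot\mathcal{K}\mathcal{J}^{-1}$ factors, to a single application of $\mathbf{A}^{-1}\mathcal{J}$ to a vector annihilated by $\mathcal{J}$, yielding $\overline{\mathcal{R}}^l(Q,P)_\tau=(0,0)^T$ as required by (AGX)$_{-l}$. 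Part (ii) is analogous: substituting $\mathbf{G}_{-l}=(\mathcal{J}\mathcal{K}^{-1})^{l-1}\mathcal{J}\,\delta\mathcal{H}_C$ with the Casimir \eqref{hacl--1-gx}, the intertwining identities applied to $\delta\mathcal{H}_C=\bigl(m^{-2/3}n^{1/3},m^{1/3}n^{-2/3}\bigr)^T$ give $\mathcal{J}\,\delta\mathcal{H}_C=\mathbf{A}(Q_y,P_y)^T$, and the recursion identity immediately produces $(Q,P)_\tau=\overline{\mathcal{R}}^l\,\overline{\mathbf{G}}_1=\overline{\mathbf{G}}_{l+1}$, i.e.\ (AGX)$_{l+1}$.

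The principal obstacle, substantially harder than in the 2CH case, is the verification of the base case $l=1$ of the recursion identity and of the associated intertwining identities. On the GX side, $\mathcal{K}$ contains the genuinely non-local third-order pseudo-differential piece $\partial_x m^{2/3}\Omega^{-1}m^{2/3}\partial_x$ with $\Omega=\partial_x^3-4\partial_x$, together with the skew cross-terms $m\partial_x^{-1}n$; on the AGX side, $\overline{\mathcal{K}}$ is built from the intertwining matrix $\mathbf{\Gamma}$, the symplectic-type operator $\Theta$, and the third-order $\mathcal{E}$. Matching them through \eqref{liou-gx} demands a careful factorization of $\Omega^{-1}$ consistent with $\partial_x=\Delta^{1/3}\partial_y$ and a delicate tracking of the nonlinear interplay among $P$, $Q$, $p$, $q$ and the original variables $u$, $v$, $m$, $n$. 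Once the base case is established, the remainder of the argument is purely formal and parallels the proof of Theorem \ref{t1} line by line.
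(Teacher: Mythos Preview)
Your outline follows the architecture of Theorem \ref{t1} too literally, and this is where it breaks down. In the 2CH case one has the clean conjugation identity $\mathbf{A}^{-1}(\mathcal{J}\mathcal{K}^{-1})^l\mathbf{A}=\overline{\mathcal{R}}^l$ with a genuine \emph{multiplication} matrix $\mathbf{A}$, together with the $\tau$-derivative relation $(Q,P)_\tau=-\overline{\mathcal{R}}\,\mathbf{A}^{-1}(m,\rho)_t$. In the GX case neither of these holds with any multiplication matrix. A direct chain-rule computation from \eqref{liou-gx} gives instead
\[
\begin{pmatrix}Q\\P\end{pmatrix}_\tau=\tfrac{1}{6}\,\mathcal{U}\,\mathbf{E}^{-1}\mathbf{B}^{-1}\begin{pmatrix}m\\n\end{pmatrix}_t,
\]
where $\mathbf{B}=\mathrm{diag}(m,n)$, $\mathbf{E}=\mathrm{diag}(\partial_y,\partial_y)$, and $\mathcal{U}$ is a genuine second-order matrix differential operator. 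Likewise, the conjugation identity that actually holds (Lemma \ref{l3.2}) is
\[
\mathbf{B}^{-1}(\mathcal{J}\mathcal{K}^{-1})^l\mathbf{B}=(-1)^l\,\mathbf{E}\,(\mathcal{V}\mathcal{U})^l\,\mathbf{E}^{-1},
\]
with $\mathcal{V}\mathcal{U}\neq\overline{\mathcal{R}}$. Your claimed identity $\mathcal{J}\,\delta\mathcal{H}_C=\mathbf{A}(Q_y,P_y)^T$ also fails: one computes $\mathcal{J}\,\delta\mathcal{H}_C=\mathbf{B}\,(P_y+P^2-Q,\;P_y-P^2+Q)^T$, which is not a multiplication matrix times $(Q_y,P_y)^T$.

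The missing idea is a \emph{second} factorization of the AGX recursion operator, $\overline{\mathcal{R}}=\mathcal{U}\mathcal{V}$, distinct from $\overline{\mathcal{K}}\,\overline{\mathcal{J}}^{-1}$. This factorization is what makes the argument close: after substituting the flow and using the conjugation identity, one obtains expressions of the form $\mathcal{U}(\mathcal{V}\mathcal{U})^l$, and the purely algebraic telescoping $\mathcal{U}(\mathcal{V}\mathcal{U})^l=(\mathcal{U}\mathcal{V})^l\mathcal{U}=\overline{\mathcal{R}}^l\mathcal{U}$ converts this into $\overline{\mathcal{R}}^l$ acting on $\mathcal{U}$ applied to a constant vector, which \emph{is} a multiple of $(Q_y,P_y)^T$. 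Without isolating this factorization, the induction base and the seed/Casimir identifications cannot be carried out as you describe. Your final paragraph correctly locates the difficulty in the non-local piece $\Omega^{-1}$ of $\mathcal{K}$, but the resolution is not a sharper intertwining identity for a multiplication matrix; it is the introduction of the intermediate operators $\mathcal{U},\mathcal{V}$ and the shift from $\overline{\mathcal{R}}^l$ to $(\mathcal{V}\mathcal{U})^l$ in the conjugation relation.
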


Two preliminary lemmas are required.
\begin{lemma}\label{l3.1}
Let $\big(m(t, x),\,n(t, x)\big)$ and $\big(Q(\tau, y),\,P(\tau, y)\big)$ be related through the transformation \eqref{liou-gx}. Then the following operator identities hold:
\begin{equation}
\label{gx-op}
\qeq{\Delta^{-\frac{1}{2}}\,\left(1-\partial_x^2\right)\,\Delta^{-\frac{1}{6}}=Q-\partial_y^2, \quad
\Delta^{-\frac{2}{3}}\,\Omega\,\Delta^{-\frac{1}{3}}=\mathcal{E}, \quad
 m^{-\frac{4}{3}}\,n^{\frac{2}{3}}\partial_x mn^{-1}=2P+\partial_y,
 }
\end{equation}
where $\Omega$ and $\mathcal{E}$ are given in \eqref{haop-gx} and \eqref{haop-ga-agx}, respectively.
\end{lemma}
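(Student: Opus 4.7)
The overall strategy mirrors the proof of Lemma~\ref{l2.1}: use the chain-rule identity $\partial_x = \Delta^{1/3}\partial_y$, which follows from $\mathrm{d}y = \Delta^{1/3}\mathrm{d}x$ at fixed $\tau$, together with the Leibniz rule, to rewrite each left-hand side in terms of $\partial_y$ and the multiplicative functions $\Delta,m,n$, and then recognize the multiplicative coefficients as $Q$ and $P$ via the definitions in \eqref{liou-gx}. Throughout, the reduction $\Delta^{-1/3}\partial_x = \partial_y$ is used to absorb excess $\Delta$-weights into $y$-derivatives.

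The third identity is the easiest and I would dispatch it first: expand $\partial_x(mn^{-1}g) = (m/n)_x\,g + (m/n)\partial_x g$ for a test function $g=g(\tau,y)$, multiply on the left by $m^{-4/3}n^{2/3}$, and use the defining formula $P = \tfrac{1}{2}m^{-4/3}n^{2/3}(m/n)_x$ together with $m^{-4/3}n^{2/3}\cdot(m/n) = m^{-1/3}n^{-1/3} = \Delta^{-1/3}$ and $\Delta^{-1/3}\partial_x = \partial_y$. For the first identity, I would apply $\partial_x$ twice to $\Delta^{-1/6}g$, convert each interior $\partial_x$ to $\Delta^{1/3}\partial_y$, and then multiply on the left by $\Delta^{-1/2}$. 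The multiplicative part assembles into $\Delta^{-2/3} + \tfrac{1}{6}\Delta^{-5/3}\Delta_{xx} - \tfrac{7}{36}\Delta^{-8/3}\Delta_x^2$, which is precisely $Q$; the intermediate $\partial_y$ coefficient cancels thanks to the specific fractional weights chosen in the Liouville gauge; and the $\partial_y^2$ coefficient reduces to $-1$. Together with the explicit $\Delta^{-1/2}\cdot\Delta^{-1/6} = \Delta^{-2/3}$ contribution coming from the $1$ on the left side, this yields $Q - \partial_y^2$ as claimed.

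The main obstacle is the second identity, whose left side requires expanding $\partial_x^3\Delta^{-1/3}$ by three applications of the product rule. This produces a long polynomial in $\Delta_x,\Delta_{xx},\Delta_{xxx}$ with various negative powers of $\Delta$, together with operator remainders up to $\partial_y^3$. After multiplying on the left by $\Delta^{-2/3}$ and subtracting the contribution from $-4\Delta^{-2/3}\partial_x\Delta^{-1/3} = -4\partial_y + \tfrac{4}{3}\Delta^{-2}\Delta_x$, the surviving multiplicative expression must be recognized as $-2Q_y$ (using $Q_y = \Delta^{-1/3}Q_x$ and the definition of $Q$), while the first-order operator coefficient must collapse to $-4Q$, giving $\partial_y^3 - 4Q\partial_y - 2Q_y = \mathcal{E}$. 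Although entirely routine, the cubic level of nesting of the Leibniz rule makes the book-keeping heavy. A cleaner alternative, which I would attempt in parallel, is to bootstrap identity~2 from identity~1 by exploiting the factorization $\Omega = \partial_x(\partial_x^2 - 4)$: substituting $\partial_x^2 = 1 + \Delta^{1/2}(\partial_y^2 - Q)\Delta^{1/6}$ obtained from identity~1 and then conjugating by $\Delta^{-2/3}$ and $\Delta^{-1/3}$ reduces the verification to an algebraic rearrangement involving only $Q$, $Q_y$, and the already-checked chain rule, thereby sidestepping most of the cubic expansion.
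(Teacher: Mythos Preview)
Your proposal is correct and follows essentially the same approach as the paper: the paper likewise uses $\partial_x=\Delta^{1/3}\partial_y$ and direct Leibniz expansion, computing $\partial_x^2\Delta^{-1/6}$ for~(i), $\partial_x^2\Delta^{-1/3}$ followed by $\Delta^{-2/3}\partial_x(\cdot)$ for~(ii), and the product rule for~(iii), then identifying the coefficients with $Q$ and $P$ from~\eqref{liou-gx}. Your alternative bootstrap of~(ii) from~(i) via $\Omega=\partial_x(\partial_x^2-4)$ is not in the paper, which simply carries out the cubic expansion directly.
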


\begin{proof}
{\bf(i).} In view of the transformation \eqref{liou-gx}, one has
\begin{equation}\label{l3.1-eq1}
\partial_x=\Delta^{\frac 13} \partial_y,\end{equation}
which implies that
\begin{equation*}
\partial_x^2\,\Delta^{-\frac{1}{6}}=\frac{7}{36}\Delta^{-\frac{13}{6}}\,\Delta_x^2-\frac{1}{6}\Delta^{-\frac{7}{6}}\Delta_{xx}-\frac{1}{3}\Delta^{-\frac{7}{6}}\Delta_x\,\partial_x+\Delta^{-\frac{1}{6}}\,\partial_x^2.
\end{equation*}
Substituting this into
$\Delta^{-\frac{1}{2}}\,\left(1-\partial_x^2\right)\,\Delta^{-\frac{1}{6}}$ yields the first identity in \eqref{gx-op}.

{\bf(ii).} From \eqref{l3.1-eq1}, one has
\begin{equation*}
\partial_x^2\,\Delta^{-\frac{1}{3}}=\frac{4}{9}\Delta^{-\frac{1}{3}}\,\Delta_x^2-\frac{1}{3}\Delta^{-\frac{4}{3}}\Delta_{xx}-\frac{1}{3}\Delta^{-1}\Delta_x\,\partial_y+\Delta^{\frac{1}{3}}\,\partial_y^2.
\end{equation*}
Then a straightforward computation shows that
\begin{eqnarray*}
\begin{aligned}
\Delta^{-\frac{2}{3}}\,&\Omega\,\Delta^{-\frac{1}{3}}=\Delta^{-\frac{2}{3}}\,\partial_x\,\left(\partial_x^2-4\right)\,\Delta^{-\frac{1}{3}}\\
=&\Delta^{-\frac{2}{3}}\,\partial_x\,\left(-4\Delta^{-\frac{1}{3}}+\frac{4}{9}\Delta^{-\frac{7}{3}}\Delta_x^2-\frac{1}{3}\Delta^{-\frac{4}{3}}\Delta_{xx}-\frac{1}{3}\Delta^{-1}\Delta_{x}\,\partial_y+\Delta^{\frac{1}{3}}\,\partial_y^2\right)\\
=&\frac{4}{3}\Delta^{-2}\Delta_{x}-\frac{28}{27}\Delta^{-4}\Delta_{x}^3+\frac{4}{3}\Delta^{-3}\Delta_{x}\Delta_{xx}-\frac{1}{3}\Delta^{-2}\Delta_{xxx}\\
&\quad \quad \quad +\left(-4\Delta^{-\frac{2}{3}}+\frac{7}{9}\Delta^{-\frac{8}{3}}\Delta_{x}^2-\frac{2}{3}\Delta^{-\frac{5}{3}}\Delta_{xx}\right)\,\partial_y+\,\partial_y^3.
\end{aligned}
\end{eqnarray*}
Using the formula for $Q$ in \eqref{liou-gx}, we deduce the second identity in \eqref{gx-op}.

{\bf(iii).} In view of \eqref{l3.1-eq1} again, one has
\begin{equation*}
m^{-\frac{4}{3}}\,n^{\frac{2}{3}}\partial_x mn^{-1}=\frac{n}{m}\,\partial_y\,\frac{m}{n}
=\frac{n}{m}\left(\frac{m}{n}\right)_y+\partial_y.
\end{equation*}
Hence, the third identity in \eqref{gx-op} follows from the formula for $P$ in \eqref{liou-gx}.
\end{proof}

\begin{lemma}\label{l3.2}
Define
\begin{eqnarray}\label{agx-e-b}
\begin{aligned}
\mathbf{E}=\begin{pmatrix}
              \partial_y& 0\\
              0&\partial_y\\
              \end{pmatrix}\qquad \hbox{and}\qquad \mathbf{B}=\begin{pmatrix}
              m& 0\\
              0&n\\
              \end{pmatrix}.
\end{aligned}
\end{eqnarray}
Then the operator identity
\begin{equation}\label{eq-l3.2}
\mathbf{B}^{-1}\left( \mathcal{J}\,\mathcal{K}^{-1}\right)^l \mathbf{B}=(-1)^l\,\mathbf{E}\,\left(\mathbf{\mathcal{V}}\,\mathbf{\mathcal{U}}\right)^l\,\mathbf{E}^{-1}
\end{equation}
holds for each positive integer $l$, where
\begin{eqnarray}
\begin{aligned}\label{op-LD-agx}
&\mathbf{\mathcal{U}}=\begin{pmatrix}
           \mathcal{E}\; & \mathcal{E}\;\\
       \mathcal{F}+3\,\partial_y^2 & \mathcal{F}-3\partial_y^2\\
              \end{pmatrix},\qquad \mathcal{F}=-(2P\partial_y+2P_y),\\
&\mathbf{\mathcal{V}}=\frac{1}{54} \begin{pmatrix}
            3\,\partial_y^{-1}\, \Theta \,\partial_y^{-1} & \partial_y^{-1}\, \Theta\,\partial_y^{-1}\left(2P-3\,\partial_y\right)\\
             -3\,\partial_y^{-1}\, \Theta^\ast\,\partial_y^{-1} & -\partial_y^{-1}\, \Theta^\ast\,\partial_y^{-1}\left(2P+3\,\partial_y\right)\\
              \end{pmatrix},
\end{aligned}
\end{eqnarray}
$\mathcal{E}$ and $\Theta$ are defined in \eqref{haop-ga-agx}, while $\mathcal{K}$ and $\mathcal{J}$ are the compatible Hamiltonian operators for the GX system given in  \eqref{haop-gx}.
\end{lemma}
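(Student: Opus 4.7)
The plan is to establish \eqref{eq-l3.2} by induction on $l$, following the same template used in Lemma \ref{l2.2}. The inductive step is telescoping: assuming the identity at stage $l$, one has
\[
\mathbf{B}^{-1}(\mathcal{J}\mathcal{K}^{-1})^{l+1}\mathbf{B} = \bigl(\mathbf{B}^{-1}(\mathcal{J}\mathcal{K}^{-1})^{l}\mathbf{B}\bigr)\bigl(\mathbf{B}^{-1}\mathcal{J}\mathcal{K}^{-1}\mathbf{B}\bigr),
\]
so inserting the induction hypothesis in the first factor and the base case in the second immediately yields $(-1)^{l+1}\mathbf{E}(\mathcal{V}\mathcal{U})^{l+1}\mathbf{E}^{-1}$. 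All the substantive content therefore lies in the base case $\mathbf{B}^{-1}\mathcal{J}\mathcal{K}^{-1}\mathbf{B} = -\mathbf{E}\mathcal{V}\mathcal{U}\mathbf{E}^{-1}$.

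To attack the base case I would begin by writing the Hamiltonian operator $\mathcal{K}$ in \eqref{haop-gx} in the skew-factored form $\mathcal{K} = \tfrac{3}{2}\mathbf{S}\mathbf{D}\mathbf{S}^{\ast}$, where
\[
\mathbf{S} = \begin{pmatrix} m^{1/3}\partial_x m^{2/3} & m \\ n^{1/3}\partial_x n^{2/3} & -n \end{pmatrix}
\]
and $\mathbf{D}$ is the skew block-diagonal operator built (up to signs) from $\Omega^{-1}$ and $\partial_x^{-1}$. This factorization gives $\mathcal{K}^{-1}$ proportionally to $(\mathbf{S}^{\ast})^{-1}\mathbf{D}^{-1}\mathbf{S}^{-1}$, isolating the two nonlocal pieces in precisely the form to which Lemma \ref{l3.1} applies. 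Similarly, $\mathcal{J}$ reduces to the scalar operator $1-\partial_x^2$ multiplied by the constant antisymmetric matrix $\bigl(\begin{smallmatrix}0&-1\\1&0\end{smallmatrix}\bigr)$.

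With this factored form in hand, conjugation by $\mathbf{B}$ on the inside and by $\mathbf{E}$ on the outside converts every $\partial_x$ into $\Delta^{1/3}\partial_y$. The three operator identities in \eqref{gx-op} then transform the scalar operators $\Omega$, $1-\partial_x^2$, and $\partial_x$ (each sandwiched between its corresponding fractional-power combination of $m$ and $n$) into $\mathcal{E}$, $\partial_y^2-Q$, and $2P+\partial_y$ respectively, together with their formal adjoints, while the remaining fractional powers of $m$ and $n$ cancel. Matching the resulting matrix of differential operators against the definitions of $\mathcal{U}$ and $\mathcal{V}$ in \eqref{op-LD-agx} yields the base case with the expected overall minus sign.

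The principal obstacle is the algebraic bookkeeping in the base case: the inverse $\mathcal{K}^{-1}$ contains a nonlocal operator of order $-3$ hidden inside $\mathbf{D}^{-1}$, and Lemma \ref{l3.1} must be applied to each scalar factor separately rather than to the composite $\mathcal{J}\mathcal{K}^{-1}$ after expansion. Working throughout at the level of the factored forms, rather than expanding the product, keeps every nonlocal inverse sandwiched between precisely the fractional-power combinations that Lemma \ref{l3.1} is designed to handle. With that organization the base case reduces to an entry-by-entry comparison, after which the induction closes the proof.
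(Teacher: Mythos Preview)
Your inductive scaffolding is exactly what the paper does, and you are right that the entire content is in the base case $l=1$.  However, your plan for the base case diverges from the paper's in a way that creates a genuine obstacle.

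You propose to write $\mathcal{K}=\tfrac32\,\mathbf{S}\mathbf{D}\mathbf{S}^{\ast}$ and then invert each factor.  The difficulty is that $\mathbf{S}$ is a $2\times2$ matrix whose entries are non-commuting differential operators; there is no clean closed formula for $\mathbf{S}^{-1}$, and the Schur-complement route produces expressions like $\bigl(m^{1/3}\partial_x m^{2/3}+m\,n^{-1}n^{1/3}\partial_x n^{2/3}\bigr)^{-1}$ that are not of a shape Lemma~\ref{l3.1} can absorb.  So the step ``this factorization gives $\mathcal{K}^{-1}$ proportionally to $(\mathbf{S}^{\ast})^{-1}\mathbf{D}^{-1}\mathbf{S}^{-1}$, and now apply \eqref{gx-op} to each scalar factor'' does not actually close.

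The paper avoids $\mathcal{K}^{-1}$ altogether by multiplying the base-case identity on the right by $\mathcal{K}$ and checking instead the equivalent statement
\[
\mathbf{B}^{-1}\mathcal{J}\;=\;-\,\mathbf{E}\,\mathcal{V}\,\mathcal{U}\,\mathbf{E}^{-1}\,\mathbf{B}^{-1}\,\mathcal{K}.
\]
Now $\mathcal{K}$ appears directly, $\Omega^{-1}$ sits only inside $\mathcal{K}$ in the combinations Lemma~\ref{l3.1} handles, and the left-hand side is immediate from the first identity in \eqref{gx-op}.  The right-hand side is then verified entry by entry; in that computation the paper needs, in addition to \eqref{gx-op}, the elementary conjugation identities
\[
(\partial_y-2\alpha P)\Bigl(\tfrac{m}{n}\Bigr)^{\alpha}=\Bigl(\tfrac{m}{n}\Bigr)^{\alpha}\partial_y,
\qquad
\Bigl(\tfrac{m}{n}\Bigr)^{\beta/2}(\partial_y+\beta P)\Bigl(\tfrac{m}{n}\Bigr)^{-\beta/2}=\partial_y,
\]
together with $\bigl(\tfrac{m}{n}\bigr)^{\mp 1/2}(Q-\partial_y^2)\bigl(\tfrac{m}{n}\bigr)^{\pm 1/2}=-\Theta^{(\ast)}$.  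These are what make the fractional powers of $m$ and $n$ cancel and produce $\Theta$, $\Theta^{\ast}$ in the right places; they are not consequences of \eqref{gx-op} alone and are missing from your plan.  Once you recast the base case this way and add these identities, the rest of your outline goes through.
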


\begin{proof}
The proof relies on an induction argument. First, we prove \eqref{eq-l3.2} in the case of $l=1$. For arbitrary constants $\alpha$ and $\beta$, the direct calculation by use of the transformation \eqref{liou-gx} yields the following operator identities
\begin{eqnarray}
\label{l3.2a}
&&(\partial_y-2\alpha P)\left(\frac{m}{n}\right)^\alpha=\left(\frac{m}{n}\right)^\alpha\,\partial_y, \quad \left(\frac{m}{n}\right)^{\frac{\beta}{2}}(\partial_y+\beta P)\left(\frac{m}{n}\right)^{-\frac{\beta}{2}}=\partial_y.
\end{eqnarray}
Note that \eqref{eq-l3.2} with $l=1$ is equivalent to
\begin{equation*}\label{l3.2b}
\mathbf{B}^{-1}\mathcal{J}=-\mathbf{E}\mathcal{V}\,\mathcal{U}\,\mathbf{E}^{-1}\,\mathbf{B}^{-1}\,\mathcal{K}=\begin{pmatrix}
            \mathscr{R}_{11}\; & \mathscr{R}_{12}\;\\
             \mathscr{R}_{21}\; & \mathscr{R}_{22}\;\\
              \end{pmatrix}.
\end{equation*}
In virtue of the first operator identity in \eqref{gx-op}, we deduce that
\begin{eqnarray*}
\mathbf{B}^{-1}\mathcal{J}=\begin{pmatrix}
            0\;& m^{-1}(\partial_x^2-1)\;\\
             n^{-1}(1-\partial_x^2)&0\\
              \end{pmatrix}= \begin{pmatrix}
           0\; & -\left(\frac{m}{n}\right)^{-\frac{1}{2}}(Q-\partial_y^2)\,\Delta^{\frac{1}{6}}\\
            \left(\frac{m}{n}\right)^{\frac{1}{2}}(Q-\partial_y^2)\,\Delta^{\frac{1}{6}}& 0\;\\
              \end{pmatrix}.
\end{eqnarray*}
On the other hand, a direct calculation shows that
\begin{eqnarray*}
\begin{aligned}
\mathscr{R}_{11}=&\frac{1}{18}\Theta\partial_y^{-1}\Big\{\mathcal{E}\partial_y^{-1}(3\partial_x+m^{-1}m_x+n^{-1}n_x)\Omega^{-1}(3m\partial_x+m_x)\\
&-\left(\partial_y-\frac{2}{3}P\right)\Big[\big(\frac{1}{6}\mathcal{F}\partial_y^{-1}+\frac{1}{2}\partial_y\big)\,\big((\frac{3}{2}\partial_x+m^{-1}m_x)\Omega^{-1}(3m\partial_x+m_x)+\frac{3}{2}\partial_x^{-1}\,m\big)\\
&\quad+\big(\frac{1}{6}\mathcal{F}\partial_y^{-1}-\frac{1}{2}\partial_y\big)\,\big((\frac{3}{2}\partial_x+n^{-1}n_x)\Omega^{-1}(3m\partial_x+m_x)-\frac{3}{2}\partial_x^{-1}\,m\big)\Big]\Big\}\\
=&\frac{1}{2}\Theta\partial_y^{-1}\left(\,n^{-\frac{2}{3}}\partial_x\,m^{\frac{1}{3}}-\Big(\partial_y-\frac{2}{3}P\Big)m^{\frac{2}{3}}n^{-\frac{1}{3}}\right).
\end{aligned}
\end{eqnarray*}
Then using the first equation in \eqref{l3.2a}, we have $\mathscr{R}_{11}=0$. Next,
\begin{eqnarray*}
\begin{aligned}
\mathscr{R}_{12}=&\frac{1}{18}\Theta\partial_y^{-1}\Big\{\mathcal{E}\partial_y^{-1}(3\partial_x+m^{-1}m_x+n^{-1}n_x)\Omega^{-1}(3n\partial_x+n_x)\\
&-\left(\partial_y-\frac{2}{3}P\right)\Big[\big(\frac{1}{6}\mathcal{F}\partial_y^{-1}+\frac{1}{2}\partial_y\big)\,\big((\frac{3}{2}\partial_x+m^{-1}m_x)\Omega^{-1}(3n\partial_x+n_x)-\frac{3}{2}\partial_x^{-1}\,n\big)\\
&\quad+\big(\frac{1}{6}\mathcal{F}\partial_y^{-1}-\frac{1}{2}\partial_y\big)\,\big((\frac{3}{2}\partial_x+n^{-1}n_x)\Omega^{-1}(3n\partial_x+n_x)-\frac{3}{2}\partial_x^{-1}\,n\big)\Big]\Big\}\\
=&\frac{1}{2}\Theta\partial_y^{-1}\left(\Big(\partial_y-\frac{2}{3}P\Big)m^{-\frac{1}{3}}n^{\frac{2}{3}}+m^{-\frac{2}{3}}\partial_x\,n^{\frac{1}{3}}\right)\\
=&\frac{1}{2}\Theta\partial_y^{-1}\left(\;\left(\frac{m}{n}\right)^{-\frac{1}{3}}\,\partial_y\,\left(\frac{m}{n}\right)^{-\frac{1}{6}}+
\left(\frac{m}{n}\right)^{\frac{1}{3}}\,\partial_y\,\left(\frac{m}{n}\right)^{-\frac{5}{6}}\;\right)\Delta^{\frac{1}{6}}\\
=&\Theta\partial_y^{-1}\left(\frac{m}{n}\right)^{-\frac{1}{2}}\left(\partial_y-P\right)\Delta^{\frac{1}{6}}.
\end{aligned}
\end{eqnarray*}

Now we claim that
\begin{equation*}
\Theta\,\partial_y^{-1}\left(\frac{m}{n}\right)^{-\frac{1}{2}}\left(\partial_y-P\right)\left(\frac{m}{n}\right)^{\frac{1}{2}}=-\left(\frac{m}{n}\right)^{-\frac{1}{2}}\left(Q-\partial_y^2\right)\left(\frac{m}{n}\right)^{\frac{1}{2}}.
\end{equation*}
In fact, based on the expression of $Q$ given in \eqref{liou-gx}, a direct calculation leads to
\begin{equation*}
\left(\frac{m}{n}\right)^{-\frac{1}{2}}\left(Q-\partial_y^2\right)\left(\frac{m}{n}\right)^{\frac{1}{2}}=-\Theta,
\end{equation*}
whereas, from the second equation in \eqref{l3.2a},
\begin{equation*}
\Theta\,\partial_y^{-1}\left(\frac{m}{n}\right)^{-\frac{1}{2}}\left(\partial_y-P\right)\left(\frac{m}{n}\right)^{\frac{1}{2}}=\Theta,
\end{equation*}
proving the claim.

After a further calculation, we have
\begin{eqnarray*}
\begin{aligned}
\mathscr{R}_{21}=&-\frac{1}{18}\Theta^\ast\,\partial_y^{-1}\Big\{\mathcal{E}\partial_y^{-1}(3\partial_x+m^{-1}m_x+n^{-1}n_x)\Omega^{-1}(3m\partial_x+m_x)\\
&+\left(\partial_y+\frac{2}{3}P\right)\Big[\big(\frac{1}{6}\mathcal{F}\partial_y^{-1}+\frac{1}{2}\partial_y\big)\,\big((\frac{3}{2}\partial_x+m^{-1}m_x)\Omega^{-1}(3m\partial_x+m_x)+\frac{3}{2}\partial_x^{-1}\,m\big)\\
&\quad+\big(\frac{1}{6}\mathcal{F}\partial_y^{-1}-\frac{1}{2}\partial_y\big)\,\big((\frac{3}{2}\partial_x+n^{-1}n_x)\Omega^{-1}(3m\partial_x+m_x)-\frac{3}{2}\partial_x^{-1}\,m\big)\Big]\Big\}\\
=&-\Theta^\ast\,\partial_y^{-1}\left(\frac{m}{n}\right)^{\frac{1}{2}}\left(\partial_y+P\right)\Delta^{\frac{1}{6}}=\left(\frac{m}{n}\right)^{\frac{1}{2}}(Q-\partial_y^2)\Delta^{\frac{1}{6}}
\end{aligned}
\end{eqnarray*}
and
\begin{eqnarray*}
\begin{aligned}
\mathscr{R}_{22}=&-\frac{1}{18}\Theta^\ast\,\partial_y^{-1}\Big\{\mathcal{E}\partial_y^{-1}(3\partial_x+m^{-1}m_x+n^{-1}n_x)\Omega^{-1}(3n\partial_x+n_x)\\
&+\left(\partial_y+\frac{2}{3}P\right)\Big[\big(\frac{1}{6}\mathcal{F}\partial_y^{-1}+\frac{1}{2}\partial_y\big)\,\big((\frac{3}{2}\partial_x+m^{-1}m_x)\Omega^{-1}(3n\partial_x+n_x)-\frac{3}{2}\partial_x^{-1}\,n\big)\\
&\quad+\big(\frac{1}{6}\mathcal{F}\partial_y^{-1}-\frac{1}{2}\partial_y\big)\,\big((\frac{3}{2}\partial_x+n^{-1}n_x)\Omega^{-1}(3n\partial_x+n_x)+\frac{3}{2}\partial_x^{-1}\,n\big)\Big]\Big\}\\
=&-\frac{1}{2}\Theta^\ast\,\partial_y^{-1}\left[\,m^{-\frac{2}{3}}\partial_x n^{\frac{1}{3}}-\left(\frac{2}{3}P+\partial_y\right)\left(\frac{m}{n}\right)^{-\frac{1}{3}}n^{\frac{1}{3}}\right]=0,
\end{aligned}
\end{eqnarray*}
where we have employed the formulae \eqref{l3.2a} and the relation
\begin{equation*}
\left(\frac{m}{n}\right)^{\frac{1}{2}}\left(Q-\partial_y^2\right)\left(\frac{m}{n}\right)^{-\frac{1}{2}}=-\Theta^\ast.
\end{equation*}

Consequently, combining the preceding results together verifies that  \eqref{eq-l3.2} holds for $l=1$. Then, the induction procedure shows that  \eqref{eq-l3.2} holds in general. Lemma \ref{l3.2} is thus proved.
\end{proof}

\begin{remark}
Notably, we claim that the matrix operators $\mathcal{U}$ and $\mathcal{V}$ defined by \eqref{op-LD-agx} satisfy the following composition identity
\begin{equation}\label{reop-fact-agx}
\overline{\mathcal{R}}=\mathcal{U}\,\mathcal{V}.
\end{equation}
Indeed, define the matrix operator
\begin{equation}\label{reop-fact-agx-1}
\mathbf{\Xi}=\begin{pmatrix}
            0\; & \Theta\;\\
             -\Theta^\ast & 0\;\\
              \end{pmatrix}\mathbf{\Gamma}^\ast.
\end{equation}
It follows from the definition of operators $\mathbf{\Gamma}$ by \eqref{haop-ga-agx} and $\mathcal{U}$ by \eqref{op-LD-agx} that
\begin{equation*}
\mathbf{\Gamma}=-\frac{1}{6}\,\mathcal{U}\,\mathbf{E}^{-1}.
\end{equation*}
Thus, in virtue of \eqref{haop-kj-agx} and \eqref{reop-fact-agx-1}, we have
\begin{equation}\label{reop-fact-agx-2}
\overline{\mathcal{K}}=-\frac{1}{6}\,\mathcal{U}\,\mathbf{E}^{-1}\mathbf{\Xi}.
\end{equation}
On the other hand, the identity
\begin{eqnarray*}
\mathbf{\Xi}\,\overline{\mathcal{J}}^{-1}=\frac{1}{3}\begin{pmatrix}
            0\;& \Theta\;\\
             -\Theta^\ast&0\;\\
              \end{pmatrix}\mathbf{\Gamma}^\ast\begin{pmatrix}
            -6\mathcal{E}^{-1} & 0\;\\
            0\; & -2\partial_y^{-1}\\
              \end{pmatrix}=-\frac{1}{9}\begin{pmatrix}
            3\Theta\partial_y^{-1}& \Theta\partial_y^{-1}\left(2P-3\partial_y\right)\\
           -3\Theta^\ast\partial_y^{-1}&-\Theta^\ast\partial_y^{-1}\left(2P+3\partial_y\right)
              \end{pmatrix},
\end{eqnarray*}
yields
\begin{equation*}\label{reop-fact-agx-3}
\mathbf{E}^{-1}\Xi=-6\mathcal{V}\,\overline{\mathcal{J}}.
\end{equation*}
Substituting the result into \eqref{reop-fact-agx-2}, we arrive at $\overline{\mathcal{R}}=\overline{\mathcal{K}}\,\overline{\mathcal{J}}^{-1}=\mathcal{U}\,\mathcal{V}$, proving the claim.

The formula \eqref{reop-fact-agx} can be viewed as a new operator factorization for the recursion operator $\overline{\mathcal{R}}$, which is not the same as the decomposition of $\overline{\mathcal{R}}=\overline{\mathcal{K}}\,\overline{\mathcal{J}}^{-1}$ using the pair of Hamiltonian operators. We will see from the following proof for Theorem \ref{t3.1} that such a novel factorization plays a key role to identify the systems transformed from the negative (positive) flows of the GX hierarchy to be the corresponding positive (negative) flows of the AGX hierarchy.
\end{remark}

\begin{proof} [\bf{Proof of Theorem \ref{t3.1}}]
A straightforward computation of taking the $t$-derivative of the function $Q(\tau, y)$ and $P(\tau, y)$ according to the Liouville transformation \eqref{liou-gx} yields
\begin{eqnarray}
\begin{aligned}\label{t3.1-eq1}
\begin{pmatrix}
              Q\\
              P
              \end{pmatrix}_\tau=\frac{1}{6}\,\mathcal{U}\mathbf{E}^{-1}\mathbf{B}^{-1}\begin{pmatrix}
              m\\
              n
              \end{pmatrix}_t,
              \end{aligned}
\end{eqnarray}
where $\mathcal{U}$, $\mathbf{E},\mathbf{B}$ are the matrix operators defined in Lemma \ref{l3.2}.

Suppose that $\big(m(t, x),\,n(t, x)\big)$ is the solution of the (GX)$_{-l}$ system \eqref{gx--l'} for some integer $l\geq 1$. Then, using formula \eqref{eq-l3.2} and \eqref{t3.1-eq1}, we find from \eqref{gx--l'} that
\begin{eqnarray*}
\begin{aligned}
\begin{pmatrix}
              Q\\
              P
              \end{pmatrix}_\tau&=\frac{1}{6}\,\mathcal{U}\,\mathbf{E}^{-1}\mathbf{B}^{-1}\LC\mathcal{J} \mathcal{K}^{-1}\RC^{l}\,
              \begin{pmatrix}
              0\\
              0
              \end{pmatrix}=\frac{1}{6}(-1)^l\,\mathcal{U}\LC\mathcal{V}\,\mathcal{U} \RC^{l}\mathbf{E}^{-1}\mathbf{B}^{-1}\,\begin{pmatrix}
              0\\
              0
              \end{pmatrix}\\
              &=\frac{1}{6}(-1)^l\LC\mathcal{U}\,\mathcal{V}\RC^{l}\mathcal{U}\,\begin{pmatrix}
              c\\
              c
              \end{pmatrix}=\frac{2}{3}(-1)^l\,c\,\LC\mathcal{U}\,\mathcal{V}\RC^{l}\,\overline{\mathbf{G}}_1,
              \end{aligned}
\end{eqnarray*}
with $c$ being the corresponding constant of integration. Choosing $c=3(-1)^l/2$,  the fact that $\overline{\mathcal{R}}=\mathcal{U}\mathcal{V}$, immediately reveals that the corresponding  $\big(Q(\tau, y),\,P(\tau, y)\big)$ satisfies the (AGX)$_{l+1}$ system \eqref{agx-l+1}.

Moreover, assume that for each $l\geq 1$, $\big(m(t, x),\,n(t, x)\big)$ solves the (GX)$_{l}$ system \eqref{gx-l}. Then, subject to the transformation \eqref{liou-gx}, the corresponding  $\big(Q(\tau, y),\,P(\tau, y)\big)$ satisfies
\begin{eqnarray*}
\begin{aligned}
\overline{\mathcal{R}}^l\begin{pmatrix}
              Q\\
              P
              \end{pmatrix}_\tau&=\frac{1}{6}\,\overline{\mathcal{R}}^l\mathcal{U}\,\mathbf{E}^{-1}\mathbf{B}^{-1}\LC\mathcal{K}\mathcal{J}^{-1}\RC^{l}\,
              \begin{pmatrix}
              0\\
              0
              \end{pmatrix}=\frac{1}{6}\,\left(\mathcal{U}\,\mathcal{V}\right)^l\,\mathcal{U}\,\mathbf{E}^{-1}\mathbf{B}^{-1}\,\left(\mathcal{K}\mathcal{J}^{-1}\right)^l\begin{pmatrix}
              0\\
              0
              \end{pmatrix}\\
              &=\frac{1}{6}(-1)^l\,\mathcal{U}\,\mathbf{E}^{-1}\,\begin{pmatrix}
              0\\
              0
              \end{pmatrix}=\begin{pmatrix}
              0\\
              0
              \end{pmatrix},
              \end{aligned}
\end{eqnarray*}
where we have made use of Lemma \ref{l3.2} and the factorization $\overline{\mathcal{R}}=\mathcal{U}\,\mathcal{V}$ again. This implies $\big(Q(\tau, y),\,P(\tau, y)\big)$  is a solution for the (AGX)$_{-l}$ system, which completes the proof of the theorem.
\end{proof}

\subsection{The correspondence between the Hamiltonian functionals of the GX and AGX hierarchies}

We now study the correspondence between the Hamiltonian functionals involved in the GX and AGX hierarchies. Based on the pair of Hamiltonian operators $\mathcal{K}$ and $\mathcal{J}$ in \eqref{haop-gx}  for the GX hierarchy and the pair of Hamiltonian operators $\overline{\mathcal{K}}$ and $\overline{\mathcal{J}}$ in \eqref{haop-kj-agx} for the AGX hierarchy, their respective infinite hierarchy of Hamiltonian functionals $\{\mathcal{H}_l\}=\{\mathcal{H}_l(m, n)\}$ and $\{\bar{\mathcal{H}}_l\}=\{\bar{\mathcal{H}}_l(Q, P)\}$ are determined by the following recursive formulae
\begin{equation}\label{gxclhie}
\mathcal{K}\,\delta \mathcal{H}_{l-1}=\mathcal{J}\,\delta \mathcal{H}_l,\quad l\in  \mathbb{Z},
\end{equation}
and
\begin{equation}\label{agxclhie}
\overline{\mathcal{K}}\delta \bar{\mathcal{H}}_{l-1}=\overline{\mathcal{J}} \delta \bar{\mathcal{H}}_{l},\quad l\in \mathbb{Z},
\end{equation}
where $\delta \mathcal{H}_{l}=\LC\delta \mathcal{H}_{l}/\delta m, \, \delta \mathcal{H}_{l}/\delta n \RC^T$ and $\delta \bar{\mathcal{H}}_{l}=\LC\delta \bar{\mathcal{H}}_{l}/\delta Q, \, \delta \bar{\mathcal{H}}_{l}/\delta P \RC^T$. Establishing the correspondence between the two hierarchies of Hamiltonian functionals relies on the following two preliminary lemmas.


\begin{lemma}\label{l3.3}
Let $\{\mathcal{H}_l\}$ and $\{\bar{\mathcal{H}}_l\}$  be the hierarchies of Hamiltonian functionals determined by
\eqref{gxclhie} and \eqref{agxclhie}, respectively. Then, for each $l\in\mathbb{Z}$, their corresponding variational derivatives are related according to the following identity
\begin{equation}\label{eq-l3.3}
\delta \mathcal{H}_{l}(m, n)=6(-1)^{l}\,\mathcal{J}^{-1}\,\mathbf{B}\,\mathbf{\Xi}\,\delta  \bar{\mathcal{H}}_{-(l+1)}(Q, P),
\end{equation}
where the matrix operators $\mathbf{B}$, $\mathbf{\Xi}$ are defined in \eqref{agx-e-b}, \eqref{reop-fact-agx-1} repectively, and $\mathcal{J}$ is the first Hamiltonian operator of the GX hierarchy given by \eqref{haop-gx}.
\end{lemma}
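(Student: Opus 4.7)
The plan is to prove \eqref{eq-l3.3} by induction on $l \in \mathbb{Z}$, following the template of Lemma \ref{l2.4}. Two preparatory operator identities drive the argument: the factorization $\overline{\mathcal{R}} = \mathcal{U}\,\mathcal{V}$ and its companion $\mathbf{E}^{-1}\mathbf{\Xi} = -6\,\mathcal{V}\,\overline{\mathcal{J}}$ from the preceding remark, together with the $l=1$ case of Lemma \ref{l3.2}, which upon inversion reads $\mathcal{K}\mathcal{J}^{-1} = -\mathbf{B}\,\mathbf{E}\,(\mathcal{V}\mathcal{U})^{-1}\mathbf{E}^{-1}\mathbf{B}^{-1}$. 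Together these shuttle the formula between consecutive values of $l$.

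For the induction step, I would assume \eqref{eq-l3.3} at stage $l$, apply $\mathcal{J}$ to both sides, and invoke the GX recurrence $\mathcal{J}\,\delta\mathcal{H}_{l+1} = \mathcal{K}\,\delta\mathcal{H}_l$ from \eqref{gxclhie}. Substituting the inductive hypothesis and chaining the two operator identities above yields
\begin{equation*}
\mathcal{J}\,\delta\mathcal{H}_{l+1} \,=\, 36(-1)^l\,\mathbf{B}\,\mathbf{E}\,\mathcal{U}^{-1}\,\overline{\mathcal{J}}\,\delta\bar{\mathcal{H}}_{-(l+1)},
\end{equation*}
while the target expression $6(-1)^{l+1}\,\mathbf{B}\,\mathbf{\Xi}\,\delta\bar{\mathcal{H}}_{-(l+2)}$ equals $36(-1)^l\,\mathbf{B}\,\mathbf{E}\,\mathcal{V}\,\overline{\mathcal{J}}\,\delta\bar{\mathcal{H}}_{-(l+2)}$. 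Cancelling the common factor $\mathbf{B}\,\mathbf{E}$ and using $\mathcal{U}\mathcal{V} = \overline{\mathcal{R}} = \overline{\mathcal{K}}\,\overline{\mathcal{J}}^{-1}$, the claim at stage $l+1$ reduces to $\overline{\mathcal{J}}\,\delta\bar{\mathcal{H}}_{-(l+1)} = \overline{\mathcal{K}}\,\delta\bar{\mathcal{H}}_{-(l+2)}$, which is precisely the AGX recurrence \eqref{agxclhie} at index $-(l+1)$. The downward induction $l \to l-1$ proceeds symmetrically.

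The base case is the main obstacle. I would anchor at $l = -1$ and use $\mathcal{H}_{-1} = \mathcal{H}_C$ from \eqref{hacl--1-gx}, with its explicit gradient $\delta\mathcal{H}_C = (m^{-2/3}n^{1/3},\, m^{1/3}n^{-2/3})^T$, so the identity reduces to $\mathcal{J}\,\delta\mathcal{H}_C = -6\,\mathbf{B}\,\mathbf{\Xi}\,\delta\bar{\mathcal{H}}_0$. On the AGX side, $\delta\bar{\mathcal{H}}_0$ must be a Casimir of $\overline{\mathcal{J}}$, consistent with $\overline{\mathcal{J}}\,\delta\bar{\mathcal{H}}_0 = \overline{\mathcal{K}}\,\delta\bar{\mathcal{H}}_{-1} = 0$, where $\bar{\mathcal{H}}_{-1}$ is the Casimir of $\overline{\mathcal{K}}$. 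Translating the right-hand side back to the $(m,n)$-variables via the transformation \eqref{liou-gx} and the operator identities in Lemma \ref{l3.1}, the equality follows by a direct but lengthy verification. Once this is secured, the induction above propagates the formula to all $l \in \mathbb{Z}$ and completes the proof.
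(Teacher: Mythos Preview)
Your overall strategy matches the paper's: both arguments are inductions on $l$ driven by the operator factorizations $\overline{\mathcal{R}}=\mathcal{U}\mathcal{V}$, $\mathbf{E}^{-1}\mathbf{\Xi}=-6\mathcal{V}\overline{\mathcal{J}}$, and the $l=1$ case of Lemma~\ref{l3.2}. Indeed the paper packages exactly your induction step into a standalone recursive identity $\mathcal{K}^{-1}\mathbf{B}\mathbf{\Xi}\,\delta\bar{\mathcal{H}}_{l-1}=-\mathcal{J}^{-1}\mathbf{B}\mathbf{\Xi}\,\delta\bar{\mathcal{H}}_{l}$ and then inducts in both directions, so the mechanics are the same.

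Where your outline falls short is at the anchor. First, the paper does not leave $\bar{\mathcal{H}}_0$ abstract: it uses the explicit identification $\bar{\mathcal{H}}_0=\int P\,\mathrm{d}y$, so that $\delta\bar{\mathcal{H}}_0=(0,1)^T$ and $\mathbf{\Xi}\,\delta\bar{\mathcal{H}}_0$ can be computed directly from \eqref{haop-ga-agx} and compared with $\mathcal{J}\,\delta\mathcal{H}_C$ via Lemma~\ref{l3.1}. Your description ``a direct but lengthy verification'' conceals this choice, without which the base identity is not even well-posed (there are several Casimirs of $\overline{\mathcal{J}}$). Second, and more seriously, your claim that the two directions of induction proceed symmetrically from the single base $l=-1$ is not quite right. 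Going upward from $l=-1$ to $l=0$, the GX recurrence gives $\mathcal{J}\,\delta\mathcal{H}_0=\mathcal{K}\,\delta\mathcal{H}_{-1}=0$, while on the AGX side $\overline{\mathcal{J}}\,\delta\bar{\mathcal{H}}_0=\overline{\mathcal{K}}\,\delta\bar{\mathcal{H}}_{-1}=0$; the chain of equalities you wrote collapses to $0=0$ and carries no information about $\delta\bar{\mathcal{H}}_{-1}$. The paper handles this by proving a \emph{second} anchor, namely $\delta\mathcal{H}_{-1}=6\,\mathcal{K}^{-1}\mathbf{B}\mathbf{\Xi}\,\delta\bar{\mathcal{H}}_{-1}$, via a Casimir argument: it shows directly that $\overline{\mathcal{K}}\,\mathbf{\Xi}^{-1}\mathbf{B}^{-1}\mathcal{K}\,\delta\mathcal{H}_{-1}=\mathbf{0}$, hence the expression on the right is (the gradient of) a Casimir of $\overline{\mathcal{K}}$ and may be identified with $\delta\bar{\mathcal{H}}_{-1}$. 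Only after this second anchor is the upward induction to $l\ge 0$ legitimate. Your proposal needs this extra step.
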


\begin{proof}
First of all, we prove a recursive identity
\begin{equation}\label{l3.3-eq1}
\mathcal{K}^{-1}\,\mathbf{B}\,\mathbf{\Xi}\,\delta  \bar{\mathcal{H}}_{l-1}=-\mathcal{J}^{-1}\,\mathbf{B}\,\mathbf{\Xi}\,\delta  \bar{\mathcal{H}}_{l},\quad l\in \mathbb{Z},
\end{equation}
for the hierarchy of Hamiltonian functionals $\{\bar{\mathcal{H}}_l\}$. In fact, formula \eqref{eq-l3.2} with $l=1$ leads to
\begin{equation*}
\mathcal{K}^{-1}\,\mathbf{B}=-\mathcal{J}^{-1}\,\mathbf{B}\,\mathbf{E}\,\mathcal{V}\,\mathcal{U}\,\mathbf{E}^{-1}.
\end{equation*}
Hence, the left-hand side of equation \eqref{l3.3-eq1} becomes
\begin{equation*}
\mathcal{K}^{-1}\,\mathbf{B}\,\mathbf{\Xi}\,\delta  \bar{\mathcal{H}}_{l-1}=-\mathcal{J}^{-1}\,\mathbf{B}\,\mathbf{E}\,\mathcal{V}\,\mathcal{U}\,\mathbf{E}^{-1}\,\mathbf{\Xi}\,
\overline{\mathcal{K}}^{-1}\,\overline{\mathcal{J}}\,\delta\bar{\mathcal{H}}_{l}.
\end{equation*}
Then, referring back to the definition of operator $\mathbf{\Xi}$ \eqref{reop-fact-agx-1} immediately verifies that \eqref{l3.3-eq1} holds for all $l\in \mathbb{Z}$.

Next, we perform an induction argument to prove \eqref{eq-l3.3}. For the case $l=-2$ of \eqref{eq-l3.3}, due to \eqref{l3.3-eq1} with $l=1$, we need to prove
\begin{equation}\label{l3.3-eq2}
\delta \mathcal{H}_{-2}=-6\,\mathcal{K}^{-1}\,\mathbf{B}\,\mathbf{\Xi}\,\delta  \bar{\mathcal{H}}_{0}.
\end{equation}
In fact, it follows from \eqref{hacl--1-gx} and \eqref{gxclhie} that
\begin{eqnarray}\label{agx-H--2}
\delta \mathcal{H}_{-2}=\mathcal{K}^{-1}\mathcal{J}\delta \mathcal{H}_{-1}=\mathcal{K}^{-1}\mathcal{J}\delta \mathcal{H}_C
=\mathcal{K}^{-1}\begin{pmatrix}
             (\partial_x^2-1)\,m^{\frac{1}{3}}n^{-\frac{2}{3}}\,\\
             (1-\partial_x^2)\,m^{-\frac{2}{3}}n^{\frac{1}{3}}\,
              \end{pmatrix}.
\end{eqnarray}
Using \eqref{liou-gx} and \eqref{gx-op}, we have
\begin{equation*}
(\partial_x^2-1)\,m^{\frac{1}{3}}n^{-\frac{2}{3}}=-\Delta^{\frac{1}{2}}\,(Q-\partial_y^2)\,\left(\frac{m}{n}\right)^\frac{1}{2}=m\,(P_y+P^2-Q)
\end{equation*}
and\begin{equation*}
(1-\partial_x^2)\,m^{-\frac{2}{3}}n^{\frac{1}{3}}=\Delta^{\frac{1}{2}}\,(Q-\partial_y^2)\,\left(\frac{m}{n}\right)^{-\frac{1}{2}}=n\,(P_y-P^2+Q).
\end{equation*}
Thus
\begin{eqnarray}
\begin{aligned}\label{l3.3-eq4}
\delta \mathcal{H}_{-2}&=\mathcal{K}^{-1}\,\mathbf{B}\begin{pmatrix}
            \,P_y+P^2-Q\,\\
             \,P_y-P^2+Q\,
              \end{pmatrix}.
\end{aligned}
\end{eqnarray}
On the other hand, since $\bar{\mathcal{H}}_{0}= \int P\;\mathrm{d}y $, in view of the form of operator $\mathbf{\Xi}$, we deduce that
\begin{eqnarray*}
\begin{aligned}
6\,\mathcal{K}^{-1}\,\mathbf{B}\,\mathbf{\Xi}\,\delta \bar{\mathcal{H}}_{0}&=-\mathcal{K}^{-1}\,\mathbf{B}\,\begin{pmatrix}
            \,\Theta \cdot 1\,\\
             \,-\Theta^\ast \cdot 1\,
              \end{pmatrix},
\end{aligned}
\end{eqnarray*}
which, together with \eqref{agx-H--2} and \eqref{l3.3-eq4}, proves the identity \eqref{l3.3-eq2} and implies that \eqref{eq-l3.3} holds for $l=-2$.

Suppose by induction that  \eqref{eq-l3.3} holds for $l=k$ with $k\leq -2$. Then, we deduce from \eqref{l3.3-eq1} that, for $l=k-1$,
\begin{eqnarray*}
\delta \mathcal{H}_{k-1}=\mathcal{K}^{-1}\mathcal{J}\delta \mathcal{H}_{k}=6(-1)^{k}\,\mathcal{K}^{-1}\,\mathbf{B}\,\mathbf{\Xi}\,\delta  \bar{\mathcal{H}}_{-(k+1)}=6(-1)^{k-1}\,\mathcal{J}^{-1}\,\mathbf{B}\,\mathbf{\Xi}\,\delta  \bar{\mathcal{H}}_{-k},
\end{eqnarray*}
which implies that \eqref{eq-l3.3} holds for each $l \leq -2$.

In the opposite direction, we first prove that
\ben\label{gx-h--1}
\delta \mathcal{H}_{-1}=6\,\mathcal{K}^{-1}\,\mathbf{B}\,\mathbf{\Xi}\,\delta  \bar{\mathcal{H}}_{-1}.
\een
It suffices to show that $\overline{\mathcal{K}}\,\mathbf{\Xi}^{-1}\,\mathbf{B}^{-1}\,\mathcal{K}\delta \mathcal{H}_{-1}=\mathbf{0}$. In fact, by \eqref{reop-fact-agx-2},
\begin{eqnarray*}
\overline{\mathcal{K}}\,\mathbf{\Xi}^{-1}\,\mathbf{B}^{-1}\,\mathcal{K}\delta \mathcal{H}_{-1}=-\frac{1}{6}\,\mathcal{U}\,\mathbf{E}^{-1}\,\mathbf{B}^{-1}\,\mathcal{K}\,\delta \mathcal{H}_{-1}=\frac{1}{6}\,\mathcal{V}^{-1}\,\mathbf{E}^{-1}\,\mathbf{B}^{-1}\,\mathcal{J}\,\delta \mathcal{H}_{-1},
\end{eqnarray*}
where,
\begin{eqnarray*}\begin{aligned}
\mathbf{B}^{-1}\,\mathcal{J}\,\delta \mathcal{H}_{-1}=\begin{pmatrix}
            \,P_y+P^2-Q\,\\
             \,P_y-P^2+Q\,
              \end{pmatrix}=\mathbf{E}\,\mathcal{V}\,\begin{pmatrix}
            \,0\,\\
             \,0\,
              \end{pmatrix},
\end{aligned}
\end{eqnarray*}
so \eqref{gx-h--1} follows.

Further induction on $l$ shows that if \eqref{eq-l3.3} holds for $l=k$ with some $k\geq -1$, then for $l=k+1$,  using \eqref{l3.3-eq1} again, we infer that
\begin{eqnarray*}
\delta \mathcal{H}_{k+1}=\;\mathcal{J}^{-1}\mathcal{K}\,\delta \mathcal{H}_{k}=6(-1)^{k}\,\mathcal{J}^{-1}\,\mathcal{K}\,\mathbf{B}\,\mathbf{\Xi}\,\delta  \bar{\mathcal{H}}_{-(k+1)}=6(-1)^{k+1}\,\mathcal{J}^{-1}\,\mathbf{B}\,\mathbf{\Xi}\,\delta  \bar{\mathcal{H}}_{-(k+2)},
\end{eqnarray*}
which completes the induction step, and proves \eqref{eq-l3.3}  in general.
\end {proof}

The next lemma reveals the effect of the Liouville transformation \eqref{liou-gx} on the variational derivatives, which follows from Theorem \ref{t2}.

\begin{lemma}\label{l3.4}
Let $\big(m(t, x),\,n(t, x)\big)$ and $\big(Q(\tau, y),\,P(\tau, y)\big)$ be related by the Liouville transformation \eqref{liou-gx}. If
$\mathcal{H}(m, n)=\bar{\mathcal{H}}(Q, P)$, then
\begin{equation*}\label{eq-l3.4}
\delta \mathcal{H}(m, n)=-\frac{1}{6}\,\Delta^{\frac{1}{3}}\,\mathbf{B}^{-1}\,\mathbf{E}^{-1}\,\mathcal{U}^\ast\,\delta \bar{\mathcal{H}}(Q, P).
\end{equation*}
\end{lemma}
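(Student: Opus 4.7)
The plan is to follow the structure of the proof of Lemma \ref{l2.3}, but specialized to the Liouville transformation \eqref{liou-gx}. Recognizing that \eqref{liou-gx} fits the abstract framework of Theorem \ref{t2} with data $\Lambda(m,n)=\Delta^{\frac{1}{3}}$, $F_1(m,n)=Q$, $F_2(m,n)=P$, and $I(x,m,n)=\int^x\Delta^{\frac{1}{3}}\,\mathrm{d}\xi$, the multi-component analogue of Lemma \ref{l2.3} immediately yields
\[
\delta\mathcal{H}(m,n)=\mathbf{T}^\ast\,\delta\bar{\mathcal{H}}(Q,P),
\]
where $\mathbf{T}$ is the $2\times 2$ matrix operator defined by \eqref{T}. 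The task is then reduced to verifying the operator identity
\[
\mathbf{T}^\ast=-\frac{1}{6}\,\Delta^{\frac{1}{3}}\,\mathbf{B}^{-1}\,\mathbf{E}^{-1}\,\mathcal{U}^\ast.
\]

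Rather than computing the Fr\'echet derivatives $\FD{Q,m_j}$ and $\FD{P,m_j}$ of the rather intricate expressions in \eqref{liou-gx} directly, I would obtain $\mathbf{T}$ indirectly from the velocity relation \eqref{t3.1-eq1} that was already established in the course of proving Theorem \ref{t3.1}, namely
\[
\begin{pmatrix}Q\\P\end{pmatrix}_\tau=\frac{1}{6}\,\mathcal{U}\,\mathbf{E}^{-1}\,\mathbf{B}^{-1}\begin{pmatrix}m\\n\end{pmatrix}_t.
\]
Comparing with the intermediate identity $\mathbf{Q}_\tau=\Lambda^{-1}\mathbf{T}\,\mathbf{m}_t$ that was derived inside the proof of Theorem \ref{t2} immediately reads off
\[
\mathbf{T}=\tfrac{1}{6}\,\Delta^{\frac{1}{3}}\,\mathcal{U}\,\mathbf{E}^{-1}\,\mathbf{B}^{-1}.
\]

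It then remains to take the formal $L^2$ adjoint. Because $\Delta^{\frac{1}{3}}$ and $\mathbf{B}^{-1}$ are multiplicative (hence self-adjoint) operators, while $\mathbf{E}^\ast=-\mathbf{E}$ and therefore $(\mathbf{E}^{-1})^\ast=-\mathbf{E}^{-1}$, direct adjunction gives an expression that agrees with the claim up to the ordering of the scalar and differential factors. The main bookkeeping step, which I expect to be the only real technical obstacle, is to push $\Delta^{\frac{1}{3}}$ past $\mathbf{B}^{-1}\mathbf{E}^{-1}$ into the leftmost position, which requires controlling the commutator $[\partial_y,\Delta^{\frac{1}{3}}]=(\Delta^{\frac{1}{3}})_y$ by means of \eqref{l3.1-eq1} together with the operator identities of Lemma \ref{l3.1}, thereby recasting the adjoint in the precise form $-\tfrac{1}{6}\Delta^{\frac{1}{3}}\mathbf{B}^{-1}\mathbf{E}^{-1}\mathcal{U}^\ast$ stated in the lemma.
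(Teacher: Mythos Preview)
Your overall strategy matches the paper's: the paper simply remarks that Lemma~\ref{l3.4} ``follows from Theorem~\ref{t2}'', meaning exactly the route through Lemma~\ref{l2.3} that you outline, and your shortcut of reading off $\mathbf{T}$ from the already-established relation \eqref{t3.1-eq1} (rather than recomputing the Fr\'echet derivatives in \eqref{T}) is a nice economy.

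There is, however, a genuine slip in your final adjunction step. The ``formal $L^2$ adjoint'' $\mathbf{T}^\ast$ appearing in Lemma~\ref{l2.3} is the adjoint with respect to the $x$-measure, not the $y$-measure: indeed, the identity $\delta\mathcal{H}(\mathbf{m})=\mathbf{T}^\ast\delta\bar{\mathcal{H}}(\mathbf{Q})$ arises from pairing against $\mathrm{d}x$, and one can verify directly from \eqref{T-adjoint} in the 2CH case that the paper's $\mathbf{T}^\ast$ equals $\Lambda\,\mathbf{T}^{\ast_y}\,\Lambda^{-1}$, where $\ast_y$ denotes the $y$-adjoint. Applying this to $\mathbf{T}=\tfrac{1}{6}\,\Delta^{1/3}\,\mathcal{U}\,\mathbf{E}^{-1}\,\mathbf{B}^{-1}$ gives
\[
\mathbf{T}^\ast
=\Delta^{1/3}\Bigl(-\tfrac{1}{6}\,\mathbf{B}^{-1}\,\mathbf{E}^{-1}\,\mathcal{U}^\ast\,\Delta^{1/3}\Bigr)\Delta^{-1/3}
=-\tfrac{1}{6}\,\Delta^{1/3}\,\mathbf{B}^{-1}\,\mathbf{E}^{-1}\,\mathcal{U}^\ast
\]
immediately, with the outer $\Delta^{\pm 1/3}$ conjugation cancelling the trailing $\Delta^{1/3}$. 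No commutator bookkeeping is needed, and in fact your proposed fix --- pushing $\Delta^{1/3}$ through $\mathbf{E}^{-1}\mathcal{U}^\ast$ --- would amount to asserting that $\Delta^{1/3}$ commutes with $\partial_y^{-1}\,\mathcal{U}^\ast$, which is false. The error is only in the bookkeeping of which adjoint is meant; once that is corrected the argument goes through cleanly.
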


Finally, we claim that
\begin{eqnarray*}
\begin{aligned}
\mathcal{J}^{-1}\,\mathbf{B}\,\begin{pmatrix}
              0\,&\Theta\\
             -\Theta^\ast & 0
              \end{pmatrix}=\Delta^{\frac{1}{3}}\,\mathbf{B}^{-1},
              \end{aligned}
\end{eqnarray*}
which is equivalent to
\begin{eqnarray*}
\begin{aligned}
\mathbf{B}\,\begin{pmatrix}
              0\,&\Theta\\
             -\Theta^\ast & 0
              \end{pmatrix}=\mathcal{J}\,\begin{pmatrix}
              m^{-\frac{2}{3}}n^{\frac{1}{3}}&0\,\\
            0\, & m^{\frac{1}{3}}n^{-\frac{2}{3}}
              \end{pmatrix}.
              \end{aligned}
\end{eqnarray*}
In fact,
\begin{eqnarray*}
\mathcal{J}\,\begin{pmatrix}
              m^{-\frac{2}{3}}n^{\frac{1}{3}}&0\,\\
            0\, & m^{\frac{1}{3}}n^{-\frac{2}{3}}
              \end{pmatrix}=\begin{pmatrix}
             0&\Delta^{\frac{1}{2}}(Q-\partial_y^2)\left(\frac{m}{n}\right)^{\frac{1}{2}}\,\\
            \Delta^{\frac{1}{2}}(Q-\partial_y^2)\left(\frac{m}{n}\right)^{-\frac{1}{2}} & 0\;
              \end{pmatrix}
              =\mathbf{B}\,\begin{pmatrix}
              0\,&\Theta\\
             -\Theta^\ast & 0
              \end{pmatrix}.
\end{eqnarray*}

Now, define the functional
\begin{equation*}
\mathcal{G}_{l}(Q, P)\equiv \mathcal{H}_{l}(m, n),
\end{equation*}
we have, on the one hand, by Lemma \ref{l3.4}
\begin{equation*}\label{t3.3-1}
\delta \mathcal{H}_{l}(m, n)=-\frac{1}{6}\,\Delta^{\frac{1}{3}}\,\mathbf{B}^{-1}\,\mathbf{E}^{-1}\,\mathcal{U}^\ast\,\delta \mathcal{G}_{l}(Q,  P).
\end{equation*}
On the other hand, in view of Lemma \ref{l3.3},
\begin{eqnarray*}
\begin{aligned}
\delta \mathcal{H}_{l}(m, n)&=6(-1)^{l}\,\mathcal{J}^{-1}\,\mathbf{B}\,\mathbf{\Xi}\,\delta  \bar{\mathcal{H}}_{-(l+1)}(Q, P)\\
&=(-1)^{l}\,\mathcal{J}^{-1}\,\mathbf{B}\,\begin{pmatrix}
              0\,&\Theta\\
             -\Theta^\ast & 0
              \end{pmatrix}\,\mathbf{E}^{-1}\,\mathcal{U}^\ast\,\delta \bar{\mathcal{H}}_{-(l+1)}(Q, P)\\
   &= (-1)^{l} \Delta^{\frac{1}{3}}\,\mathbf{B}^{-1}\,\mathbf{E}^{-1}\,\mathcal{U}^\ast\,\delta \bar{\mathcal{H}}_{-(l+1)}(Q, P).  \end{aligned}
\end{eqnarray*}
Combining the preceding two equations gives rise to
\begin{equation*} \mathcal{H}_l(m, n)=\mathcal{G}_{l}(Q, P)=6(-1)^{l+1}\,\bar{\mathcal{H}}_{-(l+1)}(Q, P),\end{equation*}
which establishes the correspondence between the sequences of the Hamiltonian quantities admitted by the GX and AGX hierarchies. Thus, we have proved the following theorem.

\begin{theorem}\label{t3.3}
For any nonzero integer $l$, each Hamiltonian conserved density  $\mathcal{H}_l(m, n)$ of the GX hierarchy relates the Hamiltonian conserved density $\bar{\mathcal{H}}_{l}(Q, P)$, under the Liouville transformation \eqref{liou-gx},  according to the following identity
\begin{equation*}\label{cl-rela-dp3}
\mathcal{H}_l(m, n)=6(-1)^{l+1}\bar{\mathcal{H}}_{-(l+1)}(Q, P), \quad 0\neq l\in \mathbb{Z}.
\end{equation*}
\end{theorem}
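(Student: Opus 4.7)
My plan is to prove the theorem by comparing the variational derivatives of the two candidate functionals, both regarded as functionals of $(Q,P)$, and then integrating. To this end, I introduce the auxiliary functional $\mathcal{G}_l(Q,P)$ defined by $\mathcal{G}_l(Q,P) \equiv \mathcal{H}_l(m,n)$ via the Liouville transformation \eqref{liou-gx}. On the one hand, Lemma \ref{l3.4} (which is a specialization of the general change-of-variables formula in Theorem \ref{t2}) provides
\[
\delta\mathcal{H}_l(m,n) = -\tfrac{1}{6}\Delta^{1/3}\mathbf{B}^{-1}\mathbf{E}^{-1}\mathcal{U}^{\ast}\,\delta\mathcal{G}_l(Q,P).
\]
On the other hand, Lemma \ref{l3.3} gives
\[
\delta\mathcal{H}_l(m,n) = 6(-1)^l\,\mathcal{J}^{-1}\mathbf{B}\,\mathbf{\Xi}\,\delta\bar{\mathcal{H}}_{-(l+1)}(Q,P).
\]

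The crux is to reconcile these two expressions by establishing the operator identity
\[
\mathcal{J}^{-1}\mathbf{B}\begin{pmatrix}0 & \Theta \\ -\Theta^{\ast} & 0\end{pmatrix} = \Delta^{1/3}\mathbf{B}^{-1}.
\]
Multiplying on the left by $\mathbf{B}\mathcal{J}^{-1}$ and rearranging, this reduces to checking
\[
\mathbf{B}\begin{pmatrix}0 & \Theta \\ -\Theta^{\ast} & 0\end{pmatrix} = \mathcal{J}\,\mathrm{diag}\bigl(m^{-2/3}n^{1/3},\; m^{1/3}n^{-2/3}\bigr),
\]
which I would verify by directly computing both sides using the first identity of Lemma \ref{l3.1} (relating $1-\partial_x^2$ to $Q-\partial_y^2$ under the Liouville change of variables) together with the definition of $\Theta$ in \eqref{haop-ga-agx}. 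Combining this identity with $\mathbf{\Xi} = \begin{pmatrix}0 & \Theta \\ -\Theta^{\ast} & 0\end{pmatrix}\mathbf{\Gamma}^{\ast}$ from \eqref{reop-fact-agx-1}, and the factorization $\mathbf{\Gamma}^{\ast} = -\tfrac{1}{6}\mathbf{E}^{-1}\mathcal{U}^{\ast}$ that can be read off from \eqref{haop-ga-agx}, the two expressions for $\delta\mathcal{H}_l(m,n)$ coincide precisely when
\[
\delta\mathcal{G}_l(Q,P) = 6(-1)^{l+1}\,\delta\bar{\mathcal{H}}_{-(l+1)}(Q,P).
\]

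Since $\mathcal{G}_l$ and $6(-1)^{l+1}\bar{\mathcal{H}}_{-(l+1)}$ are functionals of $(Q,P)$ with identical variational derivatives, they differ only by an irrelevant additive constant, which we may set to zero (alternatively, one can pin down the constant by evaluating on a single low-order instance). This yields $\mathcal{H}_l(m,n) = \mathcal{G}_l(Q,P) = 6(-1)^{l+1}\bar{\mathcal{H}}_{-(l+1)}(Q,P)$, as claimed. The main obstacle is the central operator identity: although morally a compatibility statement between the Liouville transformation and the structure of the AGX operators $\mathbf{\Xi}$ and $\mathbf{\Gamma}$, it interweaves the nonlocal ingredients in $\mathbf{\Gamma}$ and $\mathcal{V}$ with the local Lax-type operator $\Theta$, so some care is needed to ensure that all operator manipulations are legitimate and that the inverses $\mathcal{J}^{-1}$, $\mathcal{K}^{-1}$ are handled consistently throughout.
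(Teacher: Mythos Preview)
Your proposal follows essentially the same route as the paper: define $\mathcal{G}_l(Q,P)\equiv\mathcal{H}_l(m,n)$, invoke Lemma~\ref{l3.4} and Lemma~\ref{l3.3} to obtain two expressions for $\delta\mathcal{H}_l$, prove the key operator identity $\mathcal{J}^{-1}\mathbf{B}\bigl(\begin{smallmatrix}0&\Theta\\-\Theta^\ast&0\end{smallmatrix}\bigr)=\Delta^{1/3}\mathbf{B}^{-1}$, and combine. One small slip: since $\mathbf{E}=\mathrm{diag}(\partial_y,\partial_y)$ is skew-adjoint, taking adjoints of $\mathbf{\Gamma}=-\tfrac{1}{6}\mathcal{U}\mathbf{E}^{-1}$ gives $\mathbf{\Gamma}^\ast=+\tfrac{1}{6}\mathbf{E}^{-1}\mathcal{U}^\ast$, not $-\tfrac{1}{6}$; with the correct sign the final coefficient $6(-1)^{l+1}$ indeed comes out right.
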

\section{The Liouville correspondence for the dual DWW hierarchy}

The following dispersive water wave (DWW) system
\begin{eqnarray}\label{dww}
\qxeq{ q_t= \big( -q_x+2q r\big)_x,\\
r_t=\big( r_x+r^2+2q\big)_x,}
\end{eqnarray}
is an integrable physical system describing the propagation of shallow water waves \cite{kau, kup85, whi}. The bi-Hamiltonian formulation of system \eqref{dww} supports the tri-Hamiltonian dual structure and its dual counterpart is recognized as the dual dispersive water wave (dDWW) system, proposed in \cite{kloq3}, of the form
\ben\label{dDWW}
\eeq{\rho_t=\LC (\rho+v)\,u\RC_x,&\rho=v-v_x,\\
\gamma_t=\LC \gamma \,u+2v\RC_x,& \gamma=u+u_x.}
\een
The bi-Hamiltonian structure for the dDWW system \eqref{dDWW}
\begin{eqnarray*}
\begin{aligned}
\begin{pmatrix}
              \rho\\
              \gamma
              \end{pmatrix}_t=\mathcal{K}\,\delta \mathcal{H}_1=\mathcal{J}\,\delta \mathcal{H}_2,\qquad \delta \mathcal{H}_n=\left(\,\frac{\delta \mathcal{H}_n}{\delta \rho},\;\frac{\delta \mathcal{H}_n}{\delta \gamma}\,\right)^T,\quad n=1, 2,
\end{aligned}\end{eqnarray*}
is governed by the following pair of Hamiltonian operators \cite{kloq3}
\begin{eqnarray}
\begin{aligned}\label{haop-dDWW}
\mathcal{K}=\begin{pmatrix}
            \rho \partial_x+\partial_x \rho & \gamma \partial_x\\
             \partial_x\,\gamma & 2\partial_x\\
              \end{pmatrix},\qquad
\mathcal{J}=\begin{pmatrix}
        0 & \partial_x-\partial_x^2\\
             \partial_x+\partial_x^2& 0\\
              \end{pmatrix},
\end{aligned}
\end{eqnarray}
together with the associated Hamiltonian functionals
\begin{equation*}
\qxeq{\mathcal{H}_1=\mathcal{H}_1(\rho, \gamma)= \int  (u-u_x)\,v\;\mathrm{d}x \qquad \hbox{and} \qquad  \mathcal{H}_2=\mathcal{H}_2(\rho, \gamma)=\int  (u+u_x) \,u v\;\mathrm{d}x.}
\end{equation*}
The members in the positive direction of the dDWW integrable hierarchy
\begin{eqnarray}
\begin{aligned}\label{dDWW-n}
\begin{pmatrix}
              \rho\\
              \gamma
              \end{pmatrix}_t=\mathbf{M}_n=\mathcal{K}\,\delta \mathcal{H}_{n-1}(\rho, \gamma)=\mathcal{J}\,\delta \mathcal{H}_n(\rho, \gamma),\quad n=1, 2, \ldots,
\end{aligned}
\end{eqnarray}
are obtained by applying successively the recursion operator $\mathcal{R}=\mathcal{K}\,\mathcal{J}^{-1}$ to the seed system
\begin{eqnarray*}
\begin{aligned}
\begin{pmatrix}
              \rho\\
              \gamma
              \end{pmatrix}_t=\mathbf{M}_1=\begin{pmatrix}
              \rho\\
              \gamma
              \end{pmatrix}_x=\mathcal{J}\,\delta \mathcal{H}_1(\rho, \gamma).
\end{aligned}
\end{eqnarray*}
Observe that the Hamiltonian operator $\mathcal{K}$ in \eqref{haop-dDWW} admits the following Casimir functional
\begin{equation*}
\qxeq{\mathcal{H}_C(\rho, \gamma)=-\frac{1}{2} \int  \sqrt{4\rho-\gamma^2}\;\mathrm{d}x,}
\end{equation*}
with variational derivative
\begin{equation*}
\delta \mathcal{H}_C=\left(\,\frac{\delta \mathcal{H}_C}{\delta \rho},\;\frac{\delta \mathcal{H}_C}{\delta \gamma}\,\right)^T=\left(-\frac{1}{\sqrt{4\rho-\gamma^2}},\; \frac{\gamma}{2\sqrt{4\rho-\gamma^2}}\right)^T.
\end{equation*}
The functional $\mathcal{H}_C$ leads to the associated Casimir system
\begin{eqnarray*}
\begin{aligned}
\begin{pmatrix}
              \rho\\
              \gamma
              \end{pmatrix}_t=\mathbf{M}_{-1}=\mathcal{J}\,\delta \mathcal{H}_C,
\end{aligned}
\end{eqnarray*}
which is given explicitly by
\begin{equation}\label{dDWW--1}
 \rho_t=\frac{1}{2}(\partial_x-\partial_x^2)\frac{\gamma}{\sqrt{4\rho-\gamma^2}}, \quad \gamma_t=-(\partial_x+\partial_x^2)\frac{1}{\sqrt{4\rho-\gamma^2}},
\end{equation}
and serves as the first negative flow in the dDWW integrable hierarchy. Starting from the Casimir system \eqref{dDWW--1}, one (formally) can construct an infinite hierarchy of higher-order commuting bi-Hamiltonian systems and corresoponding Hamiltonian functionals $\{\mathcal{H}_{-n}\}=\{\mathcal{H}_{-n}(\rho, \gamma)\}$ in the negative direction:
\begin{eqnarray}
\begin{aligned}\label{dDWW--n}
\begin{pmatrix}
              \rho\\
              \gamma
              \end{pmatrix}_t=\mathbf{M}_{-n}=\mathcal{K}\,\delta \mathcal{H}_{-(n+1)}(\rho, \gamma)=\mathcal{J}\,\delta \mathcal{H}_{-n}(\rho, \gamma),\quad n=1, 2, \ldots,
\end{aligned}
\end{eqnarray}
with $\mathcal{H}_{-1}(\rho, \gamma)=\mathcal{H}_C(\rho, \gamma)$ and $\delta \mathcal{H}_{-n}(\rho, \gamma)=\big(\,\delta \mathcal{H}_{-n}/\delta \rho,\;\delta \mathcal{H}_{-n}/\delta \gamma\,\big)^T$.

Now, we define $\omega(\rho, \gamma)=\sqrt{4\rho-\gamma^2}$ and introduce the following coordinate transformation:
\begin{eqnarray}\label{liou-dDWW}
\eeq{y=\int^x\!\omega(t,\,\xi)\,\mathrm{d}\xi, & \tau=t,\\
Q(\tau, y)=\frac{1}{\omega(t, x)}\,\LC 1-\frac{\omega_x(t, x)}{\omega(t, x)}\RC, &  P(\tau, y)=-\frac{\gamma(t, x)}{\omega(t, x)}.}
\end{eqnarray}
In analogy with the 2CH and GX hierarchies, in this section we aim to investigate how the transformation \eqref{liou-dDWW} affect the underlying correspondence between the flows of the dDWW hierarchy and what we will call the {\it associated dDWW} (AdDWW) integrable hierarchy.

We first investigate the integrable structure of the AdDWW hierarchy. As a direct application of Theorem \ref{t2}, one can readily construct its pair of Hamiltonian operators, from the known pair of Hamiltonian operators $\mathcal{K}$ and $\mathcal{J}$ given by \eqref{haop-dDWW} of the dDWW system, through the transformation \eqref{liou-dDWW}.

\begin{theorem}\label{t4.1}
Under the coordinate transformation  \eqref{liou-dDWW}, the Hamiltonian pair $\mathcal{K}$ and $\mathcal{J}$ \eqref{haop-dDWW} admitted by the dDWW system \eqref{dDWW} is converted to a pair of  Hamiltonian operators
\begin{eqnarray}
\begin{aligned}\label{haop-AdDWW}
&\overline{\mathcal{K}}=2\begin{pmatrix}
            \mathcal{X}\big(P\mathcal{X}+\mathcal{O}\,P\,\partial_y^{-1}\,\big)\,\mathcal{O} & \mathcal{X}\big(P\mathcal{X}\,P\partial_y+\mathcal{O}\,(1+P\partial_y^{-1}P\partial_y)\big)\\
          \mathcal{Y}\,\mathcal{X}\,\mathcal{O}+(P_y\partial_y^{-1}+P)\,\mathcal{O}\,P\,\partial_y^{-1}\,\mathcal{O} & \mathcal{Y}\,\mathcal{X}\,P\partial_y+(P_y\partial_y^{-1}+P)\,\mathcal{O}\,(1+P\partial_y^{-1}P\partial_y)\\
              \end{pmatrix},\\
&\overline{\mathcal{J}}=2\begin{pmatrix}
  \mathcal{X}\, \mathcal{O} & \chi\,P\,\partial_y\,\\
            \partial_yP\partial_y^{-1}\, \mathcal{O} & \partial_y(1+P\partial_y^{-1}P\partial_y)\\
              \end{pmatrix},
\end{aligned}
\end{eqnarray}
where the operators $\mathcal{O}$, $\mathcal{X}$ and $\mathcal{Y}$ are defined by
\begin{equation}
\label{OXY}
\mathcal{O}=Q\partial_y-\partial_y^2,\quad \mathcal{X}=\partial_y+Q+Q_y\partial_y^{-1}, \quad  \mathcal{Y}=1+P^2+P_y\partial_y^{-1}P.
\end{equation}
\end{theorem}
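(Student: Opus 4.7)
The result is a direct application of Theorem \ref{t2} to the transformation \eqref{liou-dDWW}. Identifying $\mathbf{m}=(\rho,\gamma)^T$, $\mathbf{Q}=(Q,P)^T$, $\Lambda(\mathbf{m})=\omega=\sqrt{4\rho-\gamma^2}$, $F_1=\omega^{-1}(1-\omega_x/\omega)$, $F_2=-\gamma/\omega$, and $I(x,\mathbf{m})=\int^x\!\omega\,\mathrm{d}\xi$, it suffices to compute the $2\times 2$ matrix operator $\mathbf{T}$ prescribed by \eqref{T}, form its adjoint $\mathbf{T}^\ast$, and then assemble $\overline{\mathcal{K}}=\omega^{-1}\mathbf{T}\mathcal{K}\mathbf{T}^\ast$ and $\overline{\mathcal{J}}=\omega^{-1}\mathbf{T}\mathcal{J}\mathbf{T}^\ast$ using the Hamiltonian pair \eqref{haop-dDWW}. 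Compatibility, skew-adjointness, and the Jacobi identity for $\overline{\mathcal{K}},\overline{\mathcal{J}}$ are then automatic since Theorem \ref{t2} preserves Hamiltonian pairs under changes of variables.

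The first preparatory step is to record a small collection of operator identities, analogous to Lemmas \ref{l2.1} and \ref{l3.1}, that convert $x$-differentiation into $y$-differentiation. The basic relation $\partial_x=\omega\,\partial_y$ together with the definitions in \eqref{liou-dDWW} yield useful local identities such as $\omega_x/\omega=\omega_y$, $\omega^{-1}\omega_y=\omega^{-1}-Q$, $\gamma=-P\omega$, and $4\rho=\omega^2(1+P^2)$, from which one reads off how $\omega^{-1}(1-\partial_x^2)\omega^{-1}$, $\omega^{-1}(\partial_x\pm\partial_x^2)\omega^{-1}$, and similar conjugated combinations re-express themselves in $y$-variables. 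The natural composite operators $\mathcal{O}=Q\partial_y-\partial_y^2$, $\mathcal{X}=\partial_y+Q+Q_y\partial_y^{-1}$, and $\mathcal{Y}=1+P^2+P_y\partial_y^{-1}P$ in \eqref{OXY} should emerge precisely as the conjugates of $\partial_x^2-\partial_x$, $\partial_x-\partial_x^2$-type blocks dressed by $\omega^{\pm 1}$ and by $P$. Spelling out these building blocks up front keeps the later manipulation manageable.

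Next, I would compute $\mathbf{T}$ entry by entry from $(\mathbf{T})_{i,j}=\omega\,\FD{F_i,m_j}-D_xQ_i\,\FD{I,m_j}$. The local part $\omega\,\FD{F_i,m_j}$ is a differential operator in $x$ of order up to two (since $F_1$ contains $\omega_x$), obtained by differentiating $\omega^{-1}$, $\omega^{-2}\omega_x$, and $\gamma/\omega$ in $\rho$ and $\gamma$. The non-local part involves $\FD{I,\rho}=\partial_x^{-1}(2\omega^{-1})$ and $\FD{I,\gamma}=-\partial_x^{-1}(\gamma\omega^{-1})$ multiplied on the right by $D_xQ=\omega Q_y$ or $D_xP=\omega P_y$; these are the sole source of the $\partial_y^{-1}$ factors that ultimately appear in $\overline{\mathcal{K}}$ and $\overline{\mathcal{J}}$. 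The adjoint $\mathbf{T}^\ast$ is then formed by reversing every factor and sending $\partial_x\mapsto -\partial_x$, $\partial_x^{-1}\mapsto -\partial_x^{-1}$ in the usual way.

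Finally, carry out the matrix products $\omega^{-1}\mathbf{T}\mathcal{J}\mathbf{T}^\ast$ and $\omega^{-1}\mathbf{T}\mathcal{K}\mathbf{T}^\ast$. The $\mathcal{J}$ computation is the cleaner warm-up because its entries are $\partial_x\pm\partial_x^2$ with no multiplicative dressing; conjugating each scalar composition by the appropriate power of $\omega$ and applying the preparatory identities should produce the three building blocks $\mathcal{X}\mathcal{O}$, $\mathcal{X}P\partial_y$, $\partial_yP\partial_y^{-1}\mathcal{O}$, and $\partial_y(1+P\partial_y^{-1}P\partial_y)$ appearing in $\overline{\mathcal{J}}$. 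The $\mathcal{K}$ case is more elaborate because of the multiplication operators $\rho$ and $\gamma$ sitting inside its entries; here the terms $\mathcal{X}(P\mathcal{X}+\mathcal{O}\,P\partial_y^{-1})\mathcal{O}$, $\mathcal{Y}\mathcal{X}$, and $(P_y\partial_y^{-1}+P)\mathcal{O}(\cdot)$ of \eqref{haop-AdDWW} are the signature of how the non-local piece of $\mathbf{T}$ interacts with the multiplicative dressing, and recognizing them requires repeatedly pulling $P$ past $\partial_y$ via commutators. The \textbf{main obstacle} is precisely this algebraic bookkeeping: the intermediate expressions are lengthy, and repackaging them into the compact $\mathcal{O},\mathcal{X},\mathcal{Y}$ form demands identifying the right factorizations at each stage rather than any new conceptual ingredient.
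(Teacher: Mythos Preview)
Your proposal is correct and follows essentially the same route as the paper: establish the requisite operator identities relating $x$- and $y$-differentiation (the paper collects these as \eqref{t4.1e}), compute $\mathbf{T}$ and $\mathbf{T}^\ast$ explicitly (the paper gives them in closed form in \eqref{dww-T}), and then assemble the transformed operators via formula \eqref{rela-haop-n}. One small caution: in the paper's conventions the labels swap under the transformation, so that $\overline{\mathcal{J}}=\omega^{-1}\mathbf{T}\,\mathcal{K}\,\mathbf{T}^\ast$ and $\overline{\mathcal{K}}=\omega^{-1}\mathbf{T}\,\mathcal{J}\,\mathbf{T}^\ast$ (consistent with $\overline{\mathcal{R}}=\overline{\mathcal{K}}\,\overline{\mathcal{J}}^{-1}$ being conjugate to $\mathcal{J}\,\mathcal{K}^{-1}$ in Lemma~\ref{l4.1}); keep this in mind when matching your output to \eqref{haop-AdDWW}.
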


\begin{proof}
Let $b$, $\nu$ be arbitrary nonzero constants.  In a similar manner as Lemma \ref{l2.1}, we can verify the following operator identities
\begin{eqnarray}
\label{t4.1e}
\eeq{\omega^{-(1+\nu)}\,\left(1+b\,\partial_x^2\right)\,\omega^{\nu}=Q+b\,\partial_y,\\
\omega^{-2}\,(\partial_x-\partial_x^2)=Q\,\partial_y-\partial_y^2\,,\\
\omega^{-1}\,(\partial_x+\partial_x^2)\,\omega^{-1}=\partial_y Q\,+\partial_y^2\,,\\
\omega^{-2}\,(\rho\,\partial_x+\partial_x\,\rho)\,\omega^{-1}=\frac{1}{2}(\partial_y+P\,\partial_y\,P).}
\end{eqnarray}
The first and second identities in \eqref{t4.1e} imply that the matrix operators $\mathbf{T}$ and $\mathbf{T}^\ast$ in Theorem \ref{t2} are
\begin{eqnarray}
\begin{aligned}\label{dww-T}
\mathbf{T}=-\omega \begin{pmatrix}
            2\mathcal{X}\,\omega^{-2} & \mathcal{X}\,P\,\omega^{-1}\\
           2(P+P_y\partial_y^{-1})\omega^{-2} & \mathcal{Y}\,\omega^{-1}\\
              \end{pmatrix},\quad
             \mathbf{T}^\ast=-\begin{pmatrix}
 2\omega^{-1}\,\partial_y^{-1}\,\mathcal{O} & 2\omega^{-1}\,\partial_y^{-1}P\partial_y\\
            P\,\partial_y^{-1}\,\mathcal{O} & 1+P\partial_y^{-1}P\partial_y\\
              \end{pmatrix}.
\end{aligned}
\end{eqnarray}
Therefore, in view of  \eqref{t4.1e}, the formulas for the Hamiltonian operators \eqref{haop-AdDWW} follow directly from formula \eqref{rela-haop-n}.
\end{proof}

Now, with the Hamiltonian pair $\overline{\mathcal{K}}$ and $\overline{\mathcal{J}}$ given by \eqref{haop-AdDWW} in hand, the positive flows in the AdDWW hierarchy are generated by applying the recursion operator
\begin{eqnarray}\label{reop-AdDWW}
\overline{\mathcal{R}}=\overline{\mathcal{K}}\,\overline{\mathcal{J}}^{-1}=\begin{pmatrix}
 \mathcal{X}\,P & \mathcal{X}\,(Q-\partial_y)\\
           \mathcal{Y} & (P+P_y\,\partial_y^{-1})(Q-\partial_y)\\
              \end{pmatrix}
\end{eqnarray}
to the seed symmetry $\overline{\mathbf{M}}_1=(Q_y,\,P_y)^T$ successively:
\begin{equation}\label{AdDWW-n+1}
\begin{pmatrix}
             Q\\
             P
              \end{pmatrix}_\tau=\overline{\mathbf{M}}_n=\overline{\mathcal{R}}^{n-1}\overline{\mathbf{M}}_1,\quad n=1, 2, \ldots,
\end{equation}
whereas the negative flows are
\begin{equation}\label{AdDWW--n}
\overline{\mathcal{R}}^n\begin{pmatrix}
             Q\\
             P
              \end{pmatrix}_\tau=\overline{\mathbf{M}}_0=\begin{pmatrix}
             0\\
             0
              \end{pmatrix}, \quad n=1, 2, \ldots.
\end{equation}


\begin{lemma}\label{l4.1}  Let  $\mathcal{K}$,  $\mathcal{J}$, be the compatible Hamiltonian operators \eqref{haop-dDWW} for the dDWW hierarchy and $\overline{\mathcal{R}}$ the recursion operator \eqref{reop-AdDWW} admitted by the AdDWW hierarchy.  Let
\begin{eqnarray*}
\begin{aligned}
\mathbf{C}=\frac{1}{2}\begin{pmatrix}
          0\; & -\omega^2(Q-\partial_y)\\
       -2\omega & 0\\
              \end{pmatrix}.
\end{aligned}
\end{eqnarray*}
Then, subject to the transformation \eqref{liou-dDWW}, we have the operator identity
\begin{equation}\label{eq-l4.1}
\mathbf{C}^{-1}\left( \mathcal{J}\,\mathcal{K}^{-1}\right)^n \mathbf{C}= \overline{\mathcal{R}}^n \qquad \hbox{\it for all} \qquad 0 < n \in \Z.
\end{equation}

\end{lemma}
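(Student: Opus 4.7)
The plan is to imitate the inductive scheme of Lemma 2.2 verbatim. Once the $n=1$ case is settled, an immediate insertion of $\mathbf{C}\mathbf{C}^{-1}$ in the middle gives
\begin{equation*}
\mathbf{C}^{-1}(\mathcal{J}\mathcal{K}^{-1})^{n+1}\mathbf{C}=\mathbf{C}^{-1}(\mathcal{J}\mathcal{K}^{-1})^{n}\mathbf{C}\cdot\mathbf{C}^{-1}\mathcal{J}\mathcal{K}^{-1}\mathbf{C}=\overline{\mathcal{R}}^{n}\cdot\overline{\mathcal{R}}=\overline{\mathcal{R}}^{n+1},
\end{equation*}
so the whole task reduces to the base case $\mathbf{C}^{-1}\mathcal{J}\mathcal{K}^{-1}\mathbf{C}=\overline{\mathcal{R}}$, or equivalently $\mathcal{J}\mathcal{K}^{-1}\mathbf{C}=\mathbf{C}\,\overline{\mathcal{R}}$.

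For the base case I would proceed by direct calculation, following the $n=1$ computation in Lemma 2.2. First I would apply $\mathcal{K}^{-1}$ column by column to $\mathbf{C}$ by solving the coupled $2\times 2$ system $\mathcal{K}\mathbf{f}=\mathbf{c}$, using the explicit form of $\mathcal{K}$ in \eqref{haop-dDWW}; the solution can be written compactly with $\partial_x^{-1}$ together with multiplication by $\rho$ and $\gamma$. Applying $\mathcal{J}$ to the result then produces a matrix of operators still written in the $x$-coordinate. The four identities in \eqref{t4.1e} established in the proof of Theorem 4.1 are tailor-made to convert the combinations $\omega^{\pm k}\partial_x^{\pm j}\omega^{\mp k}$ appearing in those entries into the $y$-variable building blocks $\mathcal{O}$, $\mathcal{X}$, $\mathcal{Y}$ from \eqref{OXY} that appear in $\overline{\mathcal{R}}$ of \eqref{reop-AdDWW}. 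A parallel direct matrix multiplication of $\mathbf{C}\,\overline{\mathcal{R}}$ produces an expression in the same building blocks, and entry-by-entry comparison closes the case. A more conceptual shortcut via Theorem 4.1, noting that the factorizations $\overline{\mathcal{K}}=\Lambda^{-1}\mathbf{T}\mathcal{J}\mathbf{T}^{\ast}$ and $\overline{\mathcal{J}}=\Lambda^{-1}\mathbf{T}\mathcal{K}\mathbf{T}^{\ast}$ yield $\overline{\mathcal{R}}=\Lambda^{-1}\mathbf{T}\mathcal{J}\mathcal{K}^{-1}\mathbf{T}^{-1}\Lambda$ after the $\mathbf{T}^{\ast}$ factors cancel, would require matching $\mathbf{C}$ with $\mathbf{T}^{-1}\Lambda$; but as already happens in the 2CH case, where $\mathbf{A}\neq \mathbf{T}^{-1}\Lambda$ on the nose, this identification is not clean and offers no genuine simplification.

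The main obstacle is therefore the base-case bookkeeping. Inverting $\mathcal{K}$ introduces $\partial_x^{-1}$ operators that do not commute with the multiplicative factors $\rho$, $\gamma$, $\omega$, so preserving the composition order while substituting $\partial_x=\omega\partial_y$ and invoking \eqref{t4.1e} is delicate. The presence of the differential operator $Q-\partial_y$ in the $(1,2)$-entry of $\mathbf{C}$, in contrast to the purely multiplicative matrix $\mathbf{A}$ of the 2CH case, makes the $n=1$ computation noticeably heavier than its 2CH counterpart; but the structural ingredients are entirely in place, and once the base case is verified the induction closes instantly.
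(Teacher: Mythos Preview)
Your proposal is correct and matches the paper's approach: the paper does not spell out a proof of Lemma~\ref{l4.1}, but the parallel structure with Lemma~\ref{l2.2} makes clear that the intended argument is exactly the induction you describe, with the base case handled by direct computation using the operator identities \eqref{t4.1e}. Your observation that the conceptual shortcut via $\overline{\mathcal{R}}=\Lambda^{-1}\mathbf{T}\mathcal{J}\mathcal{K}^{-1}\mathbf{T}^{-1}\Lambda$ does not cleanly identify $\mathbf{C}$ with $\mathbf{T}^{-1}\Lambda$ is also consistent with the paper's Remark following Lemma~\ref{l2.2}, which notes the analogous discrepancy in the 2CH case.
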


Employing this Lemma and an induction procedure, we are able to obtain the following result on the correspondence between the dDWW and AdDWW hierarchies. Hereafter, we denote the $n$-th system in the positive and negative directions of the dDWW hierarchy by  (dDWW)$_n$ and (dDWW)$_{-n}$, respectively, while the $n$-th positive and negative flows in the AdDWW hierarchy by (AdDWW)$_{n}$ and (AdDWW)$_{-n}$, respectively.

\begin{theorem}\label{t4.2}
Under the transformation \eqref{liou-dDWW}, for each integer $n \in  \mathbb{Z}$, (dDWW)$_n$ system is related to the  (AdDWW)$_{-(n-1)}$ system.
\end{theorem}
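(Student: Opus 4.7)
The plan is to mimic the proofs of Theorem \ref{t1} and Theorem \ref{t3.1}: first establish a single operator identity relating $(Q_\tau,P_\tau)^T$ to $(\rho_t,\gamma_t)^T$ under the Liouville transformation \eqref{liou-dDWW}, and then reduce the statement for arbitrary $n\in\mathbb{Z}$ via Lemma \ref{l4.1} together with the Magri recursions \eqref{dDWW-n}, \eqref{dDWW--n} and \eqref{AdDWW-n+1}, \eqref{AdDWW--n} for the two hierarchies.

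The first step is to compute the $t$-derivatives of the new variables. By the chain rule applied to \eqref{liou-dDWW},
\begin{equation*}
Q_t=Q_\tau+Q_y\,\partial_y^{-1}\,\omega^{-1}\omega_t,\qquad P_t=P_\tau+P_y\,\partial_y^{-1}\,\omega^{-1}\omega_t,
\end{equation*}
while $\omega^2=4\rho-\gamma^2$ gives $\omega_t=(2\rho_t-\gamma\gamma_t)/\omega$. On the other hand, direct differentiation in $t$ of the formulas for $Q$ and $P$ in \eqref{liou-dDWW}, combined with the operator identities \eqref{t4.1e}, rewrites $Q_t$ and $P_t$ purely in terms of $\rho_t$ and $\gamma_t$. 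Equating the two expressions should yield the key identity
\begin{equation}\label{planeqt42}
\begin{pmatrix} Q \\ P \end{pmatrix}_\tau=\overline{\mathcal{R}}\,\mathbf{C}^{-1}\begin{pmatrix} \rho \\ \gamma \end{pmatrix}_t
\end{equation}
(up to a sign determined by the calculation), in direct analogy with \eqref{t1QP}, where $\mathbf{C}$ is the matrix operator defined in Lemma \ref{l4.1}.

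With \eqref{planeqt42} in hand, the correspondence follows in both directions. For $n\ge 2$, write $\mathbf{M}_n=(\mathcal{K}\mathcal{J}^{-1})^{n-1}\mathbf{M}_1$, substitute into \eqref{planeqt42}, apply $\overline{\mathcal{R}}^{n-1}$ to both sides, and use the conjugation identity \eqref{eq-l4.1} of Lemma \ref{l4.1} to recast the right-hand side as $\mathbf{C}^{-1}\mathcal{J}\mathcal{K}^{-1}\mathbf{M}_1$. Since $\mathcal{K}^{-1}\mathbf{M}_1=(1,0)^T$ lies in the kernel of $\mathcal{J}$, this vanishes, which is precisely (AdDWW)$_{-(n-1)}$. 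For $n=-k\le -1$, one first verifies via \eqref{t4.1e} that the Casimir flow \eqref{dDWW--1} admits the factorization $\mathbf{M}_{-1}=\mathbf{C}(Q_y,P_y)^T$. The higher negative flows then rewrite as $\mathbf{M}_{-k}=(\mathcal{J}\mathcal{K}^{-1})^{k-1}\mathbf{C}(Q_y,P_y)^T$, and substituting into \eqref{planeqt42}, together with a further appeal to \eqref{eq-l4.1}, collapses the right-hand side to $\overline{\mathcal{R}}^{k}(Q_y,P_y)^T=\overline{\mathbf{M}}_{k+1}$, which is exactly (AdDWW)$_{-(n-1)}$ at $n=-k$. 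The edge cases $n=0,1$ follow directly from \eqref{planeqt42} after tracking the constants of integration introduced by $\partial_y^{-1}$ acting on the zero vector, which furnish the seed symmetry $(Q_y,P_y)^T$ of the AdDWW hierarchy.

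The main obstacle will be the derivation of the key identity \eqref{planeqt42}, together with the factorization $\mathbf{M}_{-1}=\mathbf{C}(Q_y,P_y)^T$ of the Casimir flow. Because $\omega=\sqrt{4\rho-\gamma^2}$ couples the two dependent variables nonlinearly, the chain-rule calculation intertwines $\rho_t$ and $\gamma_t$, and the cancellations needed to assemble precisely the matrix operator $\overline{\mathcal{R}}\mathbf{C}^{-1}$ on the right-hand side will require careful, repeated application of all four operator identities in \eqref{t4.1e}, in particular of the first identity for several choices of the free parameters $b$ and $\nu$. Once \eqref{planeqt42} is secured, the remaining work reduces to a straightforward induction on the flow index, entirely parallel to the arguments in Theorems \ref{t1} and \ref{t3.1}.
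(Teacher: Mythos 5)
Your proposal is correct and takes essentially the same approach the paper intends: the paper omits the proof of this theorem, noting only that it ``is based mainly on the operator decomposition identity \eqref{eq-l4.1}'', and your argument is precisely the dDWW analogue of the detailed proof of Theorem \ref{t1} --- the chain-rule identity $\left(Q,\,P\right)_\tau^T=\overline{\mathcal{R}}\,\mathbf{C}^{-1}\left(\rho,\,\gamma\right)_t^T$ followed by conjugation through Lemma \ref{l4.1} and the kernel computation $\mathcal{J}\mathcal{K}^{-1}(\rho_x,\gamma_x)^T=\mathcal{J}(1,0)^T=\mathbf{0}$. For the record, the sign in your key identity is indeed $+$, since $\overline{\mathcal{R}}=\omega^{-1}\mathbf{T}\,\mathbf{C}$ (the identity the paper invokes just before \eqref{t4.3-eq1}), and your factorization $\mathbf{M}_{-1}=\mathbf{C}\,(Q_y,\,P_y)^T$ of the Casimir flow \eqref{dDWW--1} checks out directly from the second and third identities in \eqref{t4.1e}.
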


The proof of this theorem is based mainly on the operator decomposition identity \eqref{eq-l4.1} and is omitted for brevity.


We now in a position to establish  the correspondence between the Hamiltonian functionals of the dDWW and AdDWW hierarchies. In particular, for the dDWW hierarchy, the bi-Hamiltonian structure \eqref{dDWW-n},  \eqref{dDWW--n} produce the bi-infinite sequence of functionals $\{\mathcal{H}_n\}$ by \begin{equation}\label{cl-hie-dDWW}
\mathcal{K}\,\delta \mathcal{H}_{n}=\mathcal{J}\,\delta \mathcal{H}_{n+1},\qquad n\in  \mathbb{Z}.
\end{equation}
On the other hand, the recursive formula
\begin{equation}\label{cl-hie-AdDWW}
\overline{\mathcal{K}}\delta \bar{\mathcal{H}}_{n}=\overline{\mathcal{J}} \delta \bar{\mathcal{H}}_{n+1},\qquad n\in \mathbb{Z},
\end{equation}
with the Hamiltonian pair \eqref{haop-AdDWW} of the AdDWW hierarchy, gives rise to an infinite sequence of Hamiltonian functionals $\{\bar{\mathcal{H}}_n\}$ admitted by the AdDWW flows \eqref{AdDWW-n+1} and \eqref{AdDWW--n}.

The formula for the correspondence between the variational derivatives $\delta \mathcal{H}_{n}(\rho, \gamma)$ and $\delta  \bar{\mathcal{H}}_{n}(Q, P)$ can be proved by a straightforward induction.

\begin{lemma}\label{l4.2}
Let $\{\mathcal{H}_n\}$ and $\{\bar{\mathcal{H}}_n\}$  be the hierarchies of Hamiltonian functionals determined by the recursive formulae \eqref{cl-hie-dDWW} and \eqref{cl-hie-AdDWW}, respectively. Then, for each $n\in\mathbb{Z}$, their respective variational derivatives  satisfy the following identiy
\begin{equation*}\label{eq-l4.2}
\delta \mathcal{H}_{n}(\rho, \gamma)=\mathcal{K}^{-1}\,\mathbf{C}\,\bar{\mathcal{J}}\,\delta  \bar{\mathcal{H}}_{-(n+1)}(Q, P).
\end{equation*}
\end{lemma}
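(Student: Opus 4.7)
The plan is to mirror the inductive proofs of Lemmas \ref{l2.4} and \ref{l3.3}, adapted to the dDWW--AdDWW setting. The algebraic engine is an immediate corollary of Lemma \ref{l4.1} taken at $n=1$: from $\mathbf{C}^{-1}\mathcal{J}\mathcal{K}^{-1}\mathbf{C}=\overline{\mathcal{R}}=\overline{\mathcal{K}}\,\overline{\mathcal{J}}^{-1}$, postmultiplication by $\overline{\mathcal{J}}$ yields the central intertwining identity
\begin{equation*}
\mathcal{K}^{-1}\mathbf{C}\,\overline{\mathcal{J}}=\mathcal{J}^{-1}\mathbf{C}\,\overline{\mathcal{K}}.
\end{equation*}
This is what allows the two Hamiltonian pair recursions \eqref{cl-hie-dDWW} and \eqref{cl-hie-AdDWW} to speak to one another.

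Assuming the identity of the lemma at some index $n$, the induction step forward is a short chain that uses \eqref{cl-hie-dDWW}, the inductive hypothesis, \eqref{cl-hie-AdDWW}, and then the intertwining identity, in that order:
\begin{equation*}
\delta\mathcal{H}_{n+1}=\mathcal{J}^{-1}\mathcal{K}\,\delta\mathcal{H}_n=\mathcal{J}^{-1}\mathbf{C}\,\overline{\mathcal{J}}\,\delta\bar{\mathcal{H}}_{-(n+1)}=\mathcal{J}^{-1}\mathbf{C}\,\overline{\mathcal{K}}\,\delta\bar{\mathcal{H}}_{-(n+2)}=\mathcal{K}^{-1}\mathbf{C}\,\overline{\mathcal{J}}\,\delta\bar{\mathcal{H}}_{-(n+2)}.
\end{equation*}
Backward propagation via $\mathcal{K}^{-1}\mathcal{J}$ is entirely symmetric, so the whole statement follows from any single verified base case.

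For the base case I would choose $n=-2$, which sidesteps the kernel ambiguity of $\mathcal{K}^{-1}$ at the Casimir level. Since $\mathcal{H}_{-1}=\mathcal{H}_C$ and $\overline{\mathcal{J}}\,\delta\bar{\mathcal{H}}_1=\overline{\mathbf{M}}_1=(Q_y,P_y)^T$ by \eqref{AdDWW-n+1}, applying $\mathcal{K}$ to both sides and using $\mathcal{K}\,\delta\mathcal{H}_{-2}=\mathcal{J}\,\delta\mathcal{H}_C$ reduces the claim to the concrete matrix equality
\begin{equation*}
\mathcal{J}\,\delta\mathcal{H}_C=\mathbf{C}\begin{pmatrix}Q_y\\P_y\end{pmatrix},
\end{equation*}
which I would verify componentwise using $\delta\mathcal{H}_C=(-\omega^{-1},\,\gamma/(2\omega))^T$, the relation $P=-\gamma/\omega$ from \eqref{liou-dDWW}, and the operator identities \eqref{t4.1e}. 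Specifically, the first component of the left-hand side becomes $(\partial_x-\partial_x^2)(-P/2)=\tfrac12\omega^2(P_{yy}-QP_y)$ and the second becomes $(\partial_x+\partial_x^2)(-\omega^{-1})=-\omega Q_y$, matching $\mathbf{C}(Q_y,P_y)^T$ directly. The main obstacle is precisely this base-case verification, since it requires juggling the transformation \eqref{liou-dDWW} with the operator identities \eqref{t4.1e}; once it is in place, the two-sided induction propagates the identity to all $n\in\mathbb{Z}$, with the kernel subtlety at $n=-1$ naturally absorbed into the Casimir representative $\mathcal{H}_{-1}=\mathcal{H}_C$.
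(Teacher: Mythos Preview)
Your proposal is correct and follows precisely the inductive template the paper uses for Lemmas \ref{l2.4} and \ref{l3.3}; since the paper omits the proof of Lemma \ref{l4.2} entirely (merely stating it ``can be proved by a straightforward induction''), your argument is exactly what is intended. Your base-case computation $\mathcal{J}\,\delta\mathcal{H}_C=\mathbf{C}(Q_y,P_y)^T$ via the operator identities \eqref{t4.1e} checks out componentwise, and the intertwining relation $\mathcal{K}^{-1}\mathbf{C}\,\overline{\mathcal{J}}=\mathcal{J}^{-1}\mathbf{C}\,\overline{\mathcal{K}}$ drawn from \eqref{eq-l4.1} with $n=1$ correctly drives both directions of the induction.
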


In addition, a formula for the change of the variational derivative under  the transformation \eqref{liou-dDWW} is given in the following lemma, which is also a direct consequence of Lemma \ref{l2.3}.

\begin{lemma}\label{l4.3}
Let $\big(\rho(t, x), \gamma(t, x)\big)$ and $\big(Q(\tau, y),\,P(\tau, y)\big)$ be related by the Liouville transformation \eqref{liou-dDWW}. If $\mathcal{H}(\rho, \gamma)=\bar{\mathcal{H}}(Q, P)$, then
\begin{equation}\label{eq-l4.3}
\delta \mathcal{H}(\rho, \gamma)=\mathbf{T}^\ast\,\delta \bar{\mathcal{H}}(Q, P),
\end{equation}
where $\mathbf{T}^\ast$ is the formal adjoint of operators $\mathbf{T}$ given in \eqref{dww-T}.
\end{lemma}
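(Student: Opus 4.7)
The proof plan is to apply Lemma \ref{l2.3} directly, recognizing the Liouville transformation \eqref{liou-dDWW} as a particular instance of the general coordinate transformation \eqref{tran} with $n=2$ components. Specifically, I would set $m_1=\rho$, $m_2=\gamma$, $Q_1=Q$, $Q_2=P$, together with
\begin{equation*}
\Lambda(\rho,\gamma)=\omega=\sqrt{4\rho-\gamma^2},\qquad I(x,\rho,\gamma)=\int^x\!\omega(t,\xi)\,\mathrm{d}\xi,
\end{equation*}
and $F_1=\omega^{-1}(1-\omega_x/\omega)$, $F_2=-\gamma/\omega$, which by \eqref{liou-dDWW} are exactly $Q$ and $P$ respectively. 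Under these identifications, Lemma \ref{l2.3} immediately yields the transformation rule \eqref{eq-l4.3} for variational derivatives, so the only remaining task is to confirm that the matrix operator $\mathbf{T}$ constructed via \eqref{T} in the present setting coincides with the one displayed in \eqref{dww-T}.

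This verification proceeds in two steps. First, I would compute the required Fr\'echet derivatives: $\FD{I,\rho}=\partial_x^{-1}\circ(2\omega^{-1})$, $\FD{I,\gamma}=-\partial_x^{-1}\circ(\gamma\omega^{-1})$, and the four quantities $\FD{F_i,m_j}$ for $i,j=1,2$, being careful to account for the total $x$-derivatives entering $F_1$ through the $\omega_x$ factor. Second, I would assemble the entries $(\mathbf{T})_{i,j}=\omega\,\FD{F_i,m_j}-D_xQ_i\,\FD{I,m_j}$ and simplify using the operator identities collected in \eqref{t4.1e}, together with the defining relations $Q=\omega^{-1}(1-\omega_x/\omega)$ and $P=-\gamma/\omega$. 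This rearrangement is precisely the one already performed inside the proof of Theorem \ref{t4.1} when passing from the abstract formula \eqref{rela-haop-n} to the explicit operators \eqref{haop-AdDWW}, so no genuinely new manipulation is needed.

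I expect the only delicate point to be the bookkeeping in computing $\FD{F_1,\rho}$ and $\FD{F_1,\gamma}$, since $F_1$ involves both $\omega^{-1}$ and $\omega^{-2}\omega_x$, and hence produces a pointwise multiplication term and a $D_x$-term when differentiated; these must then be combined with the $D_xQ\cdot \FD{I,\cdot}$ correction in \eqref{T}. However, the cancellations and recombinations are controlled entirely by \eqref{t4.1e}, which is why the authors characterize the result as a direct consequence of Lemma \ref{l2.3}. Once the entries of $\mathbf{T}$ are matched with those in \eqref{dww-T}, taking the formal $L^2$ adjoint componentwise produces the second matrix in \eqref{dww-T}, and the lemma follows.
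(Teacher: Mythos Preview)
Your proposal is correct and matches the paper's approach: the paper simply states that Lemma~\ref{l4.3} is a direct consequence of Lemma~\ref{l2.3}, and the explicit form of $\mathbf{T}$ in \eqref{dww-T} was already computed within the proof of Theorem~\ref{t4.1}. Your elaboration of the Fr\'echet-derivative bookkeeping is sound but goes beyond what the paper actually records.
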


Finally, referring back to the form of the Hamiltonian operators $\overline{\mathcal{J}}$, using the identity
\begin{eqnarray*}
\bar{\mathbb R}=\Delta^{-1}{\bf T}{\bf C},
\end{eqnarray*}
and \eqref{eq-l4.1} with $n=1$, one has
\begin{equation}\label{t4.3-eq1}
\mathbf{T}^\ast=\mathcal{J}^{-1}\,\mathbf{C}\,\overline{\mathcal{J}}.
\end{equation}
It follows that
\begin{eqnarray*}
\delta \mathcal{H}_{n}(\rho, \gamma)=\mathcal{K}^{-1}\,\mathbf{C}\,\overline{\mathcal{J}}\,\delta  \bar{\mathcal{H}}_{-(n+1)}(Q, P)=\mathcal{J}^{-1}\,\mathbf{C}\,\overline{\mathcal{J}}\,\delta  \bar{\mathcal{H}}_{-n}(Q, P)=\mathbf{T}^\ast\,\delta  \bar{\mathcal{H}}_{-n}(Q, P).
\end{eqnarray*}
Hence, based on the hypothesis of Lemma  \ref{l4.3}, we define the functional
\begin{equation*}
\mathcal{G}_{n}(Q, P)\equiv \mathcal{H}_{n}(\rho, \gamma).
\end{equation*}
Then, together with \eqref{eq-l4.3} and \eqref{t4.3-eq1}, we have
\begin{equation*}
\mathbf{T}^\ast\,\delta \mathcal{G}_{n}(Q, P)=\delta \mathcal{H}_{n}(\rho, \gamma)=\mathbf{T}^\ast\,\delta  \bar{\mathcal{H}}_{-n}(Q, P),
\end{equation*}
which yields
\begin{equation*}
\delta \mathcal{G}_{n}(Q, P)=\delta  \bar{\mathcal{H}}_{-n}(Q, P).
\end{equation*}
Then
\begin{equation*}
\mathcal{H}_{n}(\rho, \gamma)=\bar{\mathcal{H}}_{-n}(Q, P).
\end{equation*}
Consequently, we have now proved the following main theorem on the Hamiltonian functionals of two hierarchies.

\begin{theorem}\label{t4.3}
Each Hamiltonian functional $\mathcal{H}_n(\rho, \gamma)$ of the dDWW hierarchy yields a Hamiltonian functionals of the AdDWW hierarchy, under the transformation \eqref{liou-dDWW},  according to the following identity
\begin{equation*}
\bar{\mathcal{H}}_{-n}(Q, P)=\mathcal{H}_{n}(\rho, \gamma), \qquad n\in \mathbb{Z}.
\end{equation*}
\end{theorem}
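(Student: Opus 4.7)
The plan is to reduce the functional identity to an identity of variational derivatives, and then to collapse that to an operator identity between $\mathbf{T}^{\ast}$ and $\mathcal{J}^{-1}\mathbf{C}\bar{\mathcal{J}}$ that is already implicit in Lemmas \ref{l4.1}--\ref{l4.3}. Once this operator identity is verified, the theorem will follow by comparing two different expressions for $\delta \mathcal{H}_n(\rho,\gamma)$ and invoking the (formal) injectivity of $\mathbf{T}^{\ast}$.

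Concretely, I would first introduce the auxiliary functional $\mathcal{G}_n(Q,P) \equiv \mathcal{H}_n(\rho,\gamma)$, viewed through the Liouville transformation \eqref{liou-dDWW}. Lemma \ref{l4.3} immediately yields $\delta\mathcal{H}_n(\rho,\gamma) = \mathbf{T}^{\ast}\,\delta\mathcal{G}_n(Q,P)$. Next, I would apply Lemma \ref{l4.2}, which expresses $\delta\mathcal{H}_n(\rho,\gamma)$ in terms of $\delta\bar{\mathcal{H}}_{-(n+1)}(Q,P)$, and use the AdDWW recursion \eqref{cl-hie-AdDWW} to replace $\bar{\mathcal{J}}\,\delta\bar{\mathcal{H}}_{-(n+1)}$ by $\bar{\mathcal{K}}\,\delta\bar{\mathcal{H}}_{-n}$. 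Lemma \ref{l4.1} at $n=1$ gives $\mathbf{C}^{-1}\mathcal{J}\mathcal{K}^{-1}\mathbf{C} = \overline{\mathcal{R}} = \overline{\mathcal{K}}\,\overline{\mathcal{J}}^{-1}$, which rearranges to the factorization $\mathcal{K}^{-1}\mathbf{C}\overline{\mathcal{K}} = \mathcal{J}^{-1}\mathbf{C}\overline{\mathcal{J}}$; plugging this in produces
\[
\delta\mathcal{H}_n(\rho,\gamma) \;=\; \mathcal{J}^{-1}\mathbf{C}\overline{\mathcal{J}}\,\delta\bar{\mathcal{H}}_{-n}(Q,P).
\]

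The crux is then the operator identity $\mathbf{T}^{\ast} = \mathcal{J}^{-1}\mathbf{C}\overline{\mathcal{J}}$, which I would verify by direct computation from the explicit formulas for $\mathbf{T}^{\ast}$ in \eqref{dww-T}, for $\mathcal{J}$ in \eqref{haop-dDWW}, for $\overline{\mathcal{J}}$ in \eqref{haop-AdDWW}, and for $\mathbf{C}$, reducing the product through the conjugation identities \eqref{t4.1e} that relate $\partial_x$ and $\partial_y$ via $\omega = \sqrt{4\rho - \gamma^2}$ and via $Q,P$ given in \eqref{liou-dDWW}. Once in place, combining gives $\mathbf{T}^{\ast}\,\delta\mathcal{G}_n(Q,P) = \mathbf{T}^{\ast}\,\delta\bar{\mathcal{H}}_{-n}(Q,P)$; since $\mathbf{T}^{\ast}$ is (formally) injective on variational gradients, I may cancel $\mathbf{T}^{\ast}$ to conclude $\delta\mathcal{G}_n = \delta\bar{\mathcal{H}}_{-n}$, whence $\mathcal{H}_n(\rho,\gamma) = \bar{\mathcal{H}}_{-n}(Q,P)$ modulo an additive constant which may be absorbed into the normalization of the functionals.

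The main obstacle will be the verification of $\mathbf{T}^{\ast} = \mathcal{J}^{-1}\mathbf{C}\overline{\mathcal{J}}$: the operator $\overline{\mathcal{J}}$ contains the nonlocal blocks involving $P\partial_y^{-1}P\partial_y$ and $\partial_y P\partial_y^{-1}\mathcal{O}$, so one must carefully track how conjugation by $\omega$ moves $\partial_x$ across $\omega$-factors (using the chain rule $\partial_x = \omega\,\partial_y$ implicit in \eqref{t4.1e}) and how the resulting nonlocal tails cancel down to the compact form of $\mathbf{T}^{\ast}$ displayed in \eqref{dww-T}. A secondary subtlety is that the bidirectional induction behind Lemma \ref{l4.2} must be anchored at the Casimir $\mathcal{H}_{-1}$ and then stepped both forward and backward in $n$, which is exactly what ensures that the final correspondence extends to all $n\in\mathbb{Z}$.
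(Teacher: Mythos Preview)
Your overall strategy is exactly the one the paper uses: pass through $\mathcal{G}_n(Q,P)\equiv\mathcal{H}_n(\rho,\gamma)$, invoke Lemma~\ref{l4.3} to get $\delta\mathcal{H}_n=\mathbf{T}^\ast\,\delta\mathcal{G}_n$, combine Lemma~\ref{l4.2} with Lemma~\ref{l4.1} (at $n=1$) and the AdDWW recursion to obtain $\delta\mathcal{H}_n=\mathcal{J}^{-1}\mathbf{C}\,\overline{\mathcal{J}}\,\delta\bar{\mathcal{H}}_{-n}$, verify the operator identity $\mathbf{T}^\ast=\mathcal{J}^{-1}\mathbf{C}\,\overline{\mathcal{J}}$, and cancel $\mathbf{T}^\ast$.

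However, two of your intermediate algebraic steps are written incorrectly and only land on the right conclusion because the errors cancel. First, the recursion \eqref{cl-hie-AdDWW} reads $\overline{\mathcal{K}}\,\delta\bar{\mathcal{H}}_m=\overline{\mathcal{J}}\,\delta\bar{\mathcal{H}}_{m+1}$; with $m=-(n+1)$ this gives $\overline{\mathcal{K}}\,\delta\bar{\mathcal{H}}_{-(n+1)}=\overline{\mathcal{J}}\,\delta\bar{\mathcal{H}}_{-n}$, \emph{not} $\overline{\mathcal{J}}\,\delta\bar{\mathcal{H}}_{-(n+1)}=\overline{\mathcal{K}}\,\delta\bar{\mathcal{H}}_{-n}$ as you wrote. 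Second, rearranging $\mathbf{C}^{-1}\mathcal{J}\mathcal{K}^{-1}\mathbf{C}=\overline{\mathcal{K}}\,\overline{\mathcal{J}}^{-1}$ yields $\mathcal{K}^{-1}\mathbf{C}\,\overline{\mathcal{J}}=\mathcal{J}^{-1}\mathbf{C}\,\overline{\mathcal{K}}$, \emph{not} $\mathcal{K}^{-1}\mathbf{C}\,\overline{\mathcal{K}}=\mathcal{J}^{-1}\mathbf{C}\,\overline{\mathcal{J}}$. The correct chain is therefore
\[
\delta\mathcal{H}_n \;=\; \mathcal{K}^{-1}\mathbf{C}\,\overline{\mathcal{J}}\,\delta\bar{\mathcal{H}}_{-(n+1)}
\;=\; \mathcal{J}^{-1}\mathbf{C}\,\overline{\mathcal{K}}\,\delta\bar{\mathcal{H}}_{-(n+1)}
\;=\; \mathcal{J}^{-1}\mathbf{C}\,\overline{\mathcal{J}}\,\delta\bar{\mathcal{H}}_{-n},
\]
which is precisely how the paper proceeds. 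With these two slips fixed, your argument coincides with the paper's; the direct verification of $\mathbf{T}^\ast=\mathcal{J}^{-1}\mathbf{C}\,\overline{\mathcal{J}}$ that you propose is equivalent to (and can be shortened by) the paper's observation that $\overline{\mathcal{R}}=\omega^{-1}\mathbf{T}\,\mathbf{C}$ together with \eqref{eq-l4.1} at $n=1$.
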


\noindent {\bf Acknowledgements.}
 Kang's research was supported by NSFC under Grant 11631007 and Grant 11471260. Liu's research was supported in part by NSFC under Grant 11722111 and Grant 11631007.  Qu's research was supported by NSFC under Grant 11631007 and Grant 11471174.

\vskip 1.5cm

\end{document}